\lstdefinelanguage{JavaScript}{%
  keywords={%
    attributes, class, classend, do, empty, endif, endwhile, fail, function,
    functionend, if, implements, in, inherit, inout, not, of, operations, out,
    return, then, types, while, use, else, switch, case, break, default, for,
    var, with, new, eval
  },
  keywordstyle=\color{black}\bfseries,
  ndkeywords={%
    use, strict
  },
  ndkeywordstyle=\color{black}\bfseries,
  identifierstyle=\color{black},
  sensitive=true,
  comment=[l]{//},
  morecomment = [s]{/*}{*/},
  morecomment = [s][\color{gray}]{/**}{*/},
  commentstyle=\color{gray},
  stringstyle=\color{black},
  basicstyle=\itshape,
}
\newcommand{\TJS}{\textsf{TreatJS}}
\newtheorem{definition}{Definition}
\newtheorem{theorem}{Theorem}
\newtheorem{conjecture}{Conjecture}
\newtheorem{example}{Example}
\newtheorem{lemma}{Lemma}
\newcommand{\bbc}{::=}
\newcommand{\abc}{+\!\!=}
\newcommand{\syntax}{\textit}
\newcommand{\reduce}{\longrightarrow}
\newcommand{\topreduce}{\Longrightarrow}
\newcommand{\subst}[2]{[#1\mapsto#2]}
\newcommand{\dom}[1]{\syntax{dom(}#1\syntax{)}}
\newcommand{\J}{\lambda_\textsf{J}}
\newcommand{\lj}{\J}
\newcommand{\Con}{\lambda_\textsf{Con}}
\newcommand{\lcon}{\Con}
\newcommand{\ljTerm}{\ljM}
\newcommand{\ljConst}{K}
\newcommand{\ljVar}{\ljx}
\newcommand{\ljVal}{\ljV}
\newcommand{\ljOp}{op}
\newcommand{\ljFun}{\lambda{\ljVar}.\ljTerm}
\newcommand{\ljIf}{\syntax{if}}
\newcommand{\ljAt}{\syntax{@}}
\newcommand{\ljEval}{\syntax{eval}}
\newcommand{\ljBlame}{\syntax{B}}
\newcommand{\ljPBlame}{\syntax{+blame}}
\newcommand{\ljNBlame}{\syntax{-blame}}
\newcommand{\ljM}{M}
\newcommand{\ljN}{N}
\newcommand{\ljL}{L}
\newcommand{\ljU}{U}
\newcommand{\ljV}{V}
\newcommand{\ljW}{W}
\newcommand{\ljx}{x}
\newcommand{\ljy}{y}
\newcommand{\ljz}{z}
\newcommand{\ljCtx}{\mathcal{E}}
\newcommand{\ljHole}{\square}
\newcommand{\ctx}[1]{\ljCtx\llbracket#1\rrbracket}
\newcommand{\ljTrue}{\syntax{true}}
\newcommand{\ljFalse}{\syntax{false}}
\newcommand{\defFlatC}[1]{\syntax{flat(}{#1}\syntax{)}}
\newcommand{\defFunC}[2]{#1\rightarrow{#2}}
\newcommand{\defDepC}[1]{\ljVar\rightarrow{#1}}
\newcommand{\defCapC}[2]{{#1}\cap{#2}}
\newcommand{\defCupC}[2]{{#1}\cup{#2}}
\newcommand{\defAbsC}[2]{\Lambda#1.#2}
\newcommand{\conC}{C}
\newcommand{\conD}{D}
\newcommand{\conA}{A}
\newcommand{\conI}{I}
\newcommand{\conJ}{J}
\newcommand{\conQ}{Q}
\newcommand{\conR}{R}
\newcommand{\cbVar}{\iota}
\newcommand{\cbBlame}{\flat}
\newcommand{\cbIdent}{b}
\newcommand{\topassert}[3]{#1\ljAt^{#2}#3}
\newcommand{\assertWith}[4]{#1\ljAt^{#2}_{#3}#4}
\newcommand{\assert}[3]{#1\ljAt^{\cbBlame}_{#2}#3}
\newcommand{\blame}[1]{\ljBlame^{#1}}
\newcommand{\update}{\blacktriangleleft}
\newcommand{\cbState}{\varsigma}
\newcommand{\cbCstr}{\kappa}
\newcommand{\emptystate}{\cdot}
\newcommand{\defCb}[2]{#1\update#2}
\newcommand{\append}[2]{#2\concat#1}
\newcommand{\concat}{:}
\newcommand{\satisfies}[2]{#1\models#2}
\newcommand{\cbTruth}{\mathds{B}}
\newcommand{\cbSol}{\mu}
\newcommand{\Rangeof}[1]{\llparenthesis#1\rrparenthesis}
\newcommand{\subject}[1]{#1.\subjectName}
\newcommand{\context}[1]{#1.\contextName}
\newcommand{\subjectName}{\textit{subject}}
\newcommand{\contextName}{\textit{context}}
\newcommand{\makeTruth}[1]{\tau(#1)}
\newcommand{\unwrap}[1]{\nabla(#1)}
\newcommand{\free}[1]{\syntax{(free?}\,#1\syntax{)}}
\newcommand{\termS}{S}
\newcommand{\termT}{T}
\newcommand{\termSVal}{\termS_{\textsf{Val}}}
\newcommand{\termSNVal}{\termS_{\textsf{NonVal}}}
\newcommand{\termTConst}{\termT_{\textsf{Const}}}
\newcommand{\termTAbs}{\termT_{\textsf{Abs}}}
\newcommand{\termTVal}{\termT_{\textsf{Val}}}
\newcommand{\termTI}{\termT_{\textsf\conI}}
\newcommand{\termTQ}{\termT_{\textsf\conQ}}
\newcommand{\termTNQ}{\termT_{\textsf{Non\conQ}}}
\newcommand{\ljFork}{\parallel} %  \talloblong \circ \bullet
\newcommand{\fork}[2]{(#1\ljFork#2)}
\newcommand{\forked}[2]{(#1\ljFork#2)}
\newcommand{\translate}{\longmapsto}
\newcommand{\translateB}{\translate_{\textsf{B}}}
\newcommand{\translateS}{\translate_{\textsf{S}}}
\newcommand{\translateSInT}{\translate_\textsf{S}'}
\newcommand{\join}{\translate_\textsf{J}}
\newcommand{\joinInT}{\translate_\textsf{J}'}
\newcommand{\condense}{\translate_\textsf{C}}
\newcommand{\translateA}{\translate_{\textsf{A}}}
\newcommand{\translateAInT}{\translate_\textsf{A}'}
\newcommand{\optimize}{\Longmapsto} %{\Longrightarrow}
\newcommand{\ctxF}{\mathcal{F}}
\newcommand{\ctxG}{\mathcal{G}}
\newcommand{\ctxH}{\mathcal{H}}
\newcommand{\ctxB}{\mathcal{B}}
\newcommand{\inB}[1]{\ctxB\llbracket#1\rrbracket}
\newcommand{\ctxA}{\mathcal{A}}
\newcommand{\ctxV}{\mathcal{V}}
\newcommand{\inA}[1]{\ctxA\llbracket#1\rrbracket}
\newcommand{\inV}[1]{\ctxV\llbracket#1\rrbracket}
\newcommand{\fctx}[1]{\ctxF\llbracket#1\rrbracket}
\newcommand{\gctx}[1]{\ctxG\llbracket#1\rrbracket}
\newcommand{\actx}[1]{\ctxA\llbracket#1\rrbracket}
\newcommand{\bctx}[1]{\ctxB\llbracket#1\rrbracket}
\newcommand{\ctxT}{\mathcal{T}}
\newcommand{\inT}[1]{\ctxT\llbracket#1\rrbracket}
\newcommand{\blameOf}[2]{\eta(#1,\,#2)}
\newcommand{\ConSubsetSubject}[2]{#1\sqsubset^{+}#2}
\newcommand{\ConSubseteq}{\ConOSubseteq}
\newcommand{\ConOSubseteq}[2]{#1\sqsubseteq#2}
\newcommand{\ConNSubseteq}[2]{#1\sqsubseteq^*#2}
\newcommand{\ConSubseteqSubject}[2]{#1\sqsubseteq^{+}#2}
\newcommand{\ConSubseteqContext}[2]{#1\sqsubseteq^{-}#2}
\newcommand{\ConSubsetState}{\Gamma}
\newcommand{\ConSubsetTerm}[2]{#1\leq#2}
\newcommand{\termO}{O}
\newcommand{\termP}{P}
\newcommand{\isEquivCtx}[2]{#1\simeq#2}
\newcommand{\isEquivTerm}[2]{#1\simeq#2}
\newcommand{\inhole}[2]{#1\llbracket#2\rrbracket}
\newcommand{\ctxJoin}[2]{#1\sqcup#2}
\newcommand{\ctxMinus}[2]{#1\setminus#2}
\newcommand{\ctxIn}[2]{#1\in#2}
\newcommand{\ctxNotIn}[2]{#1\not\in#2}
\newcommand{\ctxM}{\mathcal{M}}
\newcommand{\ctxN}{\mathcal{N}}
\newcommand{\ctxL}{\mathcal{L}}
\newcommand{\inM}[1]{\ctxM\llbracket#1\rrbracket}
\newcommand{\inN}[1]{\ctxN\llbracket#1\rrbracket}
\newcommand{\rootof}[2]{\textit{root-of}(#1,\,#2)}
\newcommand{\signof}[2]{\textit{sign-of}(#1,\,#2)}
\newcommand{\inv}[1]{\neg#1}
\newcommand{\ctxK}{\mathcal{K}}
\newcommand{\inK}[1]{\ctxK\llbracket#1\rrbracket}
\newcommand{\extendI}[2]{#1\hookleftarrow#2}
\newcommand{\RuleBeta}{Beta}
\newcommand{\RuleOp}{Op}
\newcommand{\RuleIfTrue}{True}
\newcommand{\RuleIfFalse}{False}
\newcommand{\RuleAssert}{Assert}
\newcommand{\RuleFlat}{Flat}
\newcommand{\RuleUnit}{Unit}
\newcommand{\RuleUnion}{Union}
\newcommand{\RuleIntersection}{Intersection}
\newcommand{\RuleDFunction}{D-Function}
\newcommand{\RuleDDependent}{D-Dependent}
\newcommand{\RuleDIntersection}{D-Intersection}
\newcommand{\RuleBase}{Base}
\newcommand{\RuleDOp}{D-Op}
\newcommand{\RuleDIf}{D-If}
\newcommand{\RuleReduceFalse}{Reduce/False}
\newcommand{\RuleReduceTrue}{Reduce/True}
\newcommand{\RuleDDomain}{D-Domain}
\newcommand{\RuleDRange}{D-Range}
\newcommand{\RuleDFalse}{D-False}
\newcommand{\RuleCSEmpty}{CS-Empty}
\newcommand{\RuleCSState}{CS-State}
\newcommand{\RuleCTIndirection}{CT-Indirection}
\newcommand{\RuleCTInversion}{CT-Inversion}
\newcommand{\RuleCTFlat}{Ct-Flat}
\newcommand{\RuleCTFunction}{CT-Function}
\newcommand{\RuleCTIntersection}{CT-Intersection}
\newcommand{\RuleCTUnion}{Ct-Union}
\newcommand{\RuleUnfoldAssert}{Unfold/Assert}
\newcommand{\RuleUnfoldUnion}{Unfold/Union}
\newcommand{\RuleUnfoldIntersection}{Unfold/Intersection}
\newcommand{\RuleUnfoldOp}{Unfold/Op}
\newcommand{\RuleUnfoldDFunction}{Unfold/D-Function}
\newcommand{\RuleUnfoldDIntersection}{Unfold/D-Intersection}
\newcommand{\RuleConvertTrue}{Convert/True}
\newcommand{\RuleVerifyTrue}{Verify/True}
\newcommand{\RuleVerifyFalse}{Verify/False}
\newcommand{\RuleReverseI}{Push/Immediate}
\newcommand{\RuleReverseFalse}{Push/False}
\newcommand{\RuleReverseIf}{Push/If}
\newcommand{\RuleUnroll}{Unroll}
\newcommand{\RuleLower}{Lower}
\newcommand{\RuleCondense}{Condense}
\newcommand{\RuleBaseline}{Baseline}
\newcommand{\RuleTrace}{Trace}
\newcommand{\RuleLift}{Lift}
\newcommand{\RuleBlame}{Blame} % TODO, remove
\newcommand{\RuleBlameIfTrue}{Blame/If/True}
\newcommand{\RuleBlameIfFalse}{Blame/If/False}
\newcommand{\RuleBlameGlobal}{Blame/Global}
\newcommand{\RuleForkUnion}{Fork/Union}
\newcommand{\RuleForkIntersection}{Fork/Intersection}
\newcommand{\RuleSubsetInner}{Subset/Left}
\newcommand{\RuleSubsetOuter}{Subset/Right}
\newcommand{\RuleMerge}{Merge}
\newcommand{\RuleJoin}{Join}
\newcommand{\RuleJoinUnit}{Match}
\newcommand{\RuleJoinContract}{Synchronize/Contract}
\newcommand{\RuleJoinTermLeft}{Synchronize/Left}
\newcommand{\RuleJoinTermRight}{Synchronize/Right}
\newcommand{\RuleSimplifyUnion}{Simplify/Union}
\newcommand{\RuleSimplifyIntersection}{Simplify/Intersection}
\newcommand{\RuleExtendPredicate}{Extend/Predicate}
\newcommand{\RuleExtendNonPredicate}{Extend/Non-Predicate}
\newcommand{\RuleSubset}{Subset}
\begin{document}

\setlength{\pdfpageheight}{\paperheight}
\setlength{\pdfpagewidth}{\paperwidth}

\conferenceinfo{CONF 'yy}{Month d--d, 20yy, City, ST, Country}
\copyrightyear{20yy}
\copyrightdata{978-1-nnnn-nnnn-n/yy/mm}
\copyrightdoi{nnnnnnn.nnnnnnn}

% Uncomment the publication rights you want to use.
%\publicationrights{transferred}
%\publicationrights{licensed}     % this is the default
%\publicationrights{author-pays}

%\titlebanner{Draft \today}        % These are ignored unless
\preprintfooter{Static Contract Simplification}   % 'preprint' option specified.

\title{Static Contract Simplification}
\subtitle{Technical Report}

%\authorinfo{}{}{}
\authorinfo{Matthias Keil \and Peter Thiemann}
           {University of Freiburg,\\Freiburg, Germany}
           {\url{{keilr,thiemann}@informatik.uni-freiburg.de}}

\maketitle

%          _         _                  _   
%    /\   | |       | |                | |  
%   /  \  | |__  ___| |_ _ __ __ _  ___| |_ 
%  / /\ \ | '_ \/ __| __| '__/ _` |/ __| __|
% / ____ \| |_) \__ \ |_| | | (_| | (__| |_ 
%/_/    \_\_.__/|___/\__|_|  \__,_|\___|\__|

\begin{abstract}

Contracts and contract monitoring are a powerful mechanism for
specifying properties and guaranteeing them at run time.
However, run time monitoring of contracts imposes a
significant overhead. The execution time is impacted by the insertion
of contract checks as well as by the introduction of proxy objects
that perform delayed contract checks on demand.

Static contract simplification attacks this issue using program
transformation. It applies compile-time transformations to programs
with contracts to reduce the overall run time while
preserving the original behavior.
Our key technique is to statically propagate contracts through the
program and to evaluate and merge contracts where possible.
The goal is to obtain residual contracts that are collectively cheaper
to check at run time. 

We distinguish different levels of preservation of behavior, which
impose different limitations on the admissible transformations: Strong
blame preservation, where the transformation is a behavioral
equivalence, and weak blame preservation, where the transformed
program is equivalent up to the particular violation reported.
Our transformations never increase the overall number of
contract checks.
\end{abstract}

%%% Local Variables: 
%%% mode: latex
%%% TeX-master: "main"
%%% End: 

\category{D.3.1}{Formal Definitions and Theory}{Semantics}
\category{D.2.4}{Software/Program Verification}{Programming by contract}

% general terms are not compulsory anymore,
% you may leave them out
%\terms
%term1, term2

\keywords
Higher-Order Contracts, Contract Simplification, Hybrid Contract Checking

% _____       _                 _            _   _             
%|_   _|     | |               | |          | | (_)            
%  | |  _ __ | |_ _ __ ___   __| |_   _  ___| |_ _  ___  _ __  
%  | | | '_ \| __| '__/ _ \ / _` | | | |/ __| __| |/ _ \| '_ \ 
% _| |_| | | | |_| | | (_) | (_| | |_| | (__| |_| | (_) | | | |
%|_____|_| |_|\__|_|  \___/ \__,_|\__,_|\___|\__|_|\___/|_| |_|

\section{Introduction}
\label{sec:introduction}

Software contracts \cite{Meyer1988} %,FindlerFelleisen2002,KeilThiemann2015-blame
stipulate invariants for objects as well as pre- and postconditions for
functions a programmer regards as essential for the correct execution of a program.
Traditionally, contracts are checked at run time by contract monitoring.
This approach has become a prominent mechanism to provide strong
guarantees for programs in dynamically typed languages while preserving their
flexibility and expressiveness.

Using formal and precise specifications in the form of contracts in
this way sounds
appealing, but it comes with a cost: Contract monitoring degrades the
execution time significantly
\cite{TakikawaFelteyGreenmanNewVitekFelleisen2016}. This cost arises
because every contract extends a program with additional code to
check the contract assertions that needs to be executed on top of the
program. Moreover, human developers may inadvertently add contracts to
frequently used functions and objects so that contracts end up on hot
paths in a program. In particular, if contracts are naively applied
to recursive functions, 
predicates will check the same values repeatedly, which aggravates the
performance degradation.

Existing contract systems for dynamically typed languages (e.g., Racket's contract framework~\cite[Chapter
7]{FlattFindlerPLT/2014/theracketguide}, Disney's JavaScript contract system
\emph{contracts.js}~\cite{Disney2013:contracts}, or {\TJS} for
JavaScript~\cite{KeilThiemann2015-treatjs}) each suffer from a
considerable performance impact when extending a program with contracts.

In contrast, static contract checking \cite{XuPeytonJonesClaessen2009} avoids
any run-time cost if it succeeds in proving and removing a contract at
compile time. Otherwise, it defers to plain contract checking. A
subsequent extension to hybrid contract checking~\cite{Xu2012} is able
to exploit partial proofs and to extract a residual contract that
still needs to be checked at run time. Both of these works are phrased
as transformations in a typed intermediate language.

A similar motivation is behind static contract verification
\cite{NguyenTobinHochstadtVanHorn2014}. This work takes place in the
context of an untyped language. It applies symbolic execution as well
as techniques from occurrence typing to prove contracts statically and
thus eliminate their run-time overhead.

Our work is complementary to these previous efforts. Our main objective is to
improve the efficiency of contract monitoring by using static
techniques to evaluate as much of a contract as possible, to detect
and remove redundant checking of contracts, and to reassemble the
fragments that cannot be discharged statically in places where they
are least likely to affect performance. Typically, we move contract fragments
to a surrounding module boundary and compose them to a new contract.

Thus, our goal is not to detect contract violations at compile time, but rather to
optimize contracts based on static information. We also do not rewrite the
non-contract code of the underlying program, but perform
program transformations restricted to just moving around contracts.

To demonstrate the essence of our approach, consider the following code snippet
that contains a function with an intersection contract applied to it.
\begin{lstlisting}[name=introduction]
let addOne = 
((,\ plus (,\ z ((plus 1) z))) 
 [(,\ x (,\ y (+ x y)))
  @ ((Number? -> (Number? -> Number?)) cap
        (String? -> (String? -> String?)))])
\end{lstlisting}
The value \lstinline{addOne} is a function that takes an argument
\lstinline{z} and returns the result of applying function
\lstinline{plus} to the number \lstinline{1} and to
\lstinline{z}. We use the inline operator \lstinline{@} to
indicate contract assertions that we put in square
brackets to improve readability. We further rely on predefined flat
contracts \lstinline{Number?} and \lstinline{String?} that are type
checks for numbers and strings, respectively. 

If we assume that the \lstinline{+} operator is overloaded as in
JavaScript or Java so that it works for number and string
values (but possibly fails or delivers meaningless results for other
inputs), a programmer may add an intersection 
contract to \lstinline{plus}. This
contract has the effect that the context of a use of \lstinline{plus}
(here function \lstinline{addOne}) can choose to use \lstinline{plus}
either with number or string arguments. Monitoring this contract requires six predicate 
checks at every use of \lstinline{addOne}.\footnote{%
  Assuming the semantics of intersection and union contracts from the
  literature~\cite{KeilThiemann2015-blame}.}

However, as the second argument of \lstinline{plus} is already a
number,  the right side of the intersection can never be 
fulfilled. So, it would be sufficient to check if the first argument
of \lstinline{plus} and its return value satisfy
\lstinline{Number?}. Moreover, the remaining predicates can be lifted
to an interface description on \lstinline{addOne}, as the following example
demonstrates. Only two predicate checks remain.
\begin{lstlisting}[name=introduction]
let addOne =
[((,\ plus (,\ z ((plus 1) z)))
   ((,\ x (,\ y (+ x y))))) @ (Number? -> Number?)]
\end{lstlisting}
This simplification can be done without knowing whether \lstinline{addOne} is
ever executed: we are not losing any blame due to the
transformation. However, it would not be ok to lift argument contracts
across function boundaries. Such lifting would introduce checks that
may never happen in the original program.

% PT I don't get that:
% As we do not abstract contract violations, it requires to track the evaluation
% state of already evaluated contracts ant to connect each remaining fragment with
% its associated contract. Moreover, it is necessary to preserve the
% responsibilities of a contract.

Building on the formalization of \citet{KeilThiemann2015-blame}, we present a
transformation system for static contract simplification. Our system
provides two stages of simplification with different consequences: The \emph{Baseline
Optimization} and the \emph{Subset Optimization}.

The \emph{Baseline Optimization} unfolds contracts and evaluates
predicates where possible while preserving the blame behavior of a program. The
\emph{Subset Optimization} reorganizes contracts and forms new contracts from
the contract fragments, but it may reorder contract computations and
thus change the blame (i.e., the particular violation that is flagged
at run time). Our simplification rules respect three overall guidelines:
\begin{description}
  \item[Improvement]
    Each transformation step improves the efficiency of contract monitoring: it either \emph{reduces}
    or \emph{maintains} the total number of predicate checks at run time.
    Simplification never increases the number of run-time checks.

  \item[Strong Blame-Preservation]
    Each transformation step of the baseline simplification fully preserves the
    blame-behavior of a program. It maintains the oder of contract checks at run
    time and thus the order in which violations are reported. A transformed
    program produces the same outcome as the original program.

  \item[Weak Blame-Preservation]
    Transformation steps of the subset simplification may change the oder of
    predicate checks, so that they may also change the order of observed
    violations. Thus, a transformed program ends in a blame state if and only if
    the original program would end in a blame state. However, the transformed
    program may report a different violation.
\end{description}
An optional third level optimizes contracts across module
boundaries. Even though it provides the strongest optimization, it may
over-approximate contract violations and it would violate all three guideline
from above.

\paragraph{Contributions}
\label{sec:contributions}
This work makes the following contributions:
\begin{itemize}
  \item We present the semantics of a two-tier static contract simplification
    that reorganizes and pre-evaluates contracts at compile time while
    guaranteeing weak or strong blame preservation.
  \item We show how to split alternatives into separated observation to deal
    with intersection and union contracts.    
  \item We give a specification of subcontracting that is used to simplify
    contracts based on the knowledge of other asserted contracts.
\end{itemize}
We created an executable implementation of the semantics using PLT
Redex~\cite{FelleisenFindlerFlatt2009}. The implementation is available online
at \url{http://anonymous}.

\paragraph{Overview}
\label{sec:overview}

Section~\ref{sec:overview} provides a series of examples demonstrating the
essence of our contract simplification. 
Section~\ref{sec:contracts} recalls contracts and contract satisfaction from the
literature and introduces an untyped, applied, call-by-value lambda calculus
with contract monitoring.
Section~\ref{sec:baseline} introduces \emph{Baseline Simplification}, the
first tier in our simplification stack. 
Section~\ref{sec:subset} shows the second tier, \emph{Subset
Simplification}, which also handles alternatives and reduces contracts that are
subsumed by other contracts.
Section~\ref{sec:technical-results} states our blame theorems and
Section~\ref{sec:evaluation} briefly addresses run time improvements. 
Section~\ref{sec:related-work} discusses related work and
Section~\ref{sec:conclusion} concludes.

The appendix extends this paper with auxiliary functions and definitions,
further examples, and proofs of all theorems. In particular, it provides the
semantics to enable picky-like evaluation semantics and the semantics of a
third, approximating, simplification level that provides the stringes 
simplification.

%A technical report \cite{KeilThiemann2015-blame-techrep} extends this paper by
%an appendix with further examples and proofs of all theorems.

%  ____                       _               
% / __ \                     (_)              
%| |  | |_   _____ _ ____   ___  _____      __
%| |  | \ \ / / _ \ '__\ \ / / |/ _ \ \ /\ / /
%| |__| |\ V /  __/ |   \ V /| |  __/\ V  V / 
% \____/  \_/ \___|_|    \_/ |_|\___| \_/\_/  
                                             
\section{Contract Simplification}
\label{sec:overview}

This section explains the main ideas of our transformation through a series of
examples. We start with a simple example and work up to more complex ones.

\subsection{Unrolling Delayed Contracts}
\label{sec:overview/unroll}

The examples consider different contracts on the \emph{addOne} example from the
previous section. In a first step we add a simple function contract to
\lstinline{plus}.
\begin{lstlisting}[name=overview]
let addOne = 
((,\ plus (,\ z ((plus 1) z))) 
 [(,\ x (,\ y (+ x y)))
  @ (Number? -> (Number? -> Number?))])
\end{lstlisting} 
While \lstinline{Number?} is a flat contract that is checked immediately,
function contracts are \emph{delayed}
contracts. A delayed contract cannot be checked directly when asserted to a value. Delayed
contracts stay attached to a value and are checked when the value is used.
Thus, every call to \lstinline{addOne} triggers three predicate checks.

Instead of asserting a contract that forces a number of delayed
contract checks, our \emph{Baseline Simplification} unrolls a delayed contract
to all uses of that value. This step pushes the delayed contract
inwards to all occurrences of \lstinline{plus} in \lstinline{addOne}.
\begin{lstlisting}[name=overview]
let addOne = 
((,\ plus
 (,\ z ([plus @ (Number? -> (Number? -> Number?))] 1) z)))
 (,\ x (,\ y (+ x y))))
\end{lstlisting} 
This step performs no simplification, but it yields an equivalent program
that is better prepared for unfolding the function contract to its domain and
range because the contract is on an expression in application position. The
following snippet demonstrates the (complete) unfolding of the contract, which
combines a number of primitive unfolding steps.
\begin{lstlisting}[name=overview]
let addOne = 
((,\ plus
 (,\ z [((plus [1 @ Number?]) [z @ Number?]) @ Number?]))
 (,\ x (,\ y (+ x y))))
\end{lstlisting} 
In the resulting expression, the function contracts are completely
decomposed into flat contract components which are spread all over
the expression.

Observe that, after unfolding, there are several flat contracts applied to values. Such
contracts can be checked statically without changing the outcome. For example,
\lstinline{1} satisfies \lstinline{Number?}. All satisfied contracts can safely
be removed from the program.
\begin{lstlisting}[name=overview]
let addOne = 
((,\ plus
 (,\ z [((plus 1) [z @ Number?]) @ Number?]))
 (,\ x (,\ y (+ x y))))
\end{lstlisting} 
In a last step, we push the remaining contract fragments outwards where possible. This
transformation
% PT don't know what that means:
% lowers all contracts on the function's return and
creates a new
function contract from that contract. The special contract \lstinline{top}
accepts every value.
\begin{lstlisting}[name=overview]
let addOne = 
[((,\ plus
  (,\ z ((plus 1) [z @ Number?])))
  (,\ x (,\ y (+ x y)))) @ (top -> Number?)]
\end{lstlisting}
Afterwards, the outermost expression carries a function contract that checks the
return value of \lstinline{addOne} and one contract is left in the body of
\lstinline{addOne} to check the argument. Again, this step does not simplify
contracts, but it prepares for further simplification steps.

However, further simplification cannot be done while guaranteeing
strong blame preservation. The \emph{Baseline Simplification} is not allowed to
behave differently than the dynamics would do at run time. 

\subsection{Treating Intersection and Union}
\label{sec:overview/intersection_union}

Our next example considers the full \emph{addOne} example from the introduction,
which we repeat for convenience.
\begin{lstlisting}[name=overview]
let addOne = 
((,\ plus (,\ z ((plus 1) z))) 
 [(,\ x (,\ y (+ x y)))
  @ ((Number? -> (Number? -> Number?)) cap
        (String? -> (String? -> String?)))])
\end{lstlisting} 
Intersections of function contracts are \emph{delayed} contracts, they must
stay attached to a value until the value is used in an application. Moreover,
for an intersection, the context can choose to fulfill the left or the right
side of the intersection. Thus, every call to \lstinline{addOne} triggers six
predicate checks.

Again, \emph{Baseline Simplification} unrolls the intersection contract to all uses
of that value and unfolds the contract to its domain and range when used in an
application. The following snippet demonstrates the unfolding of the
intersection contract.
\begin{lstlisting}[name=overview]
let addOne = 
((,\ plus
 (,\ z [[((plus
             [[1 @2 String?] @1 Number?])
            [[z @2 String?] @1 Number?]) @2 String?] @1 Number?]))
 (,\ x (,\ y (+ x y))))
\end{lstlisting} 
In addition, the static contract monitor maintains additional information in the
background (e.g., which contracts belong to the same side of the intersection)
to connect each contract with the enclosing operation. In the example we
indicate the originating operand of the intersection by \lstinline{@1} and \lstinline{@2}.

After unfolding, the \emph{Baseline Simplification} checks flat contracts
applied to values. Here, \lstinline{1} satisfies \lstinline{Number?} but
violates \lstinline{String?}.
While \lstinline{Number?} can now be removed,
information about a failing contract must remain in the source program. In
general, it is not possible to report failure statically because we do not know
if the code containing the failed contract (in this case, function
\lstinline{addOne}) is ever executed.

However, we do not have to leave the original contract in the
expression. Instead, the transformation replaces it by a ``false
contract'' \lstinline{bot}, whose sole duty is to remember a contract
violation and report it at run time. The next snippet shows
the result after processing satisfied and failing contracts and pushing the
remaining fragments outwards.
\begin{lstlisting}[name=overview]
let addOne = 
[[((,\ plus
   (,\ z ((plus
            [1 @2 bot])
           [[z @2 String?] @1 Number?])))
   (,\ x (,\ y (+ x y)))) @1 (top -> Number?)] @2 (top -> String?)]
\end{lstlisting}
Afterwards, the outermost expression carries two function contracts that check the
return value of \lstinline{addOne}; two contracts are left in the body
of \lstinline{addOne} to check the argument.

\subsection{Splitting Alternatives in Separated Observations}
\label{sec:overview/alternatives}

When remembering the last state in Section\
\ref{sec:overview/intersection_union}, we still check \lstinline{String?} even
though we know that right side of the intersection is never fulfilled. This is
because the \emph{Baseline Simplification} preserves the blame behaviour of a
program and is thus not able to deal with alternatives. However, the next
\emph{Subset Simplification} abandons the strong blame preservation to further
optimize contracts.

Instead of unfolding intersection and union contracts, we now observe both
alternatives in separation. After transforming both alternatives, we join the
remaining contracts or even discard a whole branch if it definitely leads to a
violation.

In our example, we see that \lstinline{1 @ String?} leads to a
context failure of the right operand of the intersection contract. Thus, it is useless to keep the
other fragments of the right operand which definitely yields a context
violation. The following code snippet shows the result.
\begin{lstlisting}[name=overview]
let addOne = 
[((,\ plus
  (,\ z ((plus [1 @2 bot]) [z @1 Number?])))
  (,\ x (,\ y (+ x y)))) @1 (top -> Number?)]
\end{lstlisting}
It remains to check \lstinline{z} and \lstinline{addOne}'s return to be a
number. This are two remaining checks per use of \lstinline{addOne}. 

However, we still need to keep the information on the failing alternative.
This is required to throw an exception if also the other alternative fails. 

\subsection{Lifting Contracts}
\label{sec:overview/lift}

As shown in the example above, contracts on the function's body are reassembled
to a new function contract, whereas contracts on function arguments must remain
to preserve the order of predicate checks. 
Continuing the example, our next step constructs a new contract from the
remaining contracts on arguments.
\begin{lstlisting}[name=overview]
let addOne = 
[[((,\ plus
  (,\ z ((plus [1 @2 bot]) z)))
  (,\ x (,\ y (+ x y)))) @1 (Number? -> top)] @1 (top -> Number?)]
\end{lstlisting}
The \lstinline{addOne} function now contains another function contract that
restricts its domain to number values. As this step puts contracts on arguments
in front, it may change the order in which violations arise.
This lifting to module boundaries only reorganizes existing contracts,
but it helps to reduce contracts that are subsumed by other contracts.

\subsection{Condensing Contracts}
\label{sec:overview/condense}

Continuing the example in Section~\ref{sec:overview/lift}, the outermost
expression carries two function contracts that check the argument and the return
value of \lstinline{addOne}. Moreover, both contracts arise from the same source
contract. 
These function contracts may be condensed to a single function contract on
\lstinline{addOne}.
\begin{lstlisting}[name=overview]
let addOne = 
[((,\ plus
  (,\ z ((plus [1 @2 bot]) z)))
  (,\ x (,\ y (+ x y)))) @1 (Number? -> Number?)]
\end{lstlisting}
This step is only possible if the argument contract, here
\lstinline{Number? -> top}, is in negative position and if the return
contract, \lstinline{top -> Number?} is in positive position.

\subsection{Contract Subsets}
\label{sec:overview/subset}

Splitting intersections and unions into separate observations has another
important advantage: We know that in every branch every contract must be
fulfilled. Thus we can relate contracts to other contracts and use
knowledge about other contracts to reduce redundant checks. To
demonstrate an example, we add a second contract to our running example.
\begin{lstlisting}[name=overview]
let addOne = 
((,\ plus [(,\ z ((plus 1) z)) @ (Positive? -> Positive?)]) 
 [(,\ x (,\ y (+ x y)))
  @ ((Number? -> (Number? -> Number?)) cap
        (String? -> (String? -> String?)))])
\end{lstlisting} 
Here, \lstinline{Positive?} is a flat contract that checks for positive numbers.
In addition to the contract on \lstinline{plus}, \lstinline{addOne} contains
another function contract, requiring \lstinline{addOne} to be called with
positive numbers and to return a positive number. After performing the
same transformation steps as before, we obtain the following expressions.

\begin{lstlisting}[name=overview]
let addOne =
[[[((,\ plus (,\ z ((plus 1) z)))
   (,\ x (,\ y (+ x y)))) @1 (Number? -> top)] @1 (top -> Number?)] @ (Positive? -> Positive?)]
||
let addOne =
[((,\ plus (,\ z ((plus [1 @2 bot]) z)))
 (,\ x (,\ y (+ x y)))) @ (Positive? -> Positive?)]
\end{lstlisting}
%\begin{lstlisting}[name=overview]
%let addOne = 
%[[[((,\ plus
%   (,\ z ((plus [1 @2 bot]) z)))
%   (,\ x (,\ y (+ x y)))) @1 (Number? -> top)] @1 (top -> Number?)] @ (Positive? -> Positive?)]
%\end{lstlisting}
The intersection is split into two parallel observations, as indicated by
\lstinline{||}. All contracts were pushed to the outermost boundary and form a
new module boundary. However, some of the contract checks are redundant.

For example, the innermost contract in the first observation requires
\lstinline{addOne} to be called with numbers, whereas the outermost contract
restricts the argument already to positive numbers. As positive numbers are a
proper subset of numbers, the inner contract will never raise a violation if the
outer contract is satisfied. Thus, it can be removed without changing the blame
behavior of a program.

The contract \lstinline{Positive? -> Positive?} is more restrictive than any
other contract on \lstinline{addOne}, that is, \lstinline{Number? -> top} and
\lstinline{top -> Number?}. Here, a contract $A$ is more restrictive
than a contract $B$, if the satisfaction of $A$ implies the satisfaction of $B$.

However, if a more restrictive contract fails, that failure does not
imply failure of the less restrictive contract. But as  every contract must be fulfilled, the program
already stops with a contract violation. Thus, there is no need to check a less
restrictive contract on that branch. Unfortunately, removing the less restrictive contract might
change the order of observed violations, as the less restrictive contract might
report its violation first.

To sum up, in the first observation the outer contract remains as it subsumes any
other contract, whereas in the second observation both contracts remain.
%\begin{lstlisting}[name=overview]
%let addOne =
%[((,\ plus (,\ z ((plus 1) z)))
% (,\ x (,\ y (+ x y)))) @ (Positive? -> Positive?)]
%||
%let addOne =
%[((,\ plus (,\ z ((plus [1 @2 bot]) z)))
% (,\ x (,\ y (+ x y)))) @ (Positive? -> Positive?)]
%\end{lstlisting}

Finally, after finishing all transformation steps, the simplifications joins the
observation to final expression.
\begin{lstlisting}[name=overview]
let addOne = 
[((,\ plus
 (,\ z ((plus [1 @2 bot]) z)))
 (,\ x (,\ y (+ x y)))) @ (Positive? -> Positive?)]
\end{lstlisting}

%  _____            _                  _                         _ 
% / ____|          | |                | |                       | |
%| |     ___  _ __ | |_ _ __ __ _  ___| |_ ___    __ _ _ __   __| |
%| |    / _ \| '_ \| __| '__/ _` |/ __| __/ __|  / _` | '_ \ / _` |
%| |___| (_) | | | | |_| | | (_| | (__| |_\__ \ | (_| | | | | (_| |
% \_____\___/|_| |_|\__|_|  \__,_|\___|\__|___/  \__,_|_| |_|\__,_|
%                                                                  
%                                                                  
%  _____            _                  _   
% / ____|          | |                | |  
%| |     ___  _ __ | |_ _ __ __ _  ___| |_ 
%| |    / _ \| '_ \| __| '__/ _` |/ __| __|
%| |___| (_) | | | | |_| | | (_| | (__| |_ 
% \_____\___/|_| |_|\__|_|  \__,_|\___|\__|
%                                          
%                                          
%  _____       _   _      __           _   _             
% / ____|     | | (_)    / _|         | | (_)            
%| (___   __ _| |_ _ ___| |_ __ _  ___| |_ _  ___  _ __  
% \___ \ / _` | __| / __|  _/ _` |/ __| __| |/ _ \| '_ \ 
% ____) | (_| | |_| \__ \ || (_| | (__| |_| | (_) | | | |
%|_____/ \__,_|\__|_|___/_| \__,_|\___|\__|_|\___/|_| |_|
                                                        
\section{Contracts and Contract Satisfaction}
\label{sec:contracts}

This section defines $\lcon$, an untyped call-by-value lambda calculus with
contracts which serves as a core calculus for contract monitoring. It first
introduces the base calculus $\J$, then it proceeds to describe contracts and
their semantics for the base calculus. Finally it gives the semantics of
contract assertion and blame propagation.

The core calculus is inspired by previous work from the
literature~\cite{KeilThiemann2015-blame}.

% _____ _          ___                 _                                      
%|_   _| |_  ___  | _ ) __ _ ___ ___  | |   __ _ _ _  __ _ _  _ __ _ __ _ ___ 
%  | | | ' \/ -_) | _ \/ _` (_-</ -_) | |__/ _` | ' \/ _` | || / _` / _` / -_)
%  |_| |_||_\___| |___/\__,_/__/\___| |____\__,_|_||_\__, |\_,_\__,_\__, \___|
%                                                    |___/          |___/     

\subsection{The Base Language $\J$}
\label{sec:contracts/lj}

\begin{figure}[tp]
  \begin{displaymath}
    \begin{array}{lrl}
      \ljL, \ljM, \ljN &\bbc& \ljConst \mid \ljVar \mid \ljFun \mid \ljM\,\ljN
      \mid \ljOp(\vec\ljTerm) \mid \ljIf\,\ljL\,\ljM\,\ljN\\
      \ljConst &\bbc& \ljFalse \mid \ljTrue \mid 0 \mid 1 \mid \dots \\
      \ljU, \ljV, \ljW &\bbc& \ljConst \mid \ljFun \\
      \ljCtx &\bbc& \ljHole \mid \ljCtx\,\ljN \mid \ljV\,\ljCtx \mid
      \ljOp(\vec\ljVal\,\ljCtx\,\vec\ljTerm) \mid \ljIf\,\ljCtx\,\ljM\,\ljN
    \end{array}
  \end{displaymath}

  \begin{displaymath}
    \begin{array}{llcll}
      \DefTirName{\RuleBeta} 
      &\ctx{\ljFun\,\ljVal}
      &\reduce
      &\ctx{\ljM\subst\ljVar\ljVal}
      &
      \\
      \DefTirName{\RuleOp}
      &\ctx{\ljOp (\vec\ljVal)}
      &\reduce
      &\ctx{\delta_{\ljOp}(\vec\ljVal)}
      &\vec{\ljVal} \in \dom{\delta_{\ljOp}}
      \\
      \DefTirName{\RuleIfTrue}
      &\ctx{\ljIf\,\ljTrue\,\ljM\,\ljN}
      &\reduce
      &\ctx{\ljM}
      &
      \\
      \DefTirName{\RuleIfFalse}
      &\ctx{\ljIf\,\ljFalse\,\ljM\,\ljN}
      &\reduce
      &\ctx{\ljN}
      &
    \end{array}
  \end{displaymath}
  \caption{%
    Syntax and semantics of $\J$.
  }\label{fig:lj}
\end{figure}

Figure~\ref{fig:lj} defines syntax and semantics of $\lj$, an applied
call-by-value lambda calculus. A term $\ljTerm$ is either a first-order
constant, a variable, a lambda abstraction, an application, a primitive
operation, or a condition. Variables $\ljx$, $\ljy$, and $\ljz$ are drawn from a
denumerable set of variables. Constants $\ljConst$ range over a set of base type
values including booleans, numbers, and strings.

To define evaluation, values $\ljU$, $\ljV$, and $\ljW$ range over values and
$\ljCtx$ over evaluation contexts, which are defined as usual.

The small-step reduction relation $\reduce$ comprises beta-value reduction,
built-in partial operations transforming a vector of values into a value, and
the reduction of conditionals. We write $\reduce^*$ for its reflexive,
transitive closure and $\not\reduce^*$ for its complement. That is,
$\ljM \not\reduce^* \ljN$ if, for all $\ljL$ such that $\ljM \reduce^* \ljL$, it
holds that $\ljL\ne\ljN$. We also write $\ljM \not\reduce$ to indicate that
there is no $\ljN$ such that $\ljM\reduce\ljN$.

%  ___         _               _      
% / __|___ _ _| |_ _ _ __ _ __| |_ ___
%| (__/ _ \ ' \  _| '_/ _` / _|  _(_-<
% \___\___/_||_\__|_| \__,_\__|\__/__/
                                   
\subsection{Contracts and Contracted $\J$}
\label{sec:contracts/lcon}

\begin{figure}[tp]
  \begin{displaymath}
    \begin{array}{lrl}
      \conC, \conD &\bbc& \conI \mid \conQ \mid \defCupC\conC\conD \mid
      \defCapC\conI\conC\\
      \conI, \conJ &\bbc& \defFlatC{\ljM}\\
      \conQ, \conR &\bbc& \defFunC\conC\conD \mid \defDepC\conA \mid
      \defCapC\conQ\conR\\
      \conA &\bbc& \defAbsC\ljVar\conC\\
    \end{array}
  \end{displaymath}
  \begin{displaymath}
    \begin{array}{lrl}
      \ljBlame &\bbc& \ljPBlame \mid \ljNBlame\\
      \ljL, \ljM, \ljN &\abc& \topassert\ljM\cbBlame\conC \mid\mid
      \assert\ljVal\cbVar{(\ljEval\,\ljM)} \mid \blame\cbBlame\\
      \ljU, \ljV, \ljW &\abc& \mid\mid \assert\ljVal\cbVar\conQ\\ 
      \ljCtx &\abc& \topassert\ljCtx\cbBlame\conC \mid\mid
      \assert\ljVal\cbVar{(\ljEval\,\ljCtx)}
    \end{array}
  \end{displaymath}
  \begin{displaymath}
    \begin{array}{lrl}
      \cbIdent &\bbc& \cbBlame \mid\mid \cbVar\\
      \cbCstr &\bbc& \defCb\cbIdent\ljW \mid \defCb\cbIdent\cbVar \mid \defCb\cbIdent\defFunC\cbVar\cbVar \mid
      \defCb\cbIdent\defCapC\cbVar\cbVar \mid \defCb\cbIdent\defCupC\cbVar\cbVar\\
      \cbState &\bbc& \emptystate \mid \append\cbState\cbCstr
    \end{array}
  \end{displaymath}
  \caption{%
    Syntax extension for $\Con$.
  }\label{fig:lcon}
\end{figure}

Figure~\ref{fig:lcon} defines the syntax of $\Con$ as an extension of $\J$. It
first introduces constructs for contract monitoring in general and, second, it
adds new terms specific to contract monitoring. Intermediate terms that
do not occur in source programs appear after double bars ``$\mid\mid$''.

A contract $\conC$ is either an immediate contact $\conI$, a delayed contract
$\conQ$, a union between two contract $(\defCupC\conC\conD)$, or an intersection
$(\defCapC\conI\conC)$.

Without loss of generality we restrict top-level intersection contracts to
intersections between an immediate contract and a rest contract (cf.\
\cite{KeilThiemann2015-blame}). This requires that unions and immediate parts are
distributivity pulled out of intersections of delayed contracts such that the
immediate parts can be check right away when asserted, whereas the delayed parts
remain.

Immediate contracts $\conI$ and $\conJ$ stand for flat contracts define by a
predicate $\ljM$ that can be evaluated right away when asserted to a value.

A delayed contracts $\conQ$ is either a function contract $\defFunC\conC\conD$
with domain contract $\conD$ and range contract $\conD$, a dependent contract
$\defDepC\conA$ defined by contract abstraction $\conA$, or the finite
intersection of function (dependent) contracts. Delayed contracts stay
with the value until the value is subject of an application.

Consequently, values are extended with $\assert\ljVal\cbVar\conQ$ which
represents a value wrapped in a delayed contract that is to be monitored when
the value is used. The extended set of values forces us to revisit the built-in
operations. We posit that each partial function $\delta_{\ljOp}$ first erases
all contract monitoring from its arguments, then processes the underlying
$\J$-values, and finally returns a $\J$-value.

The only new source term is contract assertion $\topassert\ljM{\cbIdent}\conC$.
The adornments $\cbBlame$ and $\cbVar$ are drawn from an unspecified denumerable
set of blame identifiers $\cbIdent$, which comprises blame labels $\cbBlame$
that occur in source terms and blame variables $\cbVar$ that are introduced
during evaluation.

In the intermediate term $\assert\ljVal\cbVar{(\ljEval\,\ljTerm)}$, the term
$\ljTerm$ represents the current evaluation state of the predicate of a flat
contract. The $\blame\cbBlame$ expression signal a contract violation at label
$\cbBlame$. Term $\ljBlame$ is either a $\ljPBlame$ (positive) or a $\ljNBlame$
(negative) blame, representing subject or context blame, respectively.

Evaluation contexts are extended in the obvious way: a contract
monitor is only applied to a value and a flat contract is checked
before its value is used.

In $\Con$, contract monitoring occurs via constraints $\cbCstr$ imposed on blame
identifiers. There is an indirection constraint and one kind of constraint for
each kind of contract: flat, function, intersection, and union. Constraints are
collected in a list $\cbState$ during reduction.

% ___                     _   _       
%/ __| ___ _ __  __ _ _ _| |_(_)__ ___
%\__ \/ -_) '  \/ _` | ' \  _| / _(_-<
%|___/\___|_|_|_\__,_|_||_\__|_\__/__/
                                     
\subsection{Semantics for $\Con$}
\label{sec:lcon/semantics}

\begin{figure*}[tp]
  \begin{displaymath}
    \begin{array}{llllllr}

      \DefTirName{\RuleAssert}
      & \cbState,
      & \ctx{\topassert\ljVal\cbBlame\conC}
      & \reduce
      & \append\cbState{(\defCb{\cbBlame}{\cbVar})},
      & \ctx{\assert\ljVal\cbVar\conC}
      & \cbVar\not\in{\cbState}
      \\
      \\

      \DefTirName{\RuleFlat}
      & \cbState,
      & \ctx{\assert\ljVal\cbVar{\defFlatC{\ljM}}}
      & \reduce
      & \cbState,
      & \ctx{\assert\ljVal\cbVar{(\ljEval\,(\ljM\,\ljU))}}
      & \ljU=\unwrap\ljVal
      \\
      \DefTirName{\RuleUnit}
      & \cbState,
      & \ctx{\assert\ljV\cbVar{(\ljEval\,\ljW)}}
      & \reduce
      & \append\cbState{(\defCb{\cbVar}{\ljW})},
      & \ctx{\ljV}
      & 
      \\
      \\

      \DefTirName{\RuleUnion}
      & \cbState,
      & \ctx{\assert\ljVal\cbVar{(\defCupC\conC\conD)}}
      & \reduce
      & \append\cbState{(\defCb\cbVar\defCupC{\cbVar_1}{\cbVar_2})},
      & \ctx{\assert{(\assert\ljVal{\cbVar_1}\conC)}{\cbVar_2}\conD}
      & \cbVar_1,\cbVar_2\not\in{\cbState}
      \\

      \DefTirName{\RuleIntersection}
      & \cbState,
      & \ctx{\assert\ljVal\cbVar{(\defCapC\conI\conC)}}
      & \reduce
      & \append\cbState{(\defCb\cbVar\defCapC{\cbVar_1}{\cbVar_2})},
      & \ctx{\assert{(\assert\ljVal{\cbVar_1}\conI)}{\cbVar_2}\conC}
      & \cbVar_1,\cbVar_2\not\in{\cbState}
      \\
      \\

      \DefTirName{\RuleDFunction}
      & \cbState,
      & \ctx{(\assert\ljV\cbVar{(\defFunC\conC\conD)})\, \ljW}
      & \reduce
      & \append\cbState{(\defCb\cbVar\defFunC{\cbVar_1}{\cbVar_2})},
      & \ctx{(\assert{\ljV\, (\assert\ljW{\cbVar_1}\conC))}{\cbVar_2}\conD}
      & \cbVar_1,\cbVar_2\not\in{\cbState}
      \\

      \DefTirName{\RuleDDependent}
      & \cbState,
      & \ctx{(\assert\ljV\cbVar{(\defDepC{\defAbsC\ljVar\conC})})\, \ljW}
      & \reduce
      & \cbState,
      & \ctx{(\assert{\ljV\, \ljW)}{\cbVar}(\conC\subst\ljVar\ljU)}
      & \ljU=\unwrap\ljW 
      \\

      \DefTirName{\RuleDIntersection}
      & \cbState,
      & \ctx{(\assert\ljV\cbVar{(\defCapC\conQ\conR)})\,\ljW}
      & \reduce
      & \append\cbState{(\defCb\cbVar\defCapC{\cbVar_1}{\cbVar_2})},
      & \ctx{(\assert{(\assert\ljV{\cbVar_1}\conQ)}{\cbVar_2}\conR)\, \ljW}
      & \cbVar_1,\cbVar_2\not\in{\cbState}

    \end{array}
  \end{displaymath}

  \begin{mathpar}
    \inferrule[\RuleBase]{%
      \ljM \reduce \ljN
    }{%
      \cbState,\ljM \reduce \cbState,\ljN
    }

    \inferrule[\RuleDOp]{%
      \ljU = \ljConst \mid  \ljFun
    }{%
      \cbState,\ctx{\ljOp(\vec\ljU,\,(\assert\ljV{\cbVar}\conQ),\,\vec\ljW)} \reduce
      \cbState,\ctx{\ljOp(\vec\ljU,\,\ljV,\,\vec\ljW)}
    }

    \inferrule[\RuleDIf]{%
    }{%
      \cbState,\ctx{\ljIf\,(\assert\ljV{\cbVar}\conQ)\,\ljM\,\ljN} \reduce
      \cbState,\ctx{\ljIf\,\ljV\,\ljM\,\ljN}
    }
  \end{mathpar}

  \caption{Operational semantics of $\lcon$.}
  \label{fig:lcon/semantics}
\end{figure*}

\begin{figure}[tp]
  \begin{displaymath}
    \unwrap{\ljV} =
    \begin{cases}
      \unwrap{\ljW} & \ljV = \assert\ljW\cbVar\conQ \\
      \ljV & \text{otherwise}
    \end{cases}
  \end{displaymath}
  \caption{Unwrap delayed contracts.}
  \label{fig:lcon/unwrap}
\end{figure}

Figure~\ref{fig:lcon/semantics} specifies the small-step reduction semantics of
$\Con$ as a relation $\cbState, \ljM \reduce \cbState', \ljN$ on pairs of a
constraint list and a term. Instead of raising blame exceptions, the rewriting
rules for contract enforcement generate constraints in $\cbState$: a failing
contract must not raise blame immediately, because it may be nested in an
intersection or a union. The sequence of elements in the constraint list
reflects the temporal order in which the constraints were generated during
reduction. The latest, youngest constraints are always on top of the list.
Appendix~\ref{sec:appendix/constraints} explains the semantics of the
constraints.

The rule \RefTirName{\RuleAssert} introduces a fresh name for each new
instantiation of a monitor in the source program. It is needed for technical
reasons. Flat contracts get evaluated immediately. Rule \RefTirName{\RuleFlat}
starts evaluating a flat contract by evaluating the predicate $\ljM$ applied to
the unwrapped subject value $\ljW$. Meta function $\unwrap\ljV$ (Figure
\ref{fig:lcon/unwrap}) first erases all contract monitors from the subject
value $\ljV$. After predicate evaluation, rule \RefTirName{\RuleUnit} picks up
the result and stores it in a constraint.

The second group of rules \RefTirName{\RuleUnion} and
\RefTirName{\RuleIntersection} implements contract checking of top-level union
and intersection contracts. Both rules install a new constraint that combine the
contract satisfaction of the subcontracts to the satisfaction of the current
contract.

The rules \RefTirName{\RuleDFunction}, \RefTirName{\RuleDDependent},
and \RefTirName{\RuleDIntersection} define the behavior of a contracted value under
function application. Rule \RefTirName{\RuleDFunction} handles a call to a value
with a function contract. Different from previous work, the blame computation is
handled indirectly by creating new blame variables for the domain and range
part; a new constraint is added that transforms the outcome of both portions
according to the specification of the function contract.

The \RefTirName{\RuleDDependent} rule defines the evaluation of a dependent
contract. It substitutes the function's arguments for every occurrence of
$\ljVar$ in the body of $\conC$ and proceeds with the assertion of the resulting
contract.

Rule \RefTirName{\RuleDIntersection} embodies the evaluation of an intersection
contract under function application. The generated constraint serves to combine
the results of the subcontracts. Unlike the union contract, an intersection
constraint occurs at each use of the contracted value, which implements the
choice of the context (cf.\ \cite{KeilThiemann2015-blame}).

Finally, $\J$ reductions are lifted to $\Con$ reductions using rule
\RefTirName{\RuleBase}. Built-in operations and conditions ``see through''
contracts (rules \RefTirName{\RuleDOp} and \RefTirName{\RuleDIf}) and proceed on
$\J$-values without contracts. This also implies that all base type operations
only return values that does not contain contracts.

% ___     _               _         _             ___ _                
%|_ _|_ _| |_ _ _ ___  __| |_  _ __(_)_ _  __ _  | _ ) |__ _ _ __  ___ 
% | || ' \  _| '_/ _ \/ _` | || / _| | ' \/ _` | | _ \ / _` | '  \/ -_)
%|___|_||_\__|_| \___/\__,_|\_,_\__|_|_||_\__, | |___/_\__,_|_|_|_\___|
%                                         |___/                        

\subsection{Introducing Blame}
\label{sec:lcon/blame}

The dynamics in Figure~\ref{fig:lcon/semantics} use constraints to create a
structure for computing positive and negative blame according to the semantics
of subject and context satisfaction, respectively. To this end, each blame
identifier $\cbIdent$ is associated with two truth values, $\subject{\cbIdent}$
and $\context{\cbIdent}$. Intuitively, if $\subject{\cbIdent}$ is false, then
the contract associated with $\cbIdent$ is not subject-satisfied and may lead to
positive blame for $\cbIdent$. If $\context\cbIdent$ is false, then there is a
context that does not respect contract $\cbIdent$ and may lead to negative blame
for $\cbIdent$. Appendix~\ref{sec:appendix/constraints} specifies the constraint
satisfaction for $\Con$.

Computing a blame assignment boils down to computing an interpretation for a
constraint list $\cbState$. To determine whether a constraint list $\cbState$ is
a blame state (i.e., whether it should signal a contract violation), we check
whether the state $\cbState$ maps any source-level blame label $\cbBlame$ to
$\ljFalse$.

\begin{definition}\label{def:blame-state}
  $\cbState$ is a \emph{blame state for blame label $\cbBlame$} iff
  $$
  \exists\cbBlame.\cbState(\subject\cbBlame \wedge \context\cbBlame) \sqsupseteq
  \ljFalse.
  $$
  $\cbState$ is a \emph{blame state} if there exists a blame label
  $\cbBlame$ such that $\cbState$ is a blame state for this label.
\end{definition}

To model reduction with blame, we define a new reduction relation $\cbState,
\ljM \topreduce \cbState', \ljN$ on configurations. It behaves like $\reduce$
unless $\cbState$ is a blame state. In a blame state, it stops signaling the
violation. There are no reductions with $\ljBlame$.

\begin{mathpar}
  \infer{%
    \exists\cbBlame.\cbState(\subject\cbBlame) \sqsupseteq \ljFalse
  }{%
    \cbState,\ljM \topreduce \cbState,\ljPBlame^\cbBlame
  }

  \infer{%
    \exists\cbBlame.\cbState(\context\cbBlame) \sqsupseteq \ljFalse
  }{%
    \cbState,\ljM \topreduce \cbState,\ljNBlame^\cbBlame
  }

  \infer{%
    \cbState,\ljM \reduce \cbState',\ljN\\\\
    \not\exists\cbBlame.\cbState(\subject\cbBlame \wedge \context\cbBlame) \sqsupseteq
    \ljFalse
  }{%
    \cbState,\ljM \topreduce \cbState',\ljN
  }
\end{mathpar}

% _                 ___ _    _                         _   ___         _      
%| |   __ ___ __   | _ (_)__| |___  _     __ _ _ _  __| | |_ _|_ _  __| |_  _ 
%| |__/ _` \ \ /_  |  _/ / _| / / || |_  / _` | ' \/ _` |  | || ' \/ _` | || |
%|____\__,_/_\_( ) |_| |_\__|_\_\\_, ( ) \__,_|_||_\__,_| |___|_||_\__,_|\_, |
%              |/                |__/|/                                  |__/ 
% ___                     _   _       
%/ __| ___ _ __  __ _ _ _| |_(_)__ ___
%\__ \/ -_) '  \/ _` | ' \  _| / _(_-<
%|___/\___|_|_|_\__,_|_||_\__|_\__/__/
                                     
\subsection{Lax, Picky, and Indy Semantics}
\label{sec:lcon/semantics}

Contract monitoring distinguishes three blame semantics: \emph{Lax},
\emph{Picky}, and \emph{Indy}, initially introduced to handle correctness and
completeness of higher-order dependent contracts\
\cite{DimoulasFindlerFlanaganFelleisen2011}. The different styles address the
point that a contract might violate another contract.

The \emph{Lax} semantics erase all contract monitors on values before applying
a value to a predicate. This is correct but not complete because it guarantees
that a well-behaved program gets never blamed for a violation taking place in a
predicate. However, it omits to report violations in predicates.

The \emph{Picky} semantics, in contrast, is complete but not correct. As it
retains contract monitors on the values it might wrongly blame the program for
violating a contract, but it guarantees that every violation is reported.

The \emph{Indy} semantics, an extension of \emph{Picky}, introduces a third
party (the contract itself) in addition to the exiting parties context and
subject. Passing a value to a predicate reorganizes the monitor such that 
wrong uses of that value blame the ill-behaved contract. 

Our optimization considers the \emph{Lax} style, as it strongly guarantees
correctness and completeness in respect to source program. As reorganizing
contracts changes the order of predicate checks optimizing is simpler when
using \emph{Lax} semantics.

Appendix~\ref{sec:appendix/picky} discusses our simplification with respect to
\emph{Picky} like evaluation semantics.
 
% ____                 _ _            
%|  _ \               | (_)           
%| |_) | __ _ ___  ___| |_ _ __   ___ 
%|  _ < / _` / __|/ _ \ | | '_ \ / _ \
%| |_) | (_| \__ \  __/ | | | | |  __/
%|____/ \__,_|___/\___|_|_|_| |_|\___|
%                                     
%                                     
%  ____        _   _           _          _   _             
% / __ \      | | (_)         (_)        | | (_)            
%| |  | |_ __ | |_ _ _ __ ___  _ ______ _| |_ _  ___  _ __  
%| |  | | '_ \| __| | '_ ` _ \| |_  / _` | __| |/ _ \| '_ \ 
%| |__| | |_) | |_| | | | | | | |/ / (_| | |_| | (_) | | | |
% \____/| .__/ \__|_|_| |_| |_|_/___\__,_|\__|_|\___/|_| |_|
%       | |                                                 
%       |_|                                                 

\section{Baseline Simplification}
\label{sec:baseline}

This section presents the \emph{Baseline Simplification}, the first tier in our
simplification stack. The baseline simplification encompass \begin{inparaenum}[(i)]
\item the unfolding of contract assertions and the creation of constraints where
  possible,
\item the compile-time evaluation of flat contracts (predicates) on values, and
\item the unrolling of delayed contracts to all uses of the contracted values.
\end{inparaenum} To this end, we introduce a set of canonical
(non-transformable) terms $\termT$ as a subset of $\Con$ terms $\ljM$ and extend
syntax and semantics of $\Con$ in some respects.

All baseline steps do exactly the same as the dynamics would later on do.
Thus, baseline simplification guarantees \emph{Strong Blame-preservation} and
\emph{Improvement} for all transformation steps.

% ___          _                           _ 
%/ __|_  _ _ _| |_ __ ___ __  __ _ _ _  __| |
%\__ \ || | ' \  _/ _` \ \ / / _` | ' \/ _` |
%|___/\_, |_||_\__\__,_/_\_\ \__,_|_||_\__,_|
%     |__/                                   
% ___                     _   _         ___     _               _          
%/ __| ___ _ __  __ _ _ _| |_(_)__ ___ | __|_ _| |_ ___ _ _  __(_)___ _ _  
%\__ \/ -_) '  \/ _` | ' \  _| / _(_-< | _|\ \ /  _/ -_) ' \(_-< / _ \ ' \ 
%|___/\___|_|_|_\__,_|_||_\__|_\__/__/ |___/_\_\\__\___|_||_/__/_\___/_||_|
                                                                          
\subsection{Syntax and Semantics Extension}
\label{sec:baseline/extension}

\begin{figure}[tp]
  \begin{displaymath}
    \begin{array}{lrl}
      \conC, \conD &\abc& \top \mid \bot\\
      \ljL, \ljM, \ljN &\abc& \assert\ljTerm\cbVar\conC\\
      \cbCstr &\abc& \defCb\cbIdent{\neg\cbVar} \\
    \end{array}
  \end{displaymath} 
  \caption{%
    Syntax extension of $\Con$.
  }\label{fig:lcon/syntax/extension}
\end{figure}

\begin{figure*}[tp]
  \begin{displaymath}
    \begin{array}{llllllr}

      % NOTE: Matthias Keil
      % Rule not required as all true-contract will be removed.
      \DefTirName{\RuleReduceTrue}
      & \cbState,
      & \ctx{\assert\ljV\cbVar\top}
      & \reduce
      & \append\cbState{(\defCb{\cbVar}{\ljTrue})},
      & \ctx{\ljV}
      &
      \\

      \DefTirName{\RuleReduceFalse}
      & \cbState,
      & \ctx{\assert\ljV\cbVar\bot}
      & \reduce
      & \append\cbState{(\defCb{\cbVar}{\ljFalse})},
      & \ctx{\ljV}
      &
      \\
      \\

      \DefTirName{\RuleDDomain}
      & \cbState,
      & \ctx{(\assert\ljV\cbVar{(\defFunC\conC\top)})\, \ljW}
      & \reduce
      & \append\cbState{(\defCb\cbVar{\neg\cbVar_1})},
      & \ctx{\ljV\, (\assert\ljW{\cbVar_1}\conC)}
      & \cbVar_1\not\in{\cbState}
      \\
      \DefTirName{\RuleDRange}
      & \cbState,
      & \ctx{(\assert\ljV\cbVar{(\defFunC\top\conD)})\, \ljW}
      & \reduce
      & \append\cbState{(\defCb\cbVar{\cbVar_1})},
      & \ctx{(\assert{\ljV\, \ljW)}{\cbVar_1}\conD}
      & \cbVar_1\not\in{\cbState} 
      \\
      \DefTirName{\RuleDFalse}
      & \cbState,
      & \ctx{(\assert\ljV\cbVar{(\defFunC\top\bot)})\, \ljW}
      & \reduce
      & \append\cbState{(\defCb{\cbVar}{\ljFalse})},
      & \ctx{\ljV\, \ljW}
      & 
      \\

    \end{array}
  \end{displaymath}
  \caption{Operational semantics extension of $\lcon$.}
  \label{fig:lcon/semantics/extension}
\end{figure*}

Figure~\ref{fig:lcon/syntax/extension} shows some syntax extension for $\Con$.

Contracts now contain $\top$ (true) and $\bot$ (false), shortcuts for predicates
evaluating to true or false, respectively. $\top$ and $\bot$ arise during
optimization and represent remaining knowledge about the given proposition.
Whereas $\top$ can be reduced statically, $\bot$ must remain in the source
program to ensure correct blame propagation.

The shortcuts are more efficient because none of them must be checked at
run time. When reaching a term with $\top$ or $\bot$ the outcome can immediately
be transfered into a constraint and must not be checked like a predicate.

Source terms $\ljM$ now contain partially evaluated contracts on source terms
$\assert\ljTerm\cbVar\conC$ and a new inversion constraint
$\defCb\cbIdent{\neg\cbVar}$ extends the set of constrains $\cbCstr$. The
inversion constraint flips context and subject responsibility
%(as defined in Figure~\ref{fig:lcon/inversion})
and corresponds to the negative side of a function constraint. This constraint
only arises during optimization.

Figure~\ref{fig:lcon/semantics/extension} specified the reduction of the
shortcuts. Rule\ \RefTirName{\RuleReduceTrue} and\ \RefTirName{\RuleReduceFalse}
creates a constraint for $\top$ and $\bot$. Three special rules,
\RefTirName{\RuleDDomain}, \RefTirName{\RuleDRange}, and
\RefTirName{\RuleDFalse} simplify the evaluation of function contracts with
$\top$ and $\bot$ on its domain or range portion. If one portion is true only
the other portion must be checked. All this function contrast arise only during
optimization.

%  ___                    _         _   _____                 
% / __|__ _ _ _  ___ _ _ (_)__ __ _| | |_   _|__ _ _ _ __  ___
%| (__/ _` | ' \/ _ \ ' \| / _/ _` | |   | |/ -_) '_| '  \(_-<
% \___\__,_|_||_\___/_||_|_\__\__,_|_|   |_|\___|_| |_|_|_/__/

\subsection{Canonical Terms and Contexts}
\label{sec:baseline/terms}

\begin{figure}[tp]
  \begin{displaymath}
    \begin{array}{lrl}

      \ctxF, \ctxG, \ctxH &\bbc& \ljHole \mid \lambda\ljVar.\ctxF \mid \ctxF\,\ljM \mid \termT\,\ctxF \mid
      \ljOp(\vec\termT\,\ctxF\,\vec\ljM)\\
      &\mid& \ljIf\,\ctxF\,\ljM\,\ljN \mid \ljIf\,\termT\,\ctxF\,\ljN \mid \ljIf\,\termT\,\termT\,\ctxF
      \mid \assert\ctxF\cbVar\conC\\
      \\

      \ctxV &\bbc& \ljHole \mid \assert\ctxV\cbVar\bot\\

    \end{array}
  \end{displaymath}

  \caption{%
    Transformation contexts.
  }\label{fig:baseline/syntax}
\end{figure}

Canonical (non-transformable) terms are the output of our transformation. A
canonical terms is either a contract-free term or a term with a contract that
cannot further transformed at this level.

In general, canonical terms distinguish terms without a contract on the
outermost level (indicated by letter $\termS$) from terms with a
non-transformable contract (indicated by letter $\termT$). Canonical terms are a
subset of $\Con$ terms $\ljM$. Appendix~\ref{sec:appendix/terms/baseline} shows
its definition.

A transformation context $\ctxF$, $\ctxG$, and $\ctxH$
(Figure~\ref{fig:baseline/syntax}) is defined as usual as a term with a hole and
a value context $\ctxV$ is a term with an indefinite number of remaining $\bot$
contracts. This terms result only during transformation.

% ___               _ _          
%| _ ) __ _ ___ ___| (_)_ _  ___ 
%| _ \/ _` (_-</ -_) | | ' \/ -_)
%|___/\__,_/__/\___|_|_|_||_\___|
%                                
% _____                  __                    _   _          
%|_   _| _ __ _ _ _  ___/ _|___ _ _ _ __  __ _| |_(_)___ _ _  
%  | || '_/ _` | ' \(_-<  _/ _ \ '_| '  \/ _` |  _| / _ \ ' \ 
%  |_||_| \__,_|_||_/__/_| \___/_| |_|_|_\__,_|\__|_\___/_||_|
                                                             
\subsection{Baseline Transformation}
\label{sec:baseline/transformation}

The baseline transformation $\cbState,\ljM\translateB\cbState',\ljN'$ is defined
on pairs of a constrains list and terms. As before, we write $\translateB^*$ for
its reflexive, transitive closure.

\begin{figure*}

  \begin{displaymath}
    \begin{array}{llllllr}

      \DefTirName{\RuleUnfoldAssert}
      & \cbState,
      & \fctx{\topassert\termT\cbBlame\conC}
      & \translateB
      & \append\cbState{(\defCb{\cbBlame}{\cbVar})},
      & \fctx{\assert\termT\cbVar\conC}
      & \cbVar\not\in{\cbState}
      \\
      \\

      \DefTirName{\RuleUnfoldUnion}
      & \cbState,
      & \fctx{\assert\termT\cbVar{(\defCupC\conC\conD)}}
      & \translateB
      & \append\cbState{(\defCb\cbVar\defCupC{\cbVar_1}{\cbVar_2})},
      & \fctx{\assert{(\assert\termT{\cbVar_1}\conC)}{\cbVar_2}\conD}
      & \cbVar_1,\cbVar_2\not\in{\cbState}
      \\

      \DefTirName{\RuleUnfoldIntersection}
      & \cbState,
      & \fctx{\assert\termT\cbVar{(\defCapC\conI\conC)}}
      & \translateB
      & \append\cbState{(\defCb\cbVar\defCapC{\cbVar_1}{\cbVar_2})},
      & \fctx{\assert{(\assert\termT{\cbVar_1}\conI)}{\cbVar_2}\conC}
      & \cbVar_1,\cbVar_2\not\in{\cbState}
      \\

      \DefTirName{\RuleUnfoldOp}
      & \cbState,
      & \fctx{\ljOp(\vec\termTI,\,(\assert\termT{\cbVar}\conQ),\,\vec\termTQ)}
      & \translateB
      & \append\cbState{(\defCb{\cbVar}{\ljTrue})},
      & \fctx{\ljOp(\vec\termTI,\,\termT,\,\vec\termTQ)}
      &
      \\
      \\

      \DefTirName{\RuleUnfoldDFunction}
      & \cbState,
      & \fctx{(\assert{\termT_1}\cbVar{(\defFunC\conC\conD)})\, \termT_2}
      & \translateB
      & \append\cbState{(\defCb\cbVar\defFunC{\cbVar_1}{\cbVar_2})},
      & \fctx{(\assert{\termT_1\, (\assert{\termT_2}{\cbVar_1}\conC))}{\cbVar_2}\conD}
      & \cbVar_1,\cbVar_2\not\in{\cbState}
      \\
      
      \DefTirName{\RuleUnfoldDIntersection}
      & \cbState,
      & \fctx{(\assert{\termT_1}\cbVar{(\defCapC\conC\conD)})\,\termT_2}
      & \translateB
      & \append\cbState{(\defCb\cbVar\defCapC{\cbVar_1}{\cbVar_2})},
      & \fctx{(\assert{(\assert{\termT_1}{\cbVar_1}\conC)}{\cbVar_2}\conD)\,\termT_2)}
      & \cbVar_1,\cbVar_2\not\in{\cbState}
      \\
      \\

      \DefTirName{\RuleUnroll}
      & \cbState,
      & \fctx{\inV{\lambda\ljVar.\termS}\,(\assert\termT\cbVar\conQ)}
      & \translateB
      & \cbState
      & \fctx{\inV{\lambda\ljVar.\termS[\ljVar\mapsto(\assert\ljVar\cbVar\conQ)]}\,\termT} 
      & 
      \\

      \DefTirName{\RuleLower}
      & \cbState,
      & \fctx{\lambda\ljVar.(\assert\termT\cbVar\conC)}
      & \translateB
      & \cbState
      & \fctx{\assert{(\lambda\ljVar.\termT)}\cbVar{(\defFunC\top\conC)}}
      & 
      \\
      \\

      \DefTirName{\RuleReverseI}
      & \cbState,
      & \fctx{\assertWith{(\assertWith\termT{\cbBlame_1}{\cbVar_1}\conQ)}{\cbBlame_2}{\cbVar_2}\conI}
      & \translateB
      & \cbState
      & \fctx{\assertWith{(\assertWith\termT{\cbBlame_2}{\cbVar_2}\conI)}{\cbBlame_1}{\cbVar_1}\conQ}
      & 
      \\
      \DefTirName{\RuleReverseFalse}
      & \cbState,
      & \fctx{\assertWith{(\assertWith\termT{\cbBlame_1}{\cbVar_1}\conQ)}{\cbBlame_2}{\cbVar_2}\perp}
      & \translateB
      & \cbState
      & \fctx{\assertWith{(\assertWith\termT{\cbBlame_2}{\cbVar_2}\perp)}{\cbBlame_1}{\cbVar_1}\conQ}
      & 
      \\
      \DefTirName{\RuleReverseIf}
      & \cbState,
      & \fctx{(\ljIf\,\termT_0\,(\assert{\termT_1}{\cbVar}{\conC})\,(\assert{\termT_2}{\cbVar}{\conC}))}
      & \translateB
      & \cbState
      & \fctx{\assert{(\ljIf\,\termT_0\,\termT_1\,\termT_2)}{\cbVar}{\conC}}
      & 
      \\
      \\    

      \DefTirName{\RuleConvertTrue}
      & \cbState,
      & \fctx{\assert\termT\cbVar\top}
      & \translateB
      & \append\cbState{(\defCb{\cbVar}{\ljTrue})},
      & \fctx{\termT}
      &
      \\

    \end{array}
  \end{displaymath}

\begin{mathpar}
  \inferrule[\RuleVerifyTrue]{%
    \emptystate,\ljM\,\ljV \reduce \cbState,\ljW\\
    \makeTruth{\ljW} = \ljTrue
  }{%
    \cbState,\fctx{\assert{\inV\termSVal}\cbVar{\defFlatC{\ljM}}}
    \translateB
    \cbState,\fctx{\assert{\inV\termSVal}\cbVar\top}
  }

  \inferrule[\RuleVerifyFalse]{%
    \emptystate,\ljM\,\ljV \reduce \cbState,\ljW\\
    \makeTruth{\ljW} = \ljFalse
  }{%
    \cbState,\fctx{\assert{\inV\termSVal}\cbVar{\defFlatC{\ljM}}}
    \translateB
    \cbState,\fctx{\assert{\inV\termSVal}\cbVar\bot}
  }
\end{mathpar}

\caption{Baseline transformation rules.}
\label{fig:baseline/transformation}
\end{figure*}

Figure~\ref{fig:baseline/transformation} specifies the baseline transformation
rules for $\Con$ terms $\ljM$. The first rules, \RefTirName{\RuleUnfoldAssert},
\RefTirName{\RuleUnfoldUnion}, and \RefTirName{\RuleUnfoldIntersection}, unfold
contracts and create constraints equivalent to the dynamics in
Subsection\ \ref{sec:lcon/semantics}.

Rule \RefTirName{\RuleUnfoldOp} unpacks a delayed contract on a term used in a
base type operation. As this contract is never checked, it is true by definition
(cf.\ \cite{KeilThiemann2015-blame}). Here, terms $\termTI$ and $\termTQ$ are
canonical terms with a possible immediate or delayed contract.

Rules \RefTirName{\RuleUnfoldDFunction} and
\RefTirName{\RuleUnfoldDIntersection} unfold a delayed contract that appears
left in an application corresponding to the dynamics. However, a function
contract unfolds the domain portion to the term on the right, which is not
necessarily a value.

Rule \RefTirName{\RuleUnroll} unrolls a delayed contract on a term $\termT$
right in an application. This step removes the contract from $\termT$ and
grafts it to all uses of $\termT$. Even though this steps duplicates the number
of contract assertions, we do not have more contract checks.

Rule \RefTirName{\RuleLower} create a new function contract by lowering the
contract on the function's body. The new function contract contains $\top$ on
it's domain portion as it accepts every argument. 

Rules \RefTirName{\RuleReverseI} and \RefTirName{\RuleReverseFalse} reverse the
order of a delayed contract nested in an assertion with an immediate contract or
$\bot$. This step is required to evaluate immediate contracts on values and to
unfold or unroll function's contracts.

Rule \RefTirName{\RuleReverseIf} pushes a contract out of a condition if the
contracts appears on both branches of the condition.

Rule \RefTirName{\RuleConvertTrue} reduces all occurrences of $\top$ and produces
a constraint that maps $\ljTrue$ to the corresponding blame variable $\cbVar$.
Knowledge about satisfied contracts can safely be removed from the source
program because they do not compromise the blame assignment.\footnote{%
  Due to the specification of contract satisfaction it is also possible to
  remove $\top$ from the source program without creating a constraint. A 
  non-existing constraint is handled as true because every non-violated
  contract assertion is satisfied, either because it is fulfilled or never
  verified.
}

In contrast to $\top$, $\bot$ (the knowledge about a failing contract) must
remain in the source program and cannot be translated into a constraint
statically. Doing this would over-approximate contract failures as the
constrain might lead to a blame state, without knowing if the ill-behaved
sources is ever executed. However, over-approximating violations might be
intended, as Appendix\ \ref{sec:appendix/success} demonstrates.

Finally, rules \RefTirName{\RuleVerifyTrue} and \RefTirName{\RuleVerifyTrue}
evaluate flat contracts on values that are available at compile time and
transform the outcome of the evaluation into $\top$ or $\bot$. Values might be
nested in a number of $\bot$ contracts of previously checked predicates.

\begin{lemma}[Progress]\label{thm:baseline/progress}
  For all terms $\ljM$ it holds that either $\ljM$ is a canonical term
  $\termT$ or there exists a transformation step
  $\cbState,\ljM\translateB\cbState',\ljN$.
\end{lemma}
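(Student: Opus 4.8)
The plan is to prove the dichotomy by structural induction on $\ljM$, exploiting the fact that every rule in Figure~\ref{fig:baseline/transformation} is stated as the filling of a transformation context, i.e.\ both sides have the shape $\fctx{\cdot}$. Consequently $\translateB$ is the compatible (congruence) closure of a finite family of redex schemata, and it suffices to establish the reformulation: either $\ljM$ is canonical, or $\ljM=\fctx{R}$ for some $\ctxF$ and some $R$ matching the hole content of one of the rules. Since the canonical terms $\termT$ are by construction the $\Con$ terms built without such a fillable redex, the whole argument reduces to an exhaustiveness check: the canonical grammar of Appendix~\ref{sec:appendix/terms/baseline} together with the rule left-hand sides must cover every term shape.

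First I would discharge the congruence cases. If an immediate subterm $\ljN$ of $\ljM$ occupies a position admitted for the hole by the grammar of $\ctxF$ in Figure~\ref{fig:baseline/syntax}---an abstraction body, an operand of an application, an argument of a primitive operation, a branch of a conditional, or the subject of a contract assertion---then the induction hypothesis applies; if $\ljN$ is not canonical it is $\gctx{R}$, and the one-level frame surrounding $\ljN$ composes with $\ctxG$ into a transformation context for $\ljM$, so the same redex $R$ witnesses a step $\cbState,\ljM\translateB\cbState',\ljN'$. The only care needed here is that $\ctxF$ fixes a left-to-right discipline (e.g.\ the operator of an application, and the operands to the left of the hole in $\ljOp(\vec\termT\,\ctxF\,\vec\ljM)$, must already be canonical before the hole may descend further), so the hypothesis is applied to the \emph{leftmost} not-yet-canonical subterm. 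These cases are otherwise routine because $\ctxF$ is closed under the term constructors and hence under composition.

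The substance of the proof is the case in which every subterm reachable through $\ctxF$ is already canonical, so any step must arise at the outermost level. Here I would case-split on the head constructor of $\ljM$ and, where a contract is present, on its kind. A top-level source assertion $\topassert\termT\cbBlame\conC$ fires \RefTirName{\RuleUnfoldAssert}; an assertion carrying a union or a top-level intersection fires \RefTirName{\RuleUnfoldUnion} or \RefTirName{\RuleUnfoldIntersection}; an assertion $\assert\termT\cbVar\top$ fires \RefTirName{\RuleConvertTrue}. An application whose operator is a value wrapped in a delayed function or intersection contract fires \RefTirName{\RuleUnfoldDFunction} or \RefTirName{\RuleUnfoldDIntersection}, while an application of a (possibly $\bot$-wrapped) abstraction $\inV{\lambda\ljVar.\termS}$ to a delayed-contracted argument fires \RefTirName{\RuleUnroll}. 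A primitive operation with a delayed-contracted operand fires \RefTirName{\RuleUnfoldOp}; an abstraction whose body carries a contract fires \RefTirName{\RuleLower}; a nesting of a delayed contract inside an immediate contract or inside $\bot$ fires \RefTirName{\RuleReverseI} or \RefTirName{\RuleReverseFalse}; and a conditional carrying the same contract on both branches fires \RefTirName{\RuleReverseIf}. Everything left over---constants, variables, blame, the intermediate $\ljEval$-forms, and assertions whose subject and contract are already in non-redex position---belongs to the canonical grammar. The bulk of the work is confirming that this catalogue is jointly exhaustive over the contract classification (immediate $\conI$, delayed $\conQ$, union, intersection, and the shortcuts $\top$, $\bot$) crossed with the admissible term positions.

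I expect the main obstacle to be the single case of a flat contract applied to a value, $\assert{\inV\termSVal}\cbVar{\defFlatC{\ljM}}$. This is the only place where progress is not purely syntactic: \RefTirName{\RuleVerifyTrue} and \RefTirName{\RuleVerifyFalse} fire only under the side condition that the predicate application $\ljM\,\ljV$ reduces to a value $\ljW$ from which $\makeTruth{\ljW}$ yields a truth value. If the predicate diverges, neither rule applies, so for the dichotomy to hold the canonical grammar must classify exactly those flat-on-value terms whose predicate does not converge as canonical. The delicate step is therefore to show that the syntactic canonical definition and these two semantically guarded rules jointly partition the flat-contract-on-value forms; once this is verified, the context-composition argument for the congruence cases and the exhaustiveness tabulation carry no hidden difficulty.
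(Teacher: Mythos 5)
Your proposal takes essentially the same route as the paper: the paper's entire proof of this lemma reads ``Immediate from the definition of canonical and non-canonical terms,'' resting on the appendix lemma that every term $\ljM$ is either canonical ($\termT$) or matches one of the non-canonical patterns ($\termO$), which are by construction either a term containing a non-canonical subterm (your congruence cases) or a left-hand side of a transformation rule (your outermost-redex catalogue). Your induction-plus-exhaustiveness argument is precisely the detail that this one-liner suppresses, so the approaches coincide. The obstacle you flag in your final paragraph, however, is genuine, and it is a gap in the paper rather than in your plan: the paper's non-canonical grammar lists $\assert{\inV{\ljVal}}\cbVar\conI$ unconditionally, yet \RefTirName{\RuleVerifyTrue} and \RefTirName{\RuleVerifyFalse} fire only under the semantic premise that the predicate application reduces to a value $\ljW$ with $\makeTruth{\ljW}$ defined. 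For a diverging or stuck predicate, such a term is neither canonical nor reducible, so the dichotomy as stated fails; the paper never addresses this, and its progress lemma is ``immediate'' only under an implicit assumption that predicate evaluation at transformation time terminates. Making that assumption explicit (or reclassifying non-converging flat-on-value terms as canonical) is exactly what would be needed to complete both your proof and the paper's.
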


\subsection{Strong Blame-preservation}
\label{sec:baseline/blame-preservation}

To sum up, the Baseline Transformation makes exactly the same transformation
steps at compile time, than the dynamics would later on do. Thus, $\translateB$
preserves the evaluation behavior of our program.

\begin{conjecture}[Strong Blame-preservation]\label{thm:strong-preservation}
  For all transformations $\cbState_M,\ljM \translateB \cbState_N,\ljN$ it hold
  that either $\cbState_M,\ljM \topreduce \cbState_M',\ljVal$ and
  $\cbState_N,\ljN \topreduce \cbState_N',\ljVal$ or that $\cbState_M,\ljM
  \topreduce \cbState_M'',\ljBlame^\cbBlame$ and $\cbState_N,\ljN \topreduce
  \cbState_N'',\ljBlame^\cbBlame$.
\end{conjecture}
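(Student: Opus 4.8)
The plan is to prove this by showing that each single baseline transformation step $\cbState_M,\ljM \translateB \cbState_N,\ljN$ mirrors either a reduction step of the dynamics or a semantically inert rearrangement, and then to lift this to a bisimulation between $\translateB$ and the blame-producing reduction $\topreduce$. The essential observation, stated informally in the section introduction, is that every baseline rule ``does exactly the same as the dynamics would later do.'' Concretely, I would partition the transformation rules into three classes and treat each separately: \textbf{(a)} rules that exactly replay a dynamics step on a canonical term in a transformation context (\RefTirName{\RuleUnfoldAssert}, \RefTirName{\RuleUnfoldUnion}, \RefTirName{\RuleUnfoldIntersection}, \RefTirName{\RuleUnfoldOp}, \RefTirName{\RuleUnfoldDFunction}, \RefTirName{\RuleUnfoldDIntersection}, \RefTirName{\RuleConvertTrue}), which emit the identical constraint that \RefTirName{\RuleAssert}, \RefTirName{\RuleUnion}, \RefTirName{\RuleIntersection}, \RefTirName{\RuleDOp}, \RefTirName{\RuleDFunction}, \RefTirName{\RuleDIntersection}, \RefTirName{\RuleReduceTrue} would generate at run time; \textbf{(b)} the structural rearrangement rules (\RefTirName{\RuleUnroll}, \RefTirName{\RuleLower}, \RefTirName{\RuleReverseI}, \RefTirName{\RuleReverseFalse}, \RefTirName{\RuleReverseIf}) that leave $\cbState$ unchanged and produce a term whose eventual reduction generates the same multiset of constraints in the same temporal order; and \textbf{(c)} the predicate-evaluation rules (\RefTirName{\RuleVerifyTrue}, \RefTirName{\RuleVerifyFalse}), whose premises import a genuine $\reduce$-derivation of the predicate, so that replacing $\defFlatC{\ljM}$ by $\top$ or $\bot$ anticipates exactly the constraint that \RefTirName{\RuleFlat} followed by \RefTirName{\RuleUnit} would deposit.

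The key technical device I would introduce is a \emph{simulation invariant} relating a pre-transformation configuration $(\cbState_M,\ljM)$ to a post-transformation configuration $(\cbState_N,\ljN)$: namely that any terminating $\topreduce$-run from the former is matched, step for step up to the reordering of manifestly-commuting constraint insertions, by a run from the latter reaching a configuration with the same underlying $\J$-value and an equivalent constraint list. Here ``equivalent'' must mean that the two constraint lists induce the same truth values $\subject\cbBlame$ and $\context\cbBlame$ for every source-level label $\cbBlame$, so that by Definition~\ref{def:blame-state} the two runs agree on whether a blame state is reached and, if so, on the first offending label. I would prove that each rule class preserves this invariant. For class (a) the matching is a literal one-step correspondence. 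For class (c) it follows because the imported reduction $\emptystate,\ljM\,\ljV \reduce \cbState,\ljW$ computes the same predicate outcome that the dynamics compute, and the resulting $\top$/$\bot$ reduces via \RefTirName{\RuleReduceTrue}/\RefTirName{\RuleReduceFalse} to the constraint $\defCb{\cbVar}{\ljTrue}$ or $\defCb{\cbVar}{\ljFalse}$ that \RefTirName{\RuleUnit} would have produced from $\makeTruth{\ljW}$.

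The main obstacle is class (b), the structural rearrangements, since these are precisely the steps that do \emph{not} correspond to a single dynamics reduction and yet must be shown blame-transparent. For \RefTirName{\RuleUnroll} one must argue that copying the delayed assertion $\assert\ljVar\cbVar\conQ$ to every bound occurrence of $\ljVar$ is sound: although the contract assertion is syntactically duplicated, $\conQ$ is a delayed (function/dependent/intersection) contract, so it generates no constraint until the value is applied, and each application triggers the very same \RefTirName{\RuleDFunction}-style constraint under the same blame variable $\cbVar$ that it would have triggered before unrolling; thus the generated constraints coincide and the ``no increase in checks'' guideline holds because at most one copy is ever exercised per dynamic use. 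For \RefTirName{\RuleReverseI}, \RefTirName{\RuleReverseFalse}, and \RefTirName{\RuleReverseIf} the difficulty is a local commutation argument: swapping an immediate (or $\bot$) assertion past a delayed one, or hoisting a shared contract out of an \ljIf, permutes two constraint insertions that are independent (they concern distinct blame variables, and $\conQ$ contributes nothing until application), so the induced values $\subject\cbBlame,\context\cbBlame$ are unchanged. I would package these as a commuting-diamond lemma: independent constraint insertions commute in their effect on the blame-state predicate.

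Finally I would discharge the theorem by induction on the length of a $\translateB^*$ derivation, using the single-step invariant as the inductive step and the reflexivity of the invariant as the base case; the two cases of the conclusion (agreement on a value $\ljVal$ versus agreement on a blame outcome $\ljBlame^\cbBlame$) then follow by case-splitting on whether the common $\topreduce$-normal form is a value or a blame signal, invoking Definition~\ref{def:blame-state} to transfer the blame label. I would expect the delicate part of the write-up to be stating the constraint-list equivalence precisely enough that the commutation lemma for class (b) goes through, while remaining coarse enough that the literal correspondence for class (a) is trivially an instance of it; getting this definition right is where I anticipate spending most of the effort, and it is presumably why the statement is phrased as a \textbf{Conjecture} rather than a proven \textbf{Theorem}.
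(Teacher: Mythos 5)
Your proposal is correct and shares the paper's overall strategy---a rule-by-rule simulation of $\translateB$ against the dynamics, lifted by induction---but it rests on a genuinely different key device. The paper (Appendix, proof sketch for Conjecture~\ref{thm:strong-preservation}) works with a \emph{syntactic} congruence of configurations up to blame-variable renaming (Definition~\ref{def:cbeq}), supported by a lockstep-reduction lemma (Lemma~\ref{thm:cong}) and an effect-free state-extension lemma (Lemma~\ref{thm:effect-free}); it then discharges the unfold rules by explicit reduction sequences and dismisses the rearrangement rules (\RefTirName{\RuleUnroll}, \RefTirName{\RuleLower}, \RefTirName{\RuleReverseI}, \RefTirName{\RuleReverseFalse}, \RefTirName{\RuleReverseIf}, \RefTirName{\RuleConvertTrue}) as ``immediate'' from the corresponding dynamics rules. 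You instead use a \emph{semantic} equivalence of constraint lists (same induced $\subject\cbBlame$ and $\context\cbBlame$ for every source label) together with a commutation lemma for independent constraint insertions. Two observations on the trade-off. First, your coarser equivalence is actually \emph{necessary} for cases the paper's sketch silently omits: \RefTirName{\RuleUnfoldOp} emits $\defCb\cbVar\ljTrue$ where the dynamics rule \RefTirName{\RuleDOp} emits no constraint at all, and \RefTirName{\RuleVerifyTrue} yields $\defCb\cbVar\ljTrue$ where \RefTirName{\RuleUnit} would record the raw predicate result $\ljW$ (any non-false value); neither pair is equal under syntactic congruence, but both are equal under your truth-value semantics. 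Second, your commutation lemma for class (b) is more machinery than the baseline tier needs: as you yourself note parenthetically, a delayed contract contributes no constraint until application, so \RefTirName{\RuleReverseI} and friends do not in fact permute any insertions, and \RefTirName{\RuleUnroll} produces a term that is \emph{literally identical} after the next beta step---which is exactly why the paper can treat these cases as immediate and why reordering concerns are deferred to the subset tier, where only weak preservation is claimed. In short, the paper's finer notion makes the easy cases trivial but does not cover all rules; your coarser notion covers all rules uniformly (and would transfer verbatim to the weak-preservation conjecture) at the cost of setting up the homomorphic extension of constraint satisfaction precisely.
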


%  _____       _              _   
% / ____|     | |            | |  
%| (___  _   _| |__  ___  ___| |_ 
% \___ \| | | | '_ \/ __|/ _ \ __|
% ____) | |_| | |_) \__ \  __/ |_ 
%|_____/ \__,_|_.__/|___/\___|\__|
%                                 
%                                 
%  ____        _   _           _          _   _             
% / __ \      | | (_)         (_)        | | (_)            
%| |  | |_ __ | |_ _ _ __ ___  _ ______ _| |_ _  ___  _ __  
%| |  | | '_ \| __| | '_ ` _ \| |_  / _` | __| |/ _ \| '_ \ 
%| |__| | |_) | |_| | | | | | | |/ / (_| | |_| | (_) | | | |
% \____/| .__/ \__|_|_| |_| |_|_/___\__,_|\__|_|\___/|_| |_|
%       | |                                                 
%       |_|                                                 

\section{Subset Simplification}
\label{sec:subset}

The \emph{Subset Simplification} is the second tier in our transformation stack.
Its objective is to reuse knowledge of previously checked contracts to avoid
redundant checks whenever possible and to propagate contract violations to the
surrounding module boundary. 

The transformation first branches every alternative given by intersection and
union contracts into an individual observation on which every contract must be
fulfilled. Later it reduces contracts that are less restrictive and subsumed by
other contracts. Finally, it joins remaining fragments to a new contract where
possible.

To this end we introduce a subcontract relation which is closely related do
already existing definitions of naive subtyping. Subcontracting claims a
contract to be more restrictive than another contract.

% ___            _ _         _          
%| _ \_ _ ___ __| (_)__ __ _| |_ ___ ___
%|  _/ '_/ -_) _` | / _/ _` |  _/ -_|_-<
%|_| |_| \___\__,_|_\__\__,_|\__\___/__/

\subsection{Predicates}
\label{sec:subset/predicates}

To achieve the best possible result it requires to have contracts and predicates
as fine grained as possible. Intersection and union contracts enable to build
complex contracts from different sub-contracts. Distinct properties can be
written in distinct predicates (flat contracts) and combined using intersection
and union.

For example, a contract that checks for positive even numbers can be written as
the intersection of one contract that checks for positive numbers and one
contract that checks for even numbers: 
$$
\defCapC{\defFlatC{\lambda\ljVar.(x >= 0)}}{\defFlatC{\lambda\ljVar.(x\%2 = 0)}}
$$
However, there is a twist. Fine-grained sub-contracts enable contracts be
optimized efficiently. But, at evaluation time, they must be handled in
separation such that we result in more predicate checks.

For example, consider the following contract that checks for natural numbers:
$$\defFlatC{\lambda\ljVar.(x >= 0)}$$ Obviously, one can write a similar
contract like this: $$\defCupC{\defFlatC{\lambda\ljVar.(x >
0)}}{\defFlatC{\lambda\ljVar.(x = 0)}}$$
The presence of intersection and union contracts enables developers to write
fine grained contracts, which, in turn, enable us to identify redundant parts
and to reduce already checked properties.

To avoid a run time impact cause by fine-grained predicates, we can easily
conjunct or disjunct predicates of the same blame label after optimization. This
is possible because the intersection and union of flat contracts corresponds to
the conjunction and disjunction of predicates (cf.
\cite{KeilThiemann2015-blame}).

%  ___         _               _     ___      _            _      
% / __|___ _ _| |_ _ _ __ _ __| |_  / __|_  _| |__ ___ ___| |_ ___
%| (__/ _ \ ' \  _| '_/ _` / _|  _| \__ \ || | '_ (_-</ -_)  _(_-<
% \___\___/_||_\__|_| \__,_\__|\__| |___/\_,_|_.__/__/\___|\__/__/                                                                

\subsection{Contract Subsets}
\label{sec:subset/subset}

When optimizing contracts, we use subcontracting to characterize when a contract
is more restrictive than another contract, i.e.\ it subsumes all of its
obligations. The definition of subcontracting is closely related to naive
subtyping from the literature\ \cite{WadlerFindler2009}.

We write $\ConNSubseteq\conC\conD$ if $\conC$ is more restrictive than $\conD$,
i.e. $\ctx{\topassert\ljM\cbBlame\conC}\reduce^*\ljVal$ implies that
$\ctx{\topassert\ljM\cbBlame\conD}\reduce^*\ljVal$ and 
$\ctx{\topassert\ljM\cbBlame\conD}\reduce^*\blame\cbBlame$ implies that 
$\ctx{\topassert\ljM\cbBlame\conC}\reduce^*\blame\cbBlame$.

As a contract specifies the interface between a subject and its enclosing
context, subcontracting must be covariant for both parties: context \textbf{and}
subject.

For example, let \lstinline{Positive? = flat(,\x.x>0)} and 
\lstinline{Natural? = flat(,\x.x>=0)}. Than \lstinline{Positive?}
$\sqsubseteq^*$ \lstinline{Natural?} as for all \lstinline{x} it holds that
\lstinline{x>0} implies \lstinline{x>=0}.

For another example, \lstinline{Positive? -> Positive?} is a subcontract of
\lstinline{Natural? -> Natural?} as it is more restrictive on both portions.
Furthermore, \lstinline{Positive? -> Natural?} and
\lstinline{Natural? -> Positive?} are further subcontracts of
\lstinline{Positive? -> Positive?}.

In order to specify our subcontracting judgement, we factor subcontracting into
two subsidiary relations: subject subcontracting, written
$\ConSubseteqSubject\conC\conD$, and context subcontracting, written
$\ConSubseteqContext\conC\conD$. Relation $\ConSubseteqSubject\conC\conD$ indicates
that the subject portion in $\conC$ is more restrictive than the subject portion
in $\conD$, whereas $\ConSubseteqContext\conC\conD$ indicates that $\conC$
restricts the context more than $\conD$.

In contrast to Wadler and Findler's definition of positive and negative
subtyping (cf.\ \cite{WadlerFindler2009}), which is related to ordinary
subtyping, our definition of positive and negative subcontracting is reversal
because its is related to the naive subcontracting. We write
$\ConSubseteqContext\conC\conD$ if $\conC$ is more restrictive than $\conD$,
whereas Wadler and Findler write $\ConSubseteqContext\conD\conC$.

\begin{figure}
  \paragraph{Context Subcontract}
  \begin{mathpar}
    \inferrule[]{%
    }{%
      \ConSubseteqContext{\conC}{\conC}
    }

    \inferrule[]{%
    }{%
      \ConSubseteqContext{\conI}{\conJ}
    }

    \inferrule[]{%
      \ConSubseteqSubject{\conC_D}{\conD_D}\\
      \ConSubseteqContext{\conC_R}{\conD_R}
    }{%
      \ConSubseteqContext{\defFunC{\conC_D}{\conC_R}}{\defFunC{\conD_D}{\conD_R}}
    }

    \inferrule[]{%
      \ConSubseteqContext{\conC_D}{\conD_D}
    }{%
      \ConSubseteqContext{\defDepC{\defAbsC\ljVar\conC}}{\defDepC{\defAbsC\ljVar\conD}}
    }

    \inferrule[]{%
      \ConSubseteqContext{\conC_L}{\conD}\\ 
      \ConSubseteqContext{\conC_R}{\conD}
    }{%
      \ConSubseteqContext{\defCapC{\conC_L}{\conC_R}}{\conD}
    }

    \inferrule[]{%
      \ConSubseteqContext{\conC}{\conD_L} 
    }{%
      \ConSubseteqContext{\conC}{\defCapC{\conD_L}{\conD_R}}
    }

    \inferrule[]{%
      \ConSubseteqContext{\conC}{\conD_R}
    }{%
      \ConSubseteqContext{\conC}{\defCapC{\conD_L}{\conD_R}}
    }

    \inferrule[]{%
      \ConSubseteqContext{\conC_L}{\conD} 
    }{%
      \ConSubseteqContext{\defCupC{\conC_L}{\conC_R}}{\conD}
    }

    \inferrule[]{%
      \ConSubseteqContext{\conC_R}{\conD}
    }{%
      \ConSubseteqContext{\defCupC{\conC_L}{\conC_R}}{\conD}
    }

    \inferrule[]{%
      \ConSubseteqContext{\conC}{\conD_L}\\ 
      \ConSubseteqContext{\conC}{\conD_R}
    }{%
      \ConSubseteqContext{\conC}{\defCupC{\conD_L}{\conD_R}}
    }
  \end{mathpar}
  \vspace{\baselineskip}
  \paragraph{Subject Subcontract}
  \begin{mathpar}
    \inferrule[]{%
    }{%
      \ConSubseteqSubject{\conC}{\conC}
    }

    \inferrule[]{%
    }{%
      \ConSubseteqSubject{\conC}{\top}
    }

    \inferrule[]{%
    }{%
      \ConSubseteqSubject{\bot}{\conD}
    }

    \inferrule[]{%
      \ConSubsetTerm\ljM\ljN
    }{%
      \ConSubseteqSubject{\defFlatC{\ljM}}{\defFlatC{\ljN}}
    }

    \inferrule[]{%
      \ConSubseteqContext{\conC_D}{\conD_D}\\
      \ConSubsetSubject{\conC_R}{\conD_R}
    }{%
      \ConSubseteqSubject{\defFunC{\conC_D}{\conC_R}}{\defFunC{\conD_D}{\conD_R}}
    }

    \inferrule[]{%
      \ConSubseteqSubject{\conC_D}{\conD_D}
    }{%
      \ConSubseteqSubject{\defDepC{\defAbsC\ljVar\conC}}{\defDepC{\defAbsC\ljVar\conD}}
    }

    \inferrule[]{%
      \ConSubseteqSubject{\conC_L}{\conD}
    }{%
      \ConSubseteqSubject{\defCapC{\conC_L}{\conC_R}}{\conD}
    }

    \inferrule[]{%
      \ConSubseteqSubject{\conC_R}{\conD}
    }{%
      \ConSubseteqSubject{\defCapC{\conC_L}{\conC_R}}{\conD}
    }

    \inferrule[]{%
      \ConSubseteqSubject{\conC}{\conD_L}\\ 
      \ConSubseteqSubject{\conC}{\conD_R}
    }{%
      \ConSubseteqSubject{\conC}{\defCapC{\conD_L}{\conD_R}}
    }

    \inferrule[]{%
      \ConSubseteqSubject{\conC_L}{\conD}\\ 
      \ConSubseteqSubject{\conC_R}{\conD}
    }{%
      \ConSubseteqSubject{\defCupC{\conC_L}{\conC_R}}{\conD}
    }

    \inferrule[]{%
      \ConSubseteqSubject{\conC}{\conD_L}
    }{%
      \ConSubseteqSubject{\conC}{\defCupC{\conD_L}{\conD_R}}
    }

    \inferrule[]{%
      \ConSubseteqSubject{\conC}{\conD_R} 
    }{%
      \ConSubseteqSubject{\conC}{\defCupC{\conD_L}{\conD_R}}
    }

  \end{mathpar}

  \caption{Context and subject subcontracting.}
  \label{fig:subset/subcontract}
\end{figure}

Figure~\ref{fig:subset/subcontract} shows both relations. The two judgement
are defined in terms of each other and track swapping of responsibilities on
function arguments. Both judgements are reflexive and transitive.

We further write $\ConSubsetTerm\ljM\ljN$ if predicate $\ljM$ implies predicate
$\ljN$. For example, \lstinline{(x>0)} $\leq$ \lstinline{(x>=0)} and
\lstinline{(x=0)} $\leq$ \lstinline{(x>=0)}.

However, our subcontract judgement stops at $\ConSubsetTerm\ljM\ljN$. Relation
$\leq$ refers to an environment $\ConSubsetState$ that resolves relations on
terms. $\ConSubsetState$ might be given by a SAT solver or a set of predefined
relations.
\begin{mathpar}
    \inferrule[]{%
      (\ConSubsetTerm\ljM\ljN) \in \ConSubsetState
    }{%
      \ConSubsetTerm\ljM\ljN
    }
\end{mathpar}

It remains to defined subcontracting in terms of context and subject
subcontracting.
\begin{definition}\label{def:naive}
  $\conC$ is a \emph{naive subcontract} of $\conD$, written
  $\ConNSubseteq\conC\conD$, iff
  $$
  \ConSubseteqContext\conC\conD \quad\wedge\quad \ConSubseteqSubject\conC\conD.
  $$
\end{definition}

We further define an ordinary subcontracting judgement, $\ConSubseteq\conC\conD$,
which corresponds to the usuals definition of subtyping. Ordinary subcontracting
is contravariant for the context portion and covariant for the subject portion.
\begin{definition}\label{def:ordinary}
  $\conC$ is an \emph{ordinary subcontract} of $\conD$, written
  $\ConSubseteq\conC\conD$, iff
  $$
  \ConSubseteqContext\conD\conC \quad\wedge\quad \ConSubseteqSubject\conC\conD.
  $$
\end{definition}

In addition to the already mentiont notation, we sometimes use $\sqsubset$,
$\sqsubset^{*}$, $\sqsubset^{+}$, and $\sqsubset^{-}$ to indicate a proper
subcontract, analogous to already existing definitions on inequality.

\begin{lemma}\label{thm:preorder}
  Relation $\ConNSubseteq{}{}$ is a preorder, i.e. it is reflexive and transitive.
  For all $\conC$, $\conC'$, and $\conD$, we have that:
  \begin{itemize}
    \item $\ConNSubseteq\conC\conC$
    \item if $\ConNSubseteq\conC{\conC'}$ and $\ConNSubseteq{\conC'}\conD$ then
     $\ConNSubseteq\conC\conD$ 
  \end{itemize}
\end{lemma}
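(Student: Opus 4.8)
The statement has two parts. \textbf{Reflexivity} is immediate: the first axiom of each auxiliary judgement gives $\ConSubseteqContext\conC\conC$ and $\ConSubseteqSubject\conC\conC$ for every $\conC$, so by Definition~\ref{def:naive} we obtain $\ConNSubseteq\conC\conC$. The real work is \textbf{transitivity}. Since $\ConNSubseteq{}{}$ unfolds via Definition~\ref{def:naive} into the conjunction of the context judgement $\ConSubseteqContext{}{}$ and the subject judgement $\ConSubseteqSubject{}{}$, it suffices to establish transitivity of each of these two relations and then conjoin the results. The crucial observation is that the function and dependent rules of the context judgement appeal to the subject judgement on \emph{domains} and to the context judgement on \emph{ranges} (and symmetrically for the subject judgement), so the two relations are mutually recursive and cannot be treated in isolation. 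The plan is therefore to prove the \emph{conjoined} statement --- transitivity of $\ConSubseteqContext{}{}$ together with transitivity of $\ConSubseteqSubject{}{}$ --- by a single simultaneous induction.

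Concretely, I would fix the well-founded measure $|\conC| + |\conC'| + |\conD|$, the total structural size of the three contracts occurring in a putative transitivity instance $\conC \sqsubseteq \conC'$ and $\conC' \sqsubseteq \conD$, and proceed by case analysis on the last rules of the two derivations $\mathcal{D}_1$ and $\mathcal{D}_2$. The axiom cases are dispatched directly: if $\mathcal{D}_1$ is reflexivity then $\conC = \conC'$ and $\mathcal{D}_2$ already proves the goal (symmetrically if $\mathcal{D}_2$ is reflexivity); the immediate--immediate rule $\ConSubseteqContext\conI\conJ$, the rules $\ConSubseteqSubject\conC\top$ and $\ConSubseteqSubject\bot\conD$, and the flat rule are each handled by inspection, the flat case reducing to transitivity of the underlying predicate relation $\ConSubsetTerm{}{}$. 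In the structural function and dependent cases, where both derivations decompose contracts of the same shape, I would apply the induction hypothesis to the domains --- invoking transitivity of the \emph{other} relation, which is exactly what the swapped premises demand --- and to the ranges using the \emph{same} relation, then reassemble the conclusion with the corresponding structural rule. Here the measure strictly decreases because all three domain (respectively range) contracts are proper subterms.

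The delicate cases are those involving intersection and union, where the derivations are not syntax-directed. When the pivot contract $\conC'$ carries a top-level connective, a \emph{principal} situation arises: one derivation introduces the connective by a right-rule and the other eliminates it by a left-rule; I would match the branch selected on one side against the premise available on the other and invoke the induction hypothesis on the strictly smaller pivot. When the connective sits on $\conC$ (a left-rule ending $\mathcal{D}_1$) or on $\conD$ (a right-rule ending $\mathcal{D}_2$) instead, the \emph{commuting} cases apply: push the induction hypothesis through the relevant premise and re-apply the same rule. Giving priority to commuting over principal cases keeps the organization clean, and in every such call at least one of the three contracts shrinks while none grows, so the measure decreases and the induction is well-founded. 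Finally, naive transitivity follows by applying the two established results to $\ConSubseteqContext\conC{\conC'}$, $\ConSubseteqContext{\conC'}\conD$ and to $\ConSubseteqSubject\conC{\conC'}$, $\ConSubseteqSubject{\conC'}\conD$, then recombining via Definition~\ref{def:naive}.

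I expect the main obstacle to be the combinatorial bookkeeping of the intersection and union cases under mutual recursion: one must verify that every pairing of a right-introduction with its matching left-elimination, and every commuting variant on $\conC$ or $\conD$, is covered, and that the branch chosen on one side genuinely lines up with the premise discharged on the other (for intersection the choice is made by the right-rule, for union by the left-rule). A secondary but real subtlety is that the flat case silently presupposes that $\ConSubsetTerm{}{}$ --- hence the environment $\ConSubsetState$ that resolves it --- is itself transitive; this holds whenever $\ConSubsetState$ is transitively closed, as it is when backed by a SAT solver, an assumption I would state explicitly before beginning the induction.
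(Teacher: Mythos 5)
Your proposal is correct and takes the same route as the paper: the paper's entire proof of this lemma is the one-liner ``Proof by induction over $\ConNSubseteq{}{}$,'' and your simultaneous induction over the mutually recursive judgements $\ConSubseteqContext{}{}$ and $\ConSubseteqSubject{}{}$ --- axiom cases, structural function/dependent cases with the domain premise handed to the other relation, and principal versus commuting cases for intersection and union --- is exactly that induction carried out in full. If anything you are more careful than the paper, which never mentions that the function rules force the two relations to be treated simultaneously, nor that the flat case silently requires transitivity of $\ConSubsetTerm{}{}$ (i.e., of the environment $\ConSubsetState$); the only slip is your parenthetical on which rule ``chooses,'' which is stated for the context relation, whereas for the subject relation the roles of intersection and union are dual --- a bookkeeping detail that does not affect the argument.
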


%  ___                    _         _   _____                 
% / __|__ _ _ _  ___ _ _ (_)__ __ _| | |_   _|__ _ _ _ __  ___
%| (__/ _` | ' \/ _ \ ' \| / _/ _` | |   | |/ -_) '_| '  \(_-<
% \___\__,_|_||_\___/_||_|_\__\__,_|_|   |_|\___|_| |_|_|_/__/
                                                                                                                          
\subsection{Canonical Terms, revisited}
\label{sec:subset/terms}

\begin{figure}[tp]
  \begin{displaymath}
    \begin{array}{lrl}

      \ctxT &\bbc& \ljHole \mid \forked\ctxT\ljM \mid \forked\termT\ctxT\\

      \ctxA &\bbc& \ljHole \mid \assert\ctxA\cbVar\conC\\
      \ctxB &\bbc& \ljHole \mid \ctxB\,\ljM \mid \termT\,\ctxB \mid
      \ljOp(\vec\termT\,\ctxB\,\vec\ljM) \mid \assert\ctxB\cbVar\conC
      \mid \ljIf\,\ctxB\,\ljM\,\ljN

    \end{array}
  \end{displaymath}

  \caption{%
    Parallel observations and contexts.
  }\label{fig:subset/syntax}
\end{figure}

As for the Baseline Transformation, we first define a set of canonical terms
(non-transformable) terms that serve as the output of this level's
transformation. The new canonical terms are a subset of the already defined
terms in Section~\ref{sec:baseline/terms}.
Appendix~\ref{sec:appendix/terms/subset} shows its definition.

In addition, Figure~\ref{fig:subset/syntax} shows the definition of a new
transformation contexts $\ctxT$ that looks through parallel observation. An
assertion context $\ctxA$ is a number of contract assertions and a body context
$\ctxB$ is a transformation context without lambda abstractions and without
conditions with a hole in a branch.

% ___      _            _   
%/ __|_  _| |__ ___ ___| |_ 
%\__ \ || | '_ (_-</ -_)  _|
%|___/\_,_|_.__/__/\___|\__|
%                           
% _____                  __                    _   _          
%|_   _| _ __ _ _ _  ___/ _|___ _ _ _ __  __ _| |_(_)___ _ _  
%  | || '_/ _` | ' \(_-<  _/ _ \ '_| '  \/ _` |  _| / _ \ ' \ 
%  |_||_| \__,_|_||_/__/_| \___/_| |_|_|_\__,_|\__|_\___/_||_|
                                                            
\subsection{Subset Transformation}
\label{sec:subset/transformation}

\begin{figure*}
  \begin{displaymath}
    \begin{array}{llllllr}

      \DefTirName{\RuleForkUnion}
      & \cbState,
      & \fctx{\assert\termT\cbVar{(\defCupC\conC\conD)}}
      & \translateSInT
      & \append\cbState{(\defCb\cbVar\defCupC{\cbVar_1}{\cbVar_2})},
      & \fork{%
        \fctx{\assert\termT{\cbVar_1}\conC}
      }{%
        \fctx{\assert\termT{\cbVar_2}\conD}
      }
      & \cbVar_1,\cbVar_2\not\in{\cbState}
      \\

      \DefTirName{\RuleForkIntersection}
      & \cbState,
      & \fctx{(\assert{\termT_1}\cbVar{(\defCapC\conQ\conR)})\,\termT_2}
      & \translateSInT
      & \append\cbState{(\defCb\cbVar\defCapC{\cbVar_1}{\cbVar_2})},
      & \fork{%
        \fctx{(\assert{\termT_1}{\cbVar_1}\conQ)\,\termT_2)}
      }{%
        \fctx{(\assert{\termT_1}{\cbVar_2}\conR)\,\termT_2)}
      }
      & \cbVar_1,\cbVar_2\not\in{\cbState}
      \\
      \\

    \end{array}
  \end{displaymath}
  \begin{displaymath}
    \begin{array}{llllllr}

      \DefTirName{\RuleLift}
      & \cbState,
      & \fctx{\lambda\ljVar.\bctx{\assert\ljVar\cbVar\conI}}
      & \translateSInT
      & \append\cbState{(\defCb\cbVar{\neg\cbVar_1})},
      & \fctx{\assert{(\lambda\ljVar.\bctx{\ljVar})}{\cbVar_1}{(\defFunC\conI\top)}}
      & \cbVar_1\not\in{\cbState} 
      \\

      \DefTirName{\RuleLower}
      & \cbState,
      & \fctx{\lambda\ljVar.(\assert\termT\cbVar\conC)}
      & \translateSInT
      & \cbState
      & \fctx{\assert{(\lambda\ljVar.\termT)}\cbVar{(\defFunC\top\conC)}}
      & \termT\neq\ljVar\\
      \\

      \DefTirName{\RuleBlame}
      & \cbState,
      & \fctx{\lambda\ljVar.\bctx{\assert\termT\cbVar\bot}}
      & \translateSInT
      & \cbState,
      & \fctx{\lambda\ljVar.(\assert{\blame\cbBlame}\cbVar\bot})
      & \blame\cbBlame=\blameOf\cbVar\cbState\\

      \DefTirName{\RuleBlameIfTrue}
      & \cbState,
      & \fctx{\ljIf\,\termT_0\,\bctx{\assert{\termT_1}\cbVar\bot}\,\ljN}
      & \translateSInT
      & \cbState,
      & \fctx{\ljIf\,\termT_0\,(\assert{\blame\cbBlame}\cbVar\bot)\,\ljN}
      & \blame\cbBlame=\blameOf\cbVar\cbState\\

      \DefTirName{\RuleBlameIfFalse}
      & \cbState,
      & \fctx{\ljIf\,\termT_0\,\termT_1\,\bctx{\assert{\termT_2}\cbVar\bot}}
      & \translateSInT
      & \cbState,
      & \fctx{\ljIf\,\termT_0\,\termT_1\,(\assert{\blame\cbBlame}\cbVar\bot)}
      & \blame\cbBlame=\blameOf\cbVar\cbState\\
      
      \DefTirName{\RuleBlameGlobal}
      & \cbState,
      & \bctx{\assert\termT\cbVar\bot}
      & \translateSInT
      & \cbState,
      & \blame\cbBlame
      &\blame\cbBlame=\blameOf\cbVar\cbState\\
      \\

      \DefTirName{\RuleSubsetInner}
      & \cbState,
      & \fctx{\assert{\actx{\assert\termT{\cbVar_1}\conC}}{\cbVar_2}{\conD}}
      & \translateSInT
      & \cbState,
      & \fctx{\actx{\assert\termT{\cbVar_1}\conC}}
      & \conC\sqsubseteq^*\conD\\

      \DefTirName{\RuleSubsetOuter}
      & \cbState,
      & \fctx{\assert{\actx{\assert\termT{\cbVar_1}\conC}}{\cbVar_2}{\conD}}
      & \translateSInT
      & \cbState,
      & \fctx{\actx{\assert\termT{\cbVar_2}\conD}}
      & \conC\sqsupset^*\conD\\
      \\

      \DefTirName{\RuleReverseIf}
      & \cbState,
      & \fctx{(\ljIf\,\termT_0\,\inhole{\ctxA_1}{\assert{\termT_1}{\cbVar}{\conC}}\,\inhole{\ctxA_2}{\assert{\termT_2}{\cbVar}{\conC}})}
      & \translateSInT
      & \cbState
      & \fctx{\assert{(\ljIf\,\termT_0\,\inhole{\ctxA_1}{\termT_1}\,\inhole{\ctxA_2}{\termT_2})}{\cbVar}{\conC}}
      & 
      \\
      \\   

    \end{array}
  \end{displaymath}

  \begin{mathpar}

    \inferrule[\RuleMerge]{%
      \conQ=\defFunC\conC\top\\
      \conR=\defFunC\top\conD\\
      \blameOf{\cbVar_1}\cbState=\blameOf{\cbVar_2}\cbState\\
      \cbVar_3\not\in{\cbState}
    }{%
      \cbState,
      \fctx{\assert{\actx{\assert\termT{\cbVar_1}\conQ}}{\cbVar_2}{\conR}}\,
      \translateSInT\,
      \append{\append\cbState{\defCb{\cbVar_2}{\cbVar_3}}}{\defCb{\cbVar_1}{\cbVar_3}}\,
      \fctx{\actx{\assert\termT{\cbVar_3}\defFunC\conC\conD}}
    }

    \inferrule[\RuleBaseline]{%
      \cbState,\,
      \ljM\,
      \translateB\,
      \cbState',\,
      \ljN
    }{%
      \cbState,\,
      \ljM\,
      \translateSInT\,
      \cbState',\,
      \ljN
    }

    \inferrule[\RuleTrace]{%
      \cbState,\,
      \ljM\,
      \translateSInT\,
      \cbState',\,
      \ljN
    }{%
      \cbState,\,
      \inT\ljM\,
      \translateS\,
      \cbState',\,
      \inT\ljN
    }

  \end{mathpar}

  \caption{Subset transformation.}
  \label{fig:subset/transformation}
\end{figure*}

Figure~\ref{fig:subset/transformation} defines the subset transformation as an
extension of the baseline transformation in
Figure~\ref{fig:baseline/transformation}.

A significant change is that the new rules \RefTirName{\RuleForkUnion} and
\RefTirName{\RuleForkIntersection} replace the already exiting rules with the
name \RefTirName{\RuleUnfoldUnion} and \RefTirName{\RuleUnfoldDIntersection}.
Both rules split the current observation and handle both sides of an
intersection or union in separation. This step eliminates alternatives such that
in each branch every contract must be fulfilled.

Doing this enables us to remove a contracts whose properties are subsumed by
another contracts. Without splitting alternatives it would not be possible to
remove a contract on the basis of other contracts: The other contracts might be
nested in an alternative and thus it might never apply (as there is a weaker
alternative).

As for the baseline rules, unions are split immediately when asserted to a
value, whereas intersections must remain on the values until the values is used
in an application. This is because the context can choose between both
alternative on every use of the value.

Rule \RefTirName{\RuleUnfoldIntersection}
(Figure~\ref{fig:baseline/transformation}) must not be redefined, as the
intersections with flat contracts is equivalent to a conjunction of its
constituents (cf. \cite{KeilThiemann2015-blame}).

Rule \RefTirName{\RuleLift} lifts an immediate contract on a variable bound in a
lambda abstraction and produces a function contract on that abstraction.
However, at this level we are only allowed to lift contracts on variables
directly contained in the function body. Lifting from a deeper abstraction would
introduce checks that might never happen at run time.

Second, it is only allowed to lift an immediate contract. This is because a
lifted delayed contract would remain on the argument value and thus it would
effect all uses of that value, which entirely changes the meaning of the
program.

Rule \RefTirName{\RuleLower} pushes a contract on a function body down and
creates a new function contract from that contract. But now we have the
restriction that the function's return is not it's argument.

Rule \RefTirName{\RuleBlame} transforms a function body to a blame term
$\blame\cbBlame$ in case a contract violation happens in that body. As we split
alternatives we know that every contract must be fulfilled, and thus violating
a contract immediately results in a contract violation for that branch. However,
we cannot transform the whole execution to $\blame\cbBlame$ as we do not know
if the body es ever executed (e.g. because its is nested in a condition or
the lambda term is never used in an application).

Here, $\blameOf\cbVar\cbState$ computes a blame term $\blame\cbBlame$ resulting
from a predicate violation ($\defCb{\cbVar}{\ljFalse}$) in $\cbVar$ (cf.\ Appendix\
\ref{sec:appendix/auxiliary/blame}).

As before, $\assert\ljHole\cbVar\bot$ must remain on the $\blame\cbBlame$ term
to remember the violated contract. However, as the whole body transforms to
$\blame\cbBlame$, $\bot$ may stay as a contract on a function body. In this
case, rule \RefTirName{\RuleLower} pushes the information of a failing contract
down to the enclosing body, which in turn is unrolled when used in an
application and rule \RefTirName{\RuleBlame} transforms the enclosing context to
$\blame\cbBlame$. Appendix~\ref{sec:appendix/blame} shows an example reduction.

The rules \RefTirName{\RuleBlameIfTrue} and \RefTirName{\RuleBlameIfFalse} do
the same if a blame appears in branch of a conditions and rule
\RefTirName{\RuleBlameGlobal} produces a global blame term if $\bot$ appears on
a top-level term.

Rules \RefTirName{\RuleSubsetInner} and \RefTirName{\RuleSubsetOuter}
removes a contract from a term that is less restrictive than another contract on
that term. As this step removes a contract it might change the order of arising
contract violations. The weaker contract might raise a contract violation before
the stronger contract is checked. But, as the stronger contract remains, we know
that we definitely result in a blame state until the stronger contract is
satisfier. And, by definition, this implies that the weaker contract is also
satisfied.

Rule \RefTirName{\RuleReverseIf} pushes a contract out of an condition if the
contract remains ob both branches of the condition.

Rule \RefTirName{\RuleMerge} merges remaining fragments of the same source
contract to a new function. However, this can only be done if both contracts
have the same responsibility, which is indicated by precondition
$\blameOf{\cbVar_1}\cbState=\blameOf{\cbVar_2}\cbState$.

Finally, the rule \RefTirName{\RuleBaseline} lifts baseline transformations to
subset transformations, and rule \RefTirName{\RuleTrace} considers the
transformation in a parallel observation.

\begin{lemma}[Progress]\label{thm:subset/progress}
  For all terms $\ljM$ it holds that either $\ljM$ is a canonical term
  $\termT$ or there exists a transformation step
  $\cbState,\ljM\translateS\cbState',\ljN$.
\end{lemma}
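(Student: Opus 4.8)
**The plan is to prove the progress lemma for the subset transformation by structural induction on terms, mirroring the structure of the baseline progress proof (Lemma~\ref{thm:baseline/progress}) and exploiting the fact that the subset transformation strictly extends the baseline system via rule \RefTirName{\RuleBaseline}.**

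My plan is to proceed by structural induction on the term $\ljM$, showing that any $\ljM$ that is not already a canonical term $\termT$ admits at least one subset transformation step $\cbState,\ljM\translateS\cbState',\ljN$. The key leverage is rule \RefTirName{\RuleBaseline}: every baseline step is also a subset step, so by Lemma~\ref{thm:baseline/progress} any term that triggers a baseline rule is immediately covered. The real content is therefore to handle the terms that are \emph{baseline-canonical} but \emph{not} subset-canonical --- these are exactly the terms on which the new rules (\RefTirName{\RuleForkUnion}, \RefTirName{\RuleForkIntersection}, \RefTirName{\RuleLift}, the various \RefTirName{\RuleBlame} rules, \RefTirName{\RuleSubsetInner}/\RefTirName{\RuleSubsetOuter}, and \RefTirName{\RuleMerge}) must fire. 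So the first step is to compare the two notions of canonical term (baseline versus subset, from Appendices~\ref{sec:appendix/terms/baseline} and~\ref{sec:appendix/terms/subset}) and identify precisely the syntactic forms that are newly excluded from the subset-canonical set.

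First I would dispatch the straightforward cases. If $\ljM$ is not baseline-canonical, then by Lemma~\ref{thm:baseline/progress} some baseline step applies, and \RefTirName{\RuleBaseline} lifts it to a subset step. If $\ljM$ has the shape $\forked{\ljM_1}{\ljM_2}$ of a parallel observation, I would use the induction hypothesis on whichever branch is non-canonical, placing the step inside the trace context $\ctxT$ and applying \RefTirName{\RuleTrace}; here I must verify that the $\ctxT$ context from Figure~\ref{fig:subset/syntax} indeed decomposes any non-canonical parallel observation so that a hole sits over a transformable subterm. Next I would enumerate the newly-excluded forms: a union contract asserted to a term ($\translateSInT$ via \RefTirName{\RuleForkUnion}); an intersection of delayed contracts in application position (\RefTirName{\RuleForkIntersection}); an immediate contract on a bound variable sitting in a body context $\ctxB$ under a $\lambda$ (\RefTirName{\RuleLift}); a $\bot$ contract nested in a body context, a conditional branch, or at top level (\RefTirName{\RuleBlame}, \RefTirName{\RuleBlameIfTrue}, \RefTirName{\RuleBlameIfFalse}, \RefTirName{\RuleBlameGlobal}); nested assertions where one contract is a strict subcontract of the other (\RefTirName{\RuleSubsetInner}/\RefTirName{\RuleSubsetOuter}); and adjacent domain/range function-contract fragments from the same source (\RefTirName{\RuleMerge}). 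For each, I would exhibit the context decomposition $\fctx{\cdot}$ witnessing that the corresponding rule applies.

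The hard part will be establishing \textbf{exhaustiveness}: that the subset-canonical terms are \emph{exactly} the baseline-canonical terms with none of the above newly-excluded patterns, so that every remaining (non-canonical, baseline-canonical) term falls into at least one enumerated case. This requires a careful reading of the subset-canonical definition in Appendix~\ref{sec:appendix/terms/subset} against the context grammars for $\ctxA$ and $\ctxB$ in Figure~\ref{fig:subset/syntax}, because a single term can in principle be decomposed in several ways, and I must argue that the canonicity conditions line up precisely with ``no applicable rule.'' A subtle secondary obstacle is the side conditions on \RefTirName{\RuleSubsetInner}/\RefTirName{\RuleSubsetOuter} (the strict subcontract premises $\conC\sqsubseteq^*\conD$ and $\conC\sqsupset^*\conD$) and on \RefTirName{\RuleMerge} (the blame-equality premise $\blameOf{\cbVar_1}\cbState=\blameOf{\cbVar_2}\cbState$): progress only needs \emph{some} step to exist, so whenever these premises fail I must confirm the term is genuinely canonical rather than stuck. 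I expect the proof to be essentially an exhaustive syntactic case analysis whose only genuine difficulty is bookkeeping, inherited from the fact that subset simplification layers several independent rewrite opportunities on top of the baseline system.
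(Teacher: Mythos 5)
Your proposal is correct and takes essentially the same approach as the paper: the paper's own proof is a one-line appeal to the dichotomy between canonical and non-canonical terms (Lemma~\ref{thm:subset/terms}, Appendix~\ref{sec:appendix/terms/subset}), where the non-canonical grammar is defined to consist of exactly the patterns matched by the transformation rules. Your structural induction with explicit case enumeration --- baseline steps lifted via \RefTirName{\RuleBaseline}, parallel observations handled via \RefTirName{\RuleTrace}, and the newly excluded forms matched to the new subset rules --- is the unrolled version of that argument, making explicit (in particular the exhaustiveness check) what the paper treats as immediate from the definitions.
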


Apart from the subset transformation which uses native subcontracting to reduces
contracts subsumed by other contract, we also use ordinary subcontracting to
simplifiy contracts contained in an intersection or union contract.
Appendix~\ref{sec:appendix/simplify} demonstrates this.

%    _     _        _____                   
% _ | |___(_)_ _   |_   _| _ __ _ __ ___ ___
%| || / _ \ | ' \    | || '_/ _` / _/ -_|_-<
% \__/\___/_|_||_|   |_||_| \__,_\__\___/__/

\subsection{Join Traces}
\label{sec:subset/join}

The subset transformation splits intersection and union into separated
observation, i.e.\ its outcome is a set of parallel observations. Thus, after
transforming all observations to canonical terms, we need to join the remaining
fragments to a valid source program.

Joining terms must be one of the last step as it is not possible to apply further
transformations afterwards. Subsequent transformations would mix up contracts
from different alternatives and thus they may change the meaning of a contract.

\begin{figure}[tp]
  \begin{displaymath}
    \begin{array}{lrl}

      \ctxM, \ctxN &\bbc& \ljHole \mid \lambda\ljVar.\ctxM \mid
      \ctxM\,\ljN \mid \ljM\,\ctxN \mid \ljOp(\vec\ljM\,\ctxM\,\vec\ljN)\\
      &\mid& \ljIf\,\ctxM\,\ljM\,\ljN \mid \ljIf\,\ljM\,\ctxM\,\ljN \mid
      \ljIf\,\ljM\,\ljN\,\ctxM \mid \assert\ctxL\cbVar\conC\\

      \ctxL, &\bbc& \lambda\ljVar.\ctxM \mid
      \ctxM\,\ljN \mid \ljM\,\ctxN \mid \ljOp(\vec\ljM\,\ctxM\,\vec\ljN)\\
      &\mid& \ljIf\,\ctxM\,\ljM\,\ljN \mid \ljIf\,\ljM\,\ctxM\,\ljN \mid
      \ljIf\,\ljM\,\ljN\,\ctxM \mid \assert\ctxL\cbVar\conC\\

    \end{array}
  \end{displaymath}
  \caption{%
    Final terms and contexts.
  }\label{fig:subset/final}
\end{figure}

To this end, Figure~\ref{fig:subset/final} defines a context $\ctxM$, which is
defined as usual as a term with a hole. The only exception is that contexts
$\ctxM$ did not have holes in a contract assertion.

As our transformation did not change the basic syntax of a program (it only
propagates contracts), terms remain equivalent (excepting contract assertions)
until they transform to a blame term. It follows that the only difference
between parallel terms are contract assertions and blame terms.

We call terms that differ only in contract assertions and blame terms as
\emph{structurally equivalent}. We write $\isEquivTerm{\ljM}{\ljN}$ if term
$\ljM$ is structurally equivalent to term $\ljN$ and $\isEquivCtx{\ctxG}{\ctxH}$
for the structural equivalence of contexts.

\begin{definition}\label{def:term-equivalence}
  Two terms are structural equal, written $\isEquivTerm{\ljM}{\ljN}$, if they only
  differ in contract assertions and blame terms.
\end{definition}

\begin{definition}\label{def:term-equivalence}
  Two contexts are structural equal, written $\isEquivCtx{\ctxM}{\ctxN}$, if they
  only differ in contract assertions and blame terms.
\end{definition}

\begin{figure} 
  \paragraph{Term Equivalence}
  \begin{mathpar}

    \inferrule[]{%
    }{%
      \isEquivTerm{\ljConst}{\ljConst}
    }

    \inferrule[]{%
    }{%
      \isEquivTerm{\ljVar}{\ljVar}
    }

    \inferrule[]{%
      \isEquivTerm{\ljM}{\ljN}
    }{%
      \isEquivTerm{\lambda\ljVar.\ljM}{\lambda\ljVar.\ljN}
    }

    \inferrule[]{%
      \isEquivTerm{\ljM}{\ljN}
    }{%
      \isEquivTerm{\lambda\ljVar.\ljM}{\lambda\ljVar.\ljN}
    }

    \inferrule[]{%
      \isEquivTerm{\ljM_0}{\ljN_0}\\
      \isEquivTerm{\ljM_1}{\ljN_1}
    }{%
      \isEquivTerm{\ljM_0\,\ljM_1}{\ljN_0\,\ljN_1}
    }

    \inferrule[]{%
      \isEquivTerm{\vec\ljM}{\vec\ljN}
    }{%
      \isEquivTerm{\ljOp(\vec\ljM)}{\ljOp(\vec\ljN)}
    }

    \inferrule[]{%
      \isEquivTerm{\vec\ljM_0}{\vec\ljN_0}\\
      \isEquivTerm{\vec\ljM_1}{\vec\ljN_1}\\
      \isEquivTerm{\vec\ljM_2}{\vec\ljN_2}
    }{%
      \isEquivTerm{\ljIf\,\ljM_0\,\ljM_1\,\ljM_2}{\ljIf\,\ljN_0\,\ljN_1\,\ljN_2}
    }

    \inferrule[]{%
      \isEquivTerm{\ljM}{\ljN}
    }{%
      \isEquivTerm{\assert\ljM\cbVar\conC}{\ljN}
    }

    \inferrule[]{%
      \isEquivTerm{\ljM}{\ljN}
    }{%
      \isEquivTerm{\ljM}{\assert\ljN\cbVar\conC}
    }

    \inferrule[]{%
    }{%
      \isEquivTerm{\ljM}{\blame\cbBlame}
    }

    \inferrule[]{%
    }{%
      \isEquivTerm{\blame\cbBlame}{\ljN}
    }
  \end{mathpar}  
  \vspace{\baselineskip}
  \paragraph{Context Equivalence}
  \begin{mathpar}
    \inferrule[]{%
    }{%
      \isEquivCtx{\ljHole}{\ljHole}
    }\and
    \inferrule[]{%
      \isEquivCtx{\ctxM}{\ctxN}
    }{%
      \isEquivCtx{\assert\ctxM\cbVar\conC}{\ctxN}
    }\and
    \inferrule[]{%
      \isEquivCtx{\ctxM}{\ctxN}
    }{%
      \isEquivCtx{\ctxM}{\assert\ctxN\cbVar\conC}
    }\and
    \inferrule[]{%
      \isEquivCtx{\ctxM}{\ctxN}
    }{%
      \isEquivCtx{\lambda\ljVar.\ctxM}{\lambda\ljVar.\ctxN}
    }\and
    \inferrule[]{%
      \isEquivCtx{\ctxM}{\ctxN}\\
      \isEquivTerm{\ljM}{\ljN}
    }{%
      \isEquivCtx{\ctxM\,\ljM}{\ctxN\,\ljN}
    }\and
    \inferrule[]{%
      \isEquivCtx{\ctxM}{\ctxN}\\
      \isEquivTerm{\ljM}{\ljN}
    }{%
      \isEquivCtx{\ljM\,\ctxM}{\ljN\,\ctxM}
    }\and
    \inferrule[]{%
      \isEquivCtx{\ctxM}{\ctxN}\\
      \isEquivTerm{\vec{\ljM_0}}{\vec{\ljN_0}}\\
      \isEquivTerm{\vec{\ljM_i}}{\vec{\ljN_i}}
    }{%
      \isEquivCtx{\ljOp(\vec{\ljM_0}\,\ctxM\,\vec{\ljM_i})}{\ljOp(\vec{\ljN_0}\,\ctxN\,\vec{\ljN_i})}
    }\and
    \inferrule[]{%
      \isEquivCtx{\ctxM}{\ctxN}\\
      \isEquivTerm{\ljM_1}{\ljN_1}\\
      \isEquivTerm{\ljM_2}{\ljN_2}
    }{%
      \isEquivCtx{\ljIf\,\ctxM\,\ljM_1\,\ljM_2}{\ljIf\,\ctxN\,\ljN_1\,\ljN_2}
    }\and
    \inferrule[]{%
      \isEquivCtx{\ctxM}{\ctxN}\\
      \isEquivTerm{\ljM_0}{\ljN_0}\\
      \isEquivTerm{\ljM_2}{\ljN_2}
    }{%
      \isEquivCtx{\ljIf\,\ljM_0\,\ctxM\,\ljM_2}{\ljIf\,\ljN_1\,\ctxN\,\ljN_2}
    }\and
    \inferrule[]{%
      \isEquivCtx{\ctxM}{\ctxN}\\
      \isEquivTerm{\ljM_0}{\ljN_0}\\
      \isEquivTerm{\ljM_1}{\ljN_1}
    }{%
      \isEquivCtx{\ljIf\,\ljM_0\,\ljM_1\,\ctxM}{\ljIf\,\ljN_0\,\ljN_1\,\ctxN}
    }
  \end{mathpar}
  \caption{Term and context equivalence.}
  \label{fig:subset/equivalence}
\end{figure}

Figure~\ref{fig:subset/equivalence} shows the structural equivalence of terms
and contexts. The equivalence relations looks through contract assertions and
compares only $\J$ terms. 

\begin{lemma}\label{thm:equivalence_relation}
  Relation $\isEquivCtx{}{}$ on terms $\ljM$ and context $\ctxM$ is an
  equivalence relation.
\end{lemma}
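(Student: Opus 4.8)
The plan is to establish the three defining properties of an equivalence relation—reflexivity, symmetry, and transitivity—for the two relations of Figure~\ref{fig:subset/equivalence}. Note that the term relation $\isEquivTerm{}{}$ is self-contained, whereas the context relation $\isEquivCtx{}{}$ sits on top of it (its congruence rules carry term-equivalence side premises such as $\isEquivTerm{\ljM}{\ljN}$). I would therefore prove each property for the term relation first and then lift it to contexts, reusing the term-level result to discharge the side premises.

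Reflexivity I would obtain by structural induction on $\ljM$: every syntactic former has a matching congruence rule whose premises are exactly the reflexivity instances supplied by the induction hypothesis, the base cases $\isEquivTerm{\ljConst}{\ljConst}$ and $\isEquivTerm{\ljVar}{\ljVar}$ are immediate, and $\isEquivTerm{\blame\cbBlame}{\blame\cbBlame}$ follows from either blame rule; the context case adds only $\isEquivCtx{\ljHole}{\ljHole}$. Symmetry I would obtain by induction on the derivation, observing that the rule set is closed under mirroring: each congruence rule is symmetric in shape, the two assertion rules are exact mirrors, and the two blame rules are mirrors, so every derivation of $\isEquivTerm{\ljM}{\ljN}$ has a dual deriving $\isEquivTerm{\ljN}{\ljM}$ from the symmetrized premises. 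Both directions are routine.

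The crux is transitivity: from $\isEquivTerm{\ljM}{\ljN}$ and $\isEquivTerm{\ljN}{\ljL}$ produce $\isEquivTerm{\ljM}{\ljL}$. I would argue by induction on the combined size of the two derivations, absorbing the non-congruence rules first. If $\ljM=\assert{\ljM}{\cbVar}{\conC}$ came from the left-assertion rule, the induction hypothesis on the stripped premise and $\isEquivTerm{\ljN}{\ljL}$ gives $\isEquivTerm{\ljM}{\ljL}$, and re-applying left-assertion yields the goal; the case where $\ljL$ carries an outer assertion is symmetric; and if $\ljM=\blame\cbBlame$ or $\ljL=\blame\cbBlame$ the goal is discharged directly by a blame rule. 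In the remaining cases both outer derivations end in congruence rules for the same former, the subterms line up, and the conclusion follows by applying that congruence rule to the induction-hypothesis results on the aligned subderivations. The context relation is handled identically, its term-valued side premises being dispatched by the term-level transitivity already in hand.

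The step I expect to be the main obstacle is the case where the mediating term $\ljN$ is itself a blame term $\blame\cbBlame$. Then $\isEquivTerm{\ljM}{\ljN}$ can only arise from the right-blame rule and $\isEquivTerm{\ljN}{\ljL}$ only from the left-blame rule, so neither derivation relates $\ljM$ to $\ljL$ and the naive induction yields nothing—indeed, taken over arbitrary terms the rules let $\ljConst$ and a distinct constant both relate to $\blame\cbBlame$ without relating to each other. The conclusion must instead be recovered from the fact, noted just before Definition~\ref{def:term-equivalence}, that all parallel observations are transformed from one source and hence share a single $\J$ skeleton. I would therefore strengthen the induction hypothesis to range only over terms derived from a common source program and thread this skeleton invariant through every case, so that whenever $\ljN$ collapses to $\blame\cbBlame$ the terms $\ljM$ and $\ljL$ are still guaranteed to agree up to contracts and blame. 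Establishing and propagating this invariant is the delicate part of the argument; once it is in place, the remaining rule-directed cases are mechanical.
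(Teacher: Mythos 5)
Your plan coincides in outline with the paper's own proof, which in its entirety reads: the relation is an equivalence relation iff it is reflexive, symmetric, and transitive, ``proof by induction over $\isEquivCtx{}{}$''. The difference is that you actually carried the induction out, and in doing so exposed a point that the paper's one-sentence proof silently skips. Your diagnosis of the transitivity case is correct and is not a quibble: because the two blame rules are premise-free axioms, $\isEquivTerm{\ljM}{\blame\cbBlame}$ and $\isEquivTerm{\blame\cbBlame}{\ljN}$ hold for \emph{arbitrary} $\ljM$ and $\ljN$, so taking $\ljM = 1$, the midpoint $\blame\cbBlame$, and $\ljL = 2$ gives two derivable hypotheses whose conclusion $\isEquivTerm{1}{2}$ is underivable (the only constant rule requires the same constant on both sides). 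Read literally over all terms, the lemma is therefore false, and no induction --- the paper's included --- can close this case without an extra hypothesis. Your repair, restricting the statement to terms that share a single $\J$ skeleton because they are parallel transforms of one source program, is exactly the regime in which the relation is actually deployed (rule \RefTirName{\RuleJoin} and its companions only ever compare branches of one forked observation, which the paper itself notes are ``structurally equivalent'' by construction), and with that invariant threaded through the induction the blame-midpoint case discharges as you describe. The rest of your proposal --- reflexivity by structural induction using both assertion rules in sequence, symmetry from the mirror-closure of the rule set, and the two-layer organization that proves the term relation first and then lifts to contexts by discharging their term-valued side premises --- is the natural elaboration of what the paper compresses into one line, and is sound. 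In short: same approach as the paper, but your version is the more honest one, since it identifies (and patches) a gap in the lemma as stated rather than inheriting it.
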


\begin{figure}
  \begin{mathpar}
    \inferrule[\RuleJoin]{%
      \forked{\ljM}{\ljN}\,
      \joinInT\,
      \forked{\ljM'}{\ljN'}
    }{%
      \inT{\forked{\ljM}{\ljN}}\,
      \join\,
      \inT{\forked{\ljM'}{\ljN'}}
    }\and  
    \inferrule[\RuleJoinUnit]{%
    }{%
      \forked{\ljM}{\ljM}
      \joinInT
      \ljM
    }\and  
    \inferrule[\RuleJoinTermLeft]{%
      \isEquivCtx{\ctxM}{\ctxN}\\
    }{%
      {\forked{%
        \inM{\inhole{\ctxA_l}{\termS}}
      }{%
        \inN{\inhole{\ctxA_r}{\blame\cbBlame}}
      }}
      \joinInT
      {\forked{%
        \inM{\inhole{\ctxA_l}{\termS}}
      }{
        \inN{\inhole{\ctxA_r}{\termS}}
      }}
    }\and
    \inferrule[\RuleJoinTermRight]{%
      \isEquivCtx{\ctxM}{\ctxN}\\
    }{%
      {\forked{%
        \inM{\inhole{\ctxA_l}{\blame\cbBlame}}
      }{%
        \inN{\inhole{\ctxA_r}{\termS}}
      }}
      \joinInT
      {\forked{%
        \inM{\inhole{\ctxA_l}{\termS}}
      }{
        \inN{\inhole{\ctxA_r}{\termS}}
      }}
    }\and
    \inferrule[\RuleJoinContract]{%
      \isEquivCtx{\ctxM}{\ctxN}\\
      \ctxA_l\neq\ctxA_r\\
      \ctxA=\ctxJoin{\ctxA_r}{\ctxA_l}\\
    }{%
      {\forked{%
        \inM{\inhole{\ctxA_l}{\termS}}
      }{%
        \inN{\inhole{\ctxA_r}{\termS}}
      }}
      \joinInT
      {\forked{%
        \inM{\inhole{\ctxA}{\termS}}
      }{
        \inN{\inhole{\ctxA}{\termS}}
      }}
    }
  \end{mathpar}
  \caption{Join parallel observations.}
  \label{fig:subset/join}
\end{figure}

To join terms in parallel observations we only need to walk through all terms
and to synchronise contract assertions and blame terms.
Figure~\ref{fig:subset/join} presents the synchronisation of contracts and
terms, indicated by relation $\termT\join\ljM$ on terms.

First, rule\ \RefTirName{\RuleJoin} triggers the synchronization of two
canonical terms nested in a parallel observation and rule\
\RefTirName{\RuleJoinUnit} dissolves a parallel observation if both terms are
identical.

The rules \RefTirName{\RuleJoinTermLeft} and \RefTirName{\RuleJoinTermRight}
handle the case that one side results in a blame state. In this case the blame
term is omitted and the operation proceeds with other term. Remember, it is not
required to maintain the blame term because $\assert\ljHole\cbVar\bot$ still
remains in the source program.

Finally, rule \RefTirName{\RuleJoinContract} looks for different contract
assertions on a term $\termS$ and merges the contract assertions. Precondition
$\ctxG\equiv\ctxH$ requires that the enclosing contexts are structurally
equivalent. Only the contract assertions must be different.

Here, $\ctxJoin{\ctxA_r}{\ctxA_l}$ computes the union of both contexts
and synchronizing proceeds with the new context in place.

The easiest way to merge contexts is to place one context in the hole of
the other context. Even through this did not change the blame behaviour of a
program it might duplicate contract checks, and thus it violates our ground rule
not to introduce more checks.

Thus, merging assertion contexts only places contract assertions in the hole
that are not already contained in the context. Formally, we specify the join of
two assertion contexts $\ctxA$ and $\ctxA'$ by
$$
\ctxJoin{\ctxA}{\ctxA'}=\inhole{\ctxA}{\ctxMinus{\ctxA'}{\ctxA}}
$$
Here, $\ctxMinus{\ctxA'}{\ctxA}$ computes a new context with assertions from
$\ctxA'$ that are not already contained in $\ctxA$.
Appendix~\ref{sec:appendix/auxiliary/context} shows its computation.

\subsection{Condense Remaining Contracts}
\label{sec:subset/condense}

Finally, one last step remains to be done: After synchronizing contracts it
might happen that we have identical contracts on the same term. As they belong
to different alternatives we cannot remove one of them, but we can condense them
a single assertion.

\begin{figure}
  \begin{mathpar}

   \inferrule[\RuleCondense]{%
    }{%
      \cbState,\,
      \inM{\assertWith{\actx{\assertWith\ljM{\cbBlame_0}{\cbVar_0}\conC}}{\cbBlame_1}{\cbVar_1}{\conC}}\,
      \condense\,
      \append\cbState{\defCb{\cbVar_1}{\cbVar_0}},\,
      \inM{\actx{\assertWith\ljM{\cbBlame_0}{\cbVar_0}\conC}}\,
    }

  \end{mathpar}

  \caption{Condense transformation.}
  \label{fig:subset/condense}
\end{figure}

Figure~\ref{fig:subset/condense} shows the transformation. If a term has two
assertions of the same contract, Rule\ \RefTirName{\RuleCondense} removes one of
them and creates a new constraint that redirect the result from the other
assertion to the blame variable of the remove assertion.

%   _   _ _   _____              _   _            
%  /_\ | | | |_   _|__  __ _ ___| |_| |_  ___ _ _ 
% / _ \| | |   | |/ _ \/ _` / -_)  _| ' \/ -_) '_|
%/_/ \_\_|_|   |_|\___/\__, \___|\__|_||_\___|_|  
%                      |___/                      

\subsection{Bringing all this together}
\label{sec:subset/together}

To model transformation with parallel observations, we define a new
transformation $\optimize$ on terms $\ljM$. It first applies the standard
transformation $\translateS$ until it reaches a canonical term $\termT$.
Second, it applies $\join$ to join terms and $\condense$ to condense the
remaining contracts. 
\begin{mathpar}
  \inferrule[]{%
    \emptystate,\ljM \translateS^* \cbState,\termT\\
    \termT \join^* \ljL\\
    \ljL \condense^* \ljN
  }{%
    \emptystate,\ljM
    \optimize^*
    \cbState,\ljN
  }
\end{mathpar}

\subsection{Weak Blame-preservation}
\label{sec:subset/blame-preservation}

While reorganizing contracts, the Subset Transformation did no strictly preserve
the blame behaviour of a program. An ill-behaved program by lead to another
contract violation first, whereas well-behaved programs still result in the same
output.

\begin{conjecture}[Weak Blame-preservation]\label{thm:weak-preservation}
  For all $\cbState_M,\ljM \optimize^* \cbState_N,\ljN$ it hold that either
  $\cbState_M,\ljM \topreduce \cbState_M',\ljVal$ and $\cbState_N,\ljN
  \topreduce \cbState_N',\ljVal$ or $\cbState_M,\ljM \topreduce
  \cbState_M'',\ljBlame^\cbBlame_M$ and $\cbState_N,\ljN \topreduce
  \cbState_N'',\ljBlame^{\cbBlame'}_N$.
\end{conjecture}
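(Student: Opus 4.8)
The plan is to exploit the three-phase structure of $\optimize^*$. By its defining rule, a derivation $\cbState_M,\ljM \optimize^* \cbState_N,\ljN$ factors as $\emptystate,\ljM \translateS^* \cbState,\termT$, then $\termT \join^* \ljL$, then $\ljL \condense^* \ljN$. I would therefore prove a weak-preservation lemma for each phase and chain them by transitivity, the obligation for a single step being: the two related configurations run to a final configuration under $\topreduce$ that is the same value $\ljVal$, or both run to a blame state (Definition~\ref{def:blame-state}), though possibly for different labels $\cbBlame$. For $\translateS^*$ and $\condense^*$ this is a routine induction on the length of the derivation once the single-step lemma is in hand; for $\join^*$ the induction is on the structure of the trace context.

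The core is the single-step lemma for $\translateS$. Rule~\RefTirName{\RuleBaseline} lifts every $\translateB$ step, and these already preserve both value \emph{and} label by Conjecture~\ref{thm:strong-preservation}, so it suffices to treat the genuinely new rules, grouped by effect. First, the rearranging rules \RefTirName{\RuleLift}, \RefTirName{\RuleLower}, \RefTirName{\RuleMerge}, and \RefTirName{\RuleReverseIf} only move contracts across a binder or fuse two function-contract fragments of a common source contract; I would show each installs constraints interderivable with those the dynamics would produce, so the resulting constraint list reaches a blame state on exactly the same labels---these are in fact strongly preserving. Second, the blame-introduction rules \RefTirName{\RuleBlame}, \RefTirName{\RuleBlameIfTrue}, \RefTirName{\RuleBlameIfFalse}, and \RefTirName{\RuleBlameGlobal} replace a subterm carrying a $\bot$-contract by an explicit $\blame\cbBlame$ with $\blame\cbBlame = \blameOf\cbVar\cbState$; these are sound because, once alternatives have been split, every contract in a branch is mandatory, so a $\bot$ forces that branch into a blame state at precisely the label $\blameOf\cbVar\cbState$ computed from the constraints (Appendix~\ref{sec:appendix/constraints}).

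The step that forces \emph{weak} rather than strong preservation is the pair of subset rules \RefTirName{\RuleSubsetInner} and \RefTirName{\RuleSubsetOuter}, which delete a contract that is a naive sub- or super-contract of a sibling contract on the same term. Here I would invoke Definition~\ref{def:naive} and Lemma~\ref{thm:preorder}: from $\ConNSubseteq\conC\conD$ we obtain $\ConSubseteqSubject\conC\conD$ and $\ConSubseteqContext\conC\conD$, so subject/context satisfaction of the retained contract implies that of the deleted one, while failure of the deleted one implies failure of the retained one. Consequently deleting the weaker contract can neither turn a blame state into a non-blame state nor the converse; it can only suppress a violation that the weaker contract might have reported first, which is exactly the relabeling the conjecture permits. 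The fork rules \RefTirName{\RuleForkUnion} and \RefTirName{\RuleForkIntersection} are handled by relating the combining constraint $\defCb\cbVar\defCapC{\cbVar_1}{\cbVar_2}$ (respectively the union constraint) to the per-branch outcomes via the constraint semantics, so that the blame state of the original assertion is recoverable from the blame states of the two parallel observations.

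For the remaining phases: in $\join$, rules \RefTirName{\RuleJoinTermLeft} and \RefTirName{\RuleJoinTermRight} discard a $\blame\cbBlame$ term in favour of the structurally equivalent survivor (Definition~\ref{def:term-equivalence}, Lemma~\ref{thm:equivalence_relation}), which is sound because the $\bot$-contract that produced the blame is retained on the merged term; and \RefTirName{\RuleJoinContract} unions the assertion contexts by $\ctxJoin{\ctxA_r}{\ctxA_l}$, which by construction inserts only the assertions not already present, so no check is added or removed and the generated constraints---hence the blame state---are preserved. For $\condense$, rule~\RefTirName{\RuleCondense} collapses two identical assertions into one plus an indirection $\defCb{\cbVar_1}{\cbVar_0}$, which I would check leaves the blame state invariant against the constraint semantics. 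I expect the main obstacle to be the fork/join interaction: the hard work is a simulation invariant relating a single pre-split configuration to the multiset of parallel observations it spawns, maintained across the subset and blame rules (which may drop whole branches) and re-collapsed by $\join$, and showing that this invariant exactly characterizes the blame state of the original intersection or union. Establishing that invariant---rather than any individual rule---is where the subcontract preorder and the constraint semantics must be combined most carefully.
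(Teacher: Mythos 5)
Your proposal is correct and, at its core, follows the same strategy as the paper's own proof, which is only a three-sentence sketch: reuse the rule-by-rule induction behind Conjecture~\ref{thm:strong-preservation} for everything lifted by \RefTirName{\RuleBaseline}, and argue separately about the rules that reorder checks, considering whether the early contract is satisfied or violated. Where the paper leaves it at that remark, you make the reordering case precise via the subcontract preorder (Definition~\ref{def:naive}, Lemma~\ref{thm:preorder}): satisfaction of the retained, more restrictive contract implies satisfaction of the deleted one, and failure of the deleted one implies failure of the retained one, so deleting the weaker contract can only change \emph{which} label is blamed, never \emph{whether} blame arises --- exactly the weakening the conjecture permits. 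Beyond that, your plan is substantially more complete than the paper's: the sketch never mentions the factoring of $\optimize^*$ into $\translateS^*$, $\join^*$, and $\condense^*$, nor the parallel observations created by \RefTirName{\RuleForkUnion} and \RefTirName{\RuleForkIntersection}, nor the constraint bookkeeping of \RefTirName{\RuleJoinContract} and \RefTirName{\RuleCondense}; all of these genuinely need the per-phase lemmas you describe, and your identification of the fork/join simulation invariant (relating one pre-split configuration to the multiset of observations it spawns, maintained while subset and blame rules may discard whole branches) names the real crux that the paper's sketch silently skips. In short, your proposal subsumes the paper's argument and fills gaps the paper leaves open.
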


% _______        _           _           _   _____                 _ _       
%|__   __|      | |         (_)         | | |  __ \               | | |      
%   | | ___  ___| |__  _ __  _  ___ __ _| | | |__) |___  ___ _   _| | |_ ___ 
%   | |/ _ \/ __| '_ \| '_ \| |/ __/ _` | | |  _  // _ \/ __| | | | | __/ __|
%   | |  __/ (__| | | | | | | | (_| (_| | | | | \ \  __/\__ \ |_| | | |_\__ \
%   |_|\___|\___|_| |_|_| |_|_|\___\__,_|_| |_|  \_\___||___/\__,_|_|\__|___/
                                                                            
\section{Technical Results}
\label{sec:technical-results}

In addition to weak and string blame preservation, our contract simplification
grantees not to introduce more predicate checks at run time. Even throut we
reorganize and duplicate contract assertions, the total number of predicate
checks at run time remains the same or decrease on every transformation step.

\begin{definition}[Size]
  The size $\vert\ljL\vert$ of a closed term $\ljL$ is the number of predicates checks
  during reduction $\topreduce$ of $\ljL$.
\end{definition}

\begin{theorem}[Improvement]\label{thm:optimization}
  For each term $\ljM$ with $\emptystate,\ljM \optimize
  \cbState,\ljN$ it holds that $\vert\ljN\vert\leq\vert\ljM\vert$.
\end{theorem}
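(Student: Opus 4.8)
The plan is to reduce the global statement to a one-step property and then compose. The derivation of $\emptystate,\ljM \optimize \cbState,\ljN$ factors, by the defining inference rule, into three phases: a subset reduction $\emptystate,\ljM \translateS^* \cbState,\termT$, a join $\termT \join^* \ljL$, and a condense $\ljL \condense^* \ljN$. Each phase is the reflexive--transitive closure of a one-step relation, so it suffices to prove that every single step of $\translateS$, $\join$, and $\condense$ satisfies $\vert\cdot\vert_{\text{after}}\le\vert\cdot\vert_{\text{before}}$ and to chain the inequalities by induction on the length of the derivation. Termination of the subset phase at a canonical term is guaranteed by Progress (Lemma~\ref{thm:subset/progress}), and the join and condense phases terminate by structural decrease, so the induction is well-founded. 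A further observation underpins compositionality: all rules merely relocate contracts and never touch the underlying $\J$-skeleton, so the number of times any position is reached under $\topreduce$ is invariant; hence a local non-increase inside a transformation context $\ctxF$ (even under a $\lambda$ or inside a branch) lifts to a global non-increase.

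The central difficulty is that $\vert\cdot\vert$ is a \emph{dynamic} quantity --- the number of predicate checks performed by $\topreduce$ --- whereas the transformation acts statically and passes through parallel observations $\fork\ljM\ljN$, which are not closed redexes. I would therefore first extend the measure to all intermediate terms, counting on a parallel observation each surviving predicate check only once across the two branches, so that the context shared by both branches of a fork contributes its checks a single time rather than twice; one then checks that on closed, fork-free terms this extended measure agrees with the dynamic count. With this definition the fork rules \RefTirName{\RuleForkUnion} and \RefTirName{\RuleForkIntersection} are measure-neutral, because the dynamics of union and intersection (\RefTirName{\RuleUnion} and \RefTirName{\RuleDIntersection}) already check \emph{both} alternatives, so splitting them into separate observations only redistributes checks. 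Dually, the join rules are neutral: \RefTirName{\RuleJoinContract} merges assertions through $\ctxJoin{\ctxA_r}{\ctxA_l}=\inhole{\ctxA_r}{\ctxMinus{\ctxA_l}{\ctxA_r}}$, which by construction never re-adds an assertion already present, while \RefTirName{\RuleJoinTermLeft} and \RefTirName{\RuleJoinTermRight} discard a $\blame\cbBlame$ branch carrying no residual checks, and \RefTirName{\RuleCondense} collapses two identical assertions by redirecting $\defCb{\cbVar_1}{\cbVar_0}$ without re-evaluating a predicate.

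The remaining work is a case analysis over the transformation rules, sorting each into neutral or strictly decreasing. The purely structural rules that unfold or reorder contracts (\RefTirName{\RuleUnfoldAssert}, \RefTirName{\RuleUnfoldUnion}, \RefTirName{\RuleUnfoldIntersection}, \RefTirName{\RuleUnfoldDFunction}, \RefTirName{\RuleUnfoldDIntersection}, \RefTirName{\RuleLower}, \RefTirName{\RuleLift}, \RefTirName{\RuleReverseI}, \RefTirName{\RuleReverseFalse}, \RefTirName{\RuleReverseIf}) perform no predicate evaluation and thus keep $\vert\cdot\vert$ fixed; since by design they ``do exactly what the dynamics would later do'', they induce the same sequence of \RefTirName{\RuleFlat}/\RefTirName{\RuleUnit} events (consistent with Strong Blame-preservation, Conjecture~\ref{thm:strong-preservation}). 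The strictly decreasing rules are those that discharge a check statically: \RefTirName{\RuleVerifyTrue}, \RefTirName{\RuleVerifyFalse}, and \RefTirName{\RuleConvertTrue} turn a flat contract into $\top$ or $\bot$, which reduce via \RefTirName{\RuleReduceTrue}/\RefTirName{\RuleReduceFalse} by emitting a constraint with no predicate evaluation, so one runtime check vanishes; \RefTirName{\RuleSubsetInner} and \RefTirName{\RuleSubsetOuter} delete a subsumed contract outright; and \RefTirName{\RuleBlame}, \RefTirName{\RuleBlameIfTrue}, \RefTirName{\RuleBlameIfFalse}, \RefTirName{\RuleBlameGlobal} replace a subterm by $\blame\cbBlame$, discarding any checks the abandoned branch would have run. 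Finally, \RefTirName{\RuleMerge} fuses two function contracts $\defFunC\conC\top$ and $\defFunC\top\conD$ of equal responsibility into $\defFunC\conC\conD$, retaining exactly one domain and one range check, hence non-increasing.

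The step I expect to be the main obstacle is \RefTirName{\RuleUnroll}, which grafts a delayed contract $\assert\termT\cbVar\conQ$ onto every occurrence of the bound variable, so that syntactically the number of assertions grows and a naive count would increase. The resolution is that a delayed contract is inert until its wrapped value is \emph{used in an application}, and the rule reuses the \emph{same} blame variable $\cbVar$ for all copies. I would make this precise by checking that in both the pre- and post-unroll term each application site of the variable triggers exactly one instance of \RefTirName{\RuleDFunction} (or \RefTirName{\RuleDDependent}, \RefTirName{\RuleDIntersection}), so the two terms induce the same multiset of predicate checks whether the variable is used zero, one, or many times; unused occurrences cost nothing. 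The cleanest rigorous backbone for the whole argument is to promote this observation into an injection from the predicate checks in the reduction of $\ljN$ into those of $\ljM$, with the removed checks (from the Verify, Subset, Blame, and Op rules) being exactly the non-image checks. Combining the neutral and decreasing cases then yields a non-increase for every single step, and composition over the three phases gives $\vert\ljN\vert\le\vert\ljM\vert$.
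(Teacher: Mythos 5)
Your proposal takes essentially the same route as the paper's proof: induction over the transformation with a per-rule case analysis sorting steps into check-preserving and check-eliminating ones, and with \RefTirName{\RuleUnroll} isolated as the crucial case, dispatched by observing that the unrolled term triggers exactly the checks the dynamics would have produced by substituting the contracted value. If anything, your treatment is more careful: the paper's proof never extends the measure $\vert\cdot\vert$ to parallel observations and handles the join and condense phases only implicitly under its ``Otherwise'' case, gaps that your proposal explicitly fills.
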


Finally, to prove soundness of our static contract simplification, we need to
show that our transformation terminates. 

\begin{theorem}[Termination]\label{thm:termination}
  For each term $\ljM$, there exists a canonical term $\termT$ such that
  $\cbState_M,\ljM \optimize^* \cbState_T,\termT$.
\end{theorem}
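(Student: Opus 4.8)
The plan is to exploit the layered definition of $\optimize^*$: every derivation $\emptystate,\ljM \optimize^* \cbState,\ljN$ is the composition of three independent phases---exhaustive $\translateS$-reduction to a canonical term, then $\join^*$, then $\condense^*$. I would establish termination of each phase separately. For the first phase the Progress lemma (Lemma~\ref{thm:subset/progress}) guarantees that any $\translateS$-normal form is already canonical, so it is enough to show that $\translateS$ admits no infinite reduction sequence; the later two phases only delete, merge, and synchronize contract assertions and parallel observations, and I would check that they preserve canonicity so that the final $\ljN$ is the required $\termT$.

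The crux is strong normalization of $\translateS$. The governing structural invariant is that the contract-erasure of a term---its underlying $\J$-skeleton---is left untouched by every rule \emph{except} the blame-collapsing rules \RefTirName{\RuleBlame}, \RefTirName{\RuleBlameIfTrue}, \RefTirName{\RuleBlameIfFalse}, and \RefTirName{\RuleBlameGlobal}, which replace a subterm by a $\blame\cbBlame$ leaf and thus \emph{strictly shrink} it. I would therefore take the skeleton size as the most significant component of a lexicographic measure $\mu(\ljM)$; all other components then operate within a fixed skeleton, relative to which every count is finite, and the set of contracts that can appear is the (finite) closure of the sub-contracts of $\ljM$ under replacing a flat contract by $\top$ or $\bot$ and under the wrapping $\defFunC\top\conC$ / $\defFunC\conC\top$ introduced by \RefTirName{\RuleLower} and \RefTirName{\RuleLift}, whose depth is capped by the fixed $\lambda$-nesting. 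The remaining components of $\mu$, from most to least significant, are: a position-weighted \emph{decomposition potential} of delayed contracts; the number of assertions lying strictly inside a $\lambda$-body and still liftable by \RefTirName{\RuleLower} or \RefTirName{\RuleLift}; the number of contract stackings still reversible by \RefTirName{\RuleReverseI} and \RefTirName{\RuleReverseFalse}; and the total structural size of all assertions plus the number of $\top$-assertions. The decomposition rules (\RefTirName{\RuleUnfoldUnion}, \RefTirName{\RuleUnfoldIntersection}, \RefTirName{\RuleUnfoldDFunction}, \RefTirName{\RuleForkUnion}, \RefTirName{\RuleForkIntersection}) drop the decomposition potential; \RefTirName{\RuleLower} and \RefTirName{\RuleLift} raise the size component but strictly lower the liftability component; the reordering rules lower the reversibility component; and verification, \RefTirName{\RuleConvertTrue}, \RefTirName{\RuleSubsetInner}, \RefTirName{\RuleSubsetOuter}, and \RefTirName{\RuleMerge} lower a later component without raising an earlier one.

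The main obstacle is \RefTirName{\RuleUnroll}, the sole rule that duplicates a contract: it strips a delayed contract $\conQ$ off an argument and grafts a copy onto every free occurrence of the bound variable in the body, so a naive size measure would \emph{increase}. I would defeat this by making the decomposition potential position-sensitive---a delayed contract sitting in the argument position of a redex (ready to be unrolled) receives a weight strictly larger than the combined weight its copies acquire once they land on variable occurrences. This is legitimate precisely because the number of such occurrences is bounded by the fixed skeleton size, so a sufficiently large argument-position weight can be chosen once and for all; and since each planted copy is afterwards consumed only by \RefTirName{\RuleUnfoldDFunction}, \RefTirName{\RuleUnfoldOp}, or the conditional rule---all of which shrink contracts---the duplicated weight can never be recovered. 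The delicate point to verify is that \RefTirName{\RuleUnroll} feeds exactly the internal assertions that \RefTirName{\RuleLower} and \RefTirName{\RuleLift} later consume, so the decomposition component must strictly dominate the liftability component in the lexicographic order; confirming this non-interference across all pairs of rules is the bulk of the work.

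Granting strong normalization of $\translateS$, the remaining phases are routine. Termination of $\join^*$ follows from the measure counting the number of parallel observations together with the number of differing assertions across a synchronized pair, since \RefTirName{\RuleJoinUnit} removes an observation while the synchronization rules each lower the mismatch count; termination of $\condense^*$ is immediate because \RefTirName{\RuleCondense} strictly decreases the number of contract assertions. Composing the three normalizing phases and appealing to the Progress lemma for canonicity of each normal form yields a canonical $\termT$ with $\cbState_M,\ljM \optimize^* \cbState_T,\termT$.
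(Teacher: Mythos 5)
Your proposal is correct in spirit and substantially more rigorous than the paper's own argument, but it takes a genuinely different route. The paper's proof is a two-point informal argument: (1) there are no cycles, because every rule except \RefTirName{\RuleUnroll} decomposes, eliminates, or grafts a contract, and \RefTirName{\RuleUnroll}'s duplication cannot cycle since delayed contracts are never lifted back out (\RefTirName{\RuleLift} handles only immediate contracts); and (2) no step introduces new checks, by appeal to Theorem~\ref{thm:optimization}. It never constructs a measure and never discusses the $\join$ and $\condense$ phases at all. You share the paper's key insight---\RefTirName{\RuleUnroll} is the sole duplicating rule and its copies move strictly inward, never to return---but you package it as an explicit well-founded lexicographic measure with the contract-erased skeleton dominant, plus separate termination measures for $\join^*$ and $\condense^*$. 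What your route buys is a genuine well-foundedness argument: acyclicity over an infinite term space, which is all the paper establishes, does not by itself imply termination, so your measure closes a real logical gap (at the cost of the deferred rule-by-rule non-interference check, which the paper also never does). Two slips to repair when completing it: first, a planted copy of a delayed contract can itself be re-unrolled at a deeper application inside the body, so it is not true that copies are consumed only by \RefTirName{\RuleUnfoldDFunction}, \RefTirName{\RuleUnfoldOp}, or the conditional rule; your position-sensitive weights still work, but they must be defined recursively over the fixed skeleton (weight of a position strictly exceeding the summed weights of all positions reachable by one unroll) rather than by a single uniform constant. Second, the blame-collapsing rules do not strictly shrink the skeleton in the degenerate case where the collapsed subterm is already a blame leaf (e.g.\ $\lambda\ljVar.(\assert{\blame\cbBlame}\cbVar\bot)$ rewrites to itself), so strong normalization fails as literally stated; since Theorem~\ref{thm:termination} is existential, it suffices to argue weak normalization under a strategy that never fires such identity steps, or to add that side condition to your measure argument.
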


% ___             _   _         _   ___          _           _   _          
%| _ \_ _ __ _ __| |_(_)__ __ _| | | __|_ ____ _| |_  _ __ _| |_(_)___ _ _  
%|  _/ '_/ _` / _|  _| / _/ _` | | | _|\ V / _` | | || / _` |  _| / _ \ ' \ 
%|_| |_| \__,_\__|\__|_\__\__,_|_| |___|\_/\__,_|_|\_,_\__,_|\__|_\___/_||_|

\section{Practical Evaluation}
\label{sec:evaluation}

To give an insight into the run time improvements of our system we apply the
simplifications to different versions of the \lstinline{addOne} example from
Section~\ref{sec:overview}. Our testing procedure uses the \lstinline{addOne}
function to increase a counter on each iteration in a while loop.

Each example program addresses a certain property and contains a different
number of contracts. For example, \emph{Example2} corresponds the
\lstinline{addOne} function in Section~\ref{sec:overview/unroll} and
\emph{Example4} corresponds to the function in
Section~\ref{sec:overview/subset}.
Appendix~\ref{sec:appendix/practical-evaluation} shows the JavaScript
implementation of all benchmark programs.

To run the examples we use the \TJS\ \cite{KeilThiemann2015-treatjs} contract
system for JavaScript and the SpiderMonkey\footnote{%
  \url{https://developer.mozilla.org/en-US/docs/Mozilla/Projects/SpiderMonkey}
} JavaScript engine. Lacking an implementation (beyond the
PLT Redex model) we applied the simplification steps manually.

\begin{figure}[t]
  \centering
  \small
  \begin{tabular}{ l || r || r@{~~}r || r@{~~}r}
    \toprule
    
    \textbf{Benchmark}& 
    \multicolumn{1}{c ||}{\textbf{Normal}}&
    \multicolumn{2}{c ||}{\textbf{Baseline}}&
    \multicolumn{2}{c}{\textbf{Subset}}\\

    &
    \textit{time (sec)}&
    \multicolumn{2}{c ||}{\textit{time (sec)}}&
    \multicolumn{2}{c}{\textit{time (sec)}}\\

    \midrule

    \emph{Example1}&
    39.4&
    27.0&  (-31.27\,\%)&
    27.0&  (-31.37\,\%)\\

    \emph{Example2}&
    87.1&
    58.4&  (-33.00\,\%)&
    46.1&  (-47.12\,\%)\\

    \emph{Example3}&
    66.5&
    54.4&  (-18.17\,\%)&
    26.5&  (-60.11\,\%)\\

    \emph{Example4}&
    114.5&
    85.1&  (-25.63\,\%)&
    44.6&  (-61.01\,\%)\\

    \emph{Example5}&
    148.3&
    108.0& (-27.18\,\%)&
    60.0&  (-59.55\,\%)\\

    \emph{Example6}&
    295.7&
    200.0&  (-32.36\,\%)&
    118.6&  (-59.90\,\%)\\

    \bottomrule
  \end{tabular}
  \caption{%
    Results from running the \TJS\ contract system.
    Column \textbf{Normal} gives the baseline execution time of the unmodified
    program, whereas column \textbf{Baseline} and column \textbf{Subset} contain
    the execution time and the improvement (in percent) after applying the
    baseline or subset simplification, respectively.
  }
  \label{fig:evaluation}
\end{figure}

Figure~\ref{fig:evaluation} shows the execution time required for a loop with
\lstinline{100000} iterations. The numbers indicate that the \emph{Baseline
Simplification} improves the run time by approximately 28\%, whereas the
\emph{Subset Simplification} makes an improvement up to 62\%.
Appendix~\ref{sec:appendix/practical-evaluation} shows the full table of
results.

In addition, there is another aspects not already addressed by
\citet{KeilThiemann2015-treatjs}: Adding contracts may also prevent a program
from being optimized. The SpiderMonkey engine, for example, uses two different
optimizing compilers to speed up operation. When activating SpiderMonkey's
optimizing JIT, the run time of the normal program without contracts decreases
by 20x, whereas the run time of the program with contracts only decreases by 2x.
This lack of optimization opportunities for the JIT is one important
reason for the big run time deterioration they reported.

% _____      _       _           _  __          __        _    
%|  __ \    | |     | |         | | \ \        / /       | |   
%| |__) |___| | __ _| |_ ___  __| |  \ \  /\  / /__  _ __| | __
%|  _  // _ \ |/ _` | __/ _ \/ _` |   \ \/  \/ / _ \| '__| |/ /
%| | \ \  __/ | (_| | ||  __/ (_| |    \  /\  / (_) | |  |   < 
%|_|  \_\___|_|\__,_|\__\___|\__,_|     \/  \/ \___/|_|  |_|\_\

\section{Related Work}
\label{sec:related-work}

\paragraph*{Higher-Order Contracts}

Software contracts were introduced with Meyer's \emph{Design by
Contract}{\texttrademark} methodology \cite{Meyer1988} which stipulates the
specification of Hoare-like pre- and postconditions for all components of a
program and introduces the idea of monitoring these contracts while the program
executes.

\citet{FindlerFelleisen2002} extend contracts and contract
monitoring to higher-order functional languages. Their work has attracted a
plethora of follow-up works that ranges from semantic investigations
\cite{BlumeMcallester2006,FindlerBlume2006} over studies on blame
assignment \cite{DimoulasFindlerFlanaganFelleisen2011,WadlerFindler2009} to
extensions in various directions: intersection and union contracts
\cite{KeilThiemann2015-blame}, polymorphic contracts \cite{AhmedFindlerSiekWadler2011,BeloGreenbergIgarashiPierce2011},
behavioral and temporal contracts
\cite{DimoulasTobin-HochstadtFelleisen2012,DisneyFlanaganMcCarthy2011},
etc.

\paragraph*{Contract Validation}

Contracts may be validated statically or dynamically. However, most frameworks
perform run-time monitoring as proposed in Meyer's work. 

Dynamic contract checking, as for example performed by
\citet{FindlerFelleisen2002,KeilThiemann2015-treatjs}, enable programmers to
specify properties of a component without restricting the flexibility
of the underlying dynamic programming language. However, run time
monitoring imposes a significant overhead as it extends the original program with
contract checks.

Purely static frameworks (e.g., ESC/Java
\cite{FlanaganLeinoLillibridgeNelsonSaxeStata2002}) transform specifications and
programs into verification conditions to be verified by a theorem prover. Others
\cite{XuPeytonJonesClaessen2009,Tobin-HochstadtHorn2012} rely on
program transformation and symbolic execution to prove adherence to contracts. 

Static contract checking avoids additional run-time checks, but
existing approaches are incomplete and  limited to check a restricted
set of properties. They rely on theorem proving engines to discharge
contracts that are written in the host language: the translation to
logic may not always succeed.

\paragraph*{Contract Simplification}

\citet{NguyenTobinHochstadtVanHorn2014} present a static contract
checker that has evolved from Tobin-Hochstadt and Van Horn's work on
symbolic execution \cite{TobinHochstadtVanHorn2012}. Their approach is to verify
contracts by executing programs on \emph{unknown} abstract values that are
refined by contracts as they are applied. If a function meets its obligation, the corresponding
contract gets removed. However, their approach only applies to the positive side
of a function contract.

Propagating contracts is closely related to symbolic executions. Both refine the
value of a term based on contracts, whether by moving the contract through the term
or by using an abstract value.

Compared to our work, \citet{NguyenTobinHochstadtVanHorn2014} addresses the opposite direction. Where we
unroll a contract to its enclosing context and decompose a contract into its
components, they verify the function's obligations based on the given domain
specification. However, in case the given contract cannot be verified, they must
retain the whole contract at its original position and they are not able to
simplify the domain portion of a function contract.

Furthermore, their symbolic verification is not able to handle true alternatives
in the style of intersection and union contracts.

We claim that both approaches are complementary to one another. Our
contract simplification would benefit from a preceding verification that
simplifies the function's obligations before unrolling a contract to the
enclosing context.

\citet{Xu2012} also combines  static and dynamic
contract checking. Her approach translates contracts into verification
conditions that get verified statically. Whereas satisfied conditions will be
removed, there may be conditions that cannot be proved: they remain in
the source program in the form of dynamic checks.

%  _____                 _           _             
% / ____|               | |         (_)            
%| |     ___  _ __   ___| |_   _ ___ _  ___  _ __  
%| |    / _ \| '_ \ / __| | | | / __| |/ _ \| '_ \ 
%| |___| (_) | | | | (__| | |_| \__ \ | (_) | | | |
% \_____\___/|_| |_|\___|_|\__,_|___/_|\___/|_| |_|
                                                  
\section{Conclusion}
\label{sec:conclusion}

The goal of static contract simplification is to reduce the overhead
of run-time contract checking.
To this end, we decompose contracts and propagates their components
through the program.
In many cases we are able to discharge parts of contracts
statically. 
The remaining components are simplified and reassembled to a new and
cheaper contract at a module boundary.

As a  novel aspect, we can simplify contracts that cannot be verified
entirely at compile time. The remaining contract fragments stay in the
program and get lifted to the enclosing 
module boundary.

% To this end we formalize two sets of transformation rules focussing on weak and
% strong blame preservation. We further formalize a subcontract relation to reduce
% contracts that are subsumed by other contracts and split alternatives into
% separated observation.

A case study with microbenchmarks shows some promise. Our simplification 
decreases the total number of predicate checks at run time and thus reduces
the run-time impact caused by contracts and contract monitoring. The study
also shows that the degree of improvement depends on the granularity of
contracts. Fine-grained contracts enable more improvement.

%%% Local Variables: 
%%% mode: latex
%%% TeX-master: "main"
%%% End: 

%\acks
%Acknowledgments, if needed.

% We recommend abbrvnat bibliography style.

\bibliographystyle{abbrvnat}

% The bibliography should be embedded for final submission.

%\bibliography{local,abbrevs,papers,theses,misc,collections,books}

\cleardoublepage%

\appendix
%  ___             _            _     _   
% / __|___ _ _  __| |_ _ _ __ _(_)_ _| |_ 
%| (__/ _ \ ' \(_-<  _| '_/ _` | | ' \  _|
% \___\___/_||_/__/\__|_| \__,_|_|_||_\__|
%                                         
% ___       _   _     __         _   _     
%/ __| __ _| |_(_)___/ _|__ _ __| |_(_)___ 
%\__ \/ _` |  _| (_-<  _/ _` / _|  _| / _ \
%|___/\__,_|\__|_/__/_| \__,_\__|\__|_\___/
                                          
\section{Constraint Satisfaction}
\label{sec:appendix/constraints}

The dynamics in Figure~\ref{fig:lcon/semantics} use constraints to create a
structure for computing positive and negative blame according to the semantics
of subject and context satisfaction, respectively. To this end, each blame
identifier $\cbIdent$ is associated with two truth values, $\subject{\cbIdent}$
and $\context{\cbIdent}$. Intuitively, if $\subject{\cbIdent}$ is false, then
the contract $\cbIdent$ is not subject-satisfied and may lead to positive blame
for $\cbIdent$. If $\context\cbIdent$ is false, then there is a context that
does not respect contract $\cbIdent$ and may lead to negative blame for
$\cbIdent$. 

An \emph{interpretation} $\cbSol$ of a constraint list $\cbState$ is a mapping
from blame identifiers to records of elements of $\cbTruth = \{\ljTrue,
\ljFalse\}$, such that all constraints are satisfied. We order truth values by
$\ljTrue \sqsubset \ljFalse$ and write $\sqsubseteq$ for the reflexive closure
of that ordering. This ordering reflects gathering of informations with each
execution step.

Formally, we specify the mapping by
$$\cbSol \in (\Rangeof{\cbIdent} \times \{\subjectName, \contextName\}) \to \cbTruth$$
where $\Rangeof{\cbIdent}$ ranges over metavariable $\cbIdent$ and constraint
satisfaction by a relation $\satisfies\cbSol\cbState$, which is specified in
Figure~\ref{fig:lcon/constraints}.

\begin{figure}[tp]
  \begin{displaymath}
    \makeTruth{\ljV} =
    \begin{cases}
      \ljFalse & \ljV = \ljFalse \\
      \makeTruth\ljW & \ljV = \assert\ljW\cbVar\conQ \\
      \ljTrue & \text{otherwise}
    \end{cases}
  \end{displaymath}
  \caption{Mapping values to truth values.}
  \label{fig:lcon/maketruth}
\end{figure}

\begin{figure}[tp]
  \begin{mathpar}
    \inferrule[\RuleCSEmpty]{}{%
      \satisfies\cbSol\emptystate
    }

    \inferrule[\RuleCSState]{%
      \satisfies\cbSol\cbCstr\\
      \satisfies\cbSol\cbState 
    }{%
      \satisfies\cbSol{\append\cbState\cbCstr}
    }

    \inferrule[\RuleCTIndirection]{%
      \cbSol(\subject\cbBlame) \sqsupseteq \cbSol(\subject\cbVar)\\
      \cbSol(\context\cbBlame) \sqsupseteq \cbSol(\context\cbVar)
    }{%
      \satisfies\cbSol{\defCb\cbBlame\cbVar}
    }

    \inferrule[\RuleCTFlat]{%
      \cbSol(\subject\cbIdent) \sqsupseteq \makeTruth\ljVal\\
      \cbSol(\context\cbIdent) \sqsupseteq \ljTrue
    }{%
      \satisfies\cbSol\defCb\cbIdent\ljVal
    }

    \inferrule[\RuleCTFunction]{%
      \cbSol(\subject\cbIdent) \sqsupseteq \cbSol(\context{\cbVar_1}
      \wedge (\subject{\cbVar_1} \Rightarrow \subject{\cbVar_2}))\\
      \cbSol(\context\cbIdent) \sqsupseteq \cbSol(\subject{\cbVar_1}
      \wedge \context{\cbVar_2})
    }{%
      \satisfies\cbSol{\defCb\cbIdent{\defFunC{\cbVar_1}{\cbVar_2}}}
    }

    \inferrule[\RuleCTIntersection]{%
      \cbSol(\subject\cbIdent) \sqsupseteq \cbSol(\subject{\cbVar_1}
      \wedge \subject{\cbVar_2}) \\
      \cbSol(\context\cbIdent) \sqsupseteq \cbSol(\context{\cbVar_1} 
      \vee \context{\cbVar_2})
    }{%
      \satisfies\cbSol{\defCb\cbIdent{\defCapC{\cbVar_1}{\cbVar_2}}}
    }

    \inferrule[\RuleCTUnion]{%
      \cbSol(\subject\cbIdent) \sqsupseteq \cbSol(\subject{\cbVar_1}
      \vee \subject{\cbVar_2})\\
      \cbSol(\context\cbIdent) \sqsupseteq \cbSol(\context{\cbVar_1}
      \wedge \context{\cbVar_2})
    }{%
      \satisfies\cbSol{\defCb\cbIdent{\defCupC{\cbVar_1}{\cbVar_2}}}
    }

    \inferrule[\RuleCTInversion]{%
      \cbSol(\subject\cbBlame) \sqsupseteq \cbSol(\context\cbVar)\\
      \cbSol(\context\cbBlame) \sqsupseteq \cbSol(\subject\cbVar)
    }{%
      \satisfies\cbSol{\defCb\cbBlame{\neg\cbVar}}
    }

  \end{mathpar}
  \caption{Constraint satisfaction.}
  \label{fig:lcon/constraints}
\end{figure}

In the premisses, the rules apply a constraint mapping $\cbSol$ to boolean
expressions over constraint variables. This application stands for the obvious
homomorphic extension of the mapping.

Every mapping satisfies the empty list of constraints (\RefTirName{\RuleCSEmpty}).
The concatenation of a constraint with a constraint list corresponds to the
intersection of sets of solutions (\RefTirName{\RuleCSState}). The indirection
constraint just forwards its referent (\RefTirName{\RuleCTIndirection}).

In rule \RefTirName{\RuleCTFlat}, $\ljW$ is the outcome of the predicate of a
flat contract. The rule sets subject satisfaction to $\ljFalse$ if
$\ljW=\ljFalse$ and otherwise to $\ljTrue$, where the function
$\makeTruth{\cdot} : \Rangeof\ljVal \to \cbTruth$ translates values to truth
values by stripping delayed contracts (see
Figure~\ref{fig:lcon/maketruth}). A flat contract never blames its
context so that $\context\cbIdent$ is always true.

The rule \RefTirName{\RuleCTFunction} determines the blame assignment for a
function contract $\cbIdent$ from the blame assignment for the argument and
result contracts, which are available through $\cbVar_1$ and $\cbVar_2$. Let's
first consider the subject part. A function satisfies contract $\cbIdent$ if it
satisfies its obligations towards its argument $\context{\cbVar_1}$ \textbf{and}
if the argument satisfies its contract then the result satisfies its contract,
too. The first part arises if the function is a higher-order function, which may
pass illegal arguments to its function-arguments. The second part is partial
correctness of the function with respect to its contract.

A function's context (caller) satisfies the contract if it passes an argument
that satisfies contract $\subject{\cbVar_1}$ \textbf{and} uses the result
according to its contract $\context{\cbVar_2}$.  The second part becomes
non-trivial with functions that return functions.

The rule \RefTirName{\RuleCTIntersection} determines the blame assignment for an
intersection contract at $\cbIdent$ from its constituents at
$\cbVar_1$ and $\cbVar_2$. A subject satisfies an intersection contract if it
satisfies both constituent contracts: $\subject{\cbVar_1} \wedge
\subject{\cbVar_2}$. A context, however, has the choice to fulfill one of the
constituent contracts: $\context{\cbVar_1} \vee \context{\cbVar_2}$.

Dually, the rule \RefTirName{\RuleCTUnion} determines the blame assignment for a
union contract at $\cbIdent$ from its constituents at $\cbVar_1$ and $\cbVar_2$.
A subject satisfies a union contract if it satisfies one of the constituent
contracts: $\subject{\cbVar_1} \vee \subject{\cbVar_2}$. A context, however,
needs to fulfill both constituent contracts: $\context{\cbVar_1} \wedge
\context{\cbVar_2}$, because it does not know which contract is satisfied by the
subject.

%\subsection{Constraint Extension}
%\label{sec:appendix/constraints/extension}
%
%\begin{figure}[tp]
%  \begin{mathpar}
%    \inferrule[\RuleCTInversion]{%
%      \cbSol(\subject\cbBlame) \sqsupseteq \cbSol(\context\cbVar)\\
%      \cbSol(\context\cbBlame) \sqsupseteq \cbSol(\subject\cbVar)
%    }{%
%      \satisfies\cbSol{\defCb\cbBlame{\neg\cbVar}}
%    }
%  \end{mathpar}
%  \caption{%
%    Inversion constraint.
%  }\label{fig:lcon/inversion}
%\end{figure}

\clearpage

%  ___                    _         _   _____                 
% / __|__ _ _ _  ___ _ _ (_)__ __ _| | |_   _|__ _ _ _ __  ___
%| (__/ _` | ' \/ _ \ ' \| / _/ _` | |   | |/ -_) '_| '  \(_-<
% \___\__,_|_||_\___/_||_|_\__\__,_|_|   |_|\___|_| |_|_|_/__/
                                                                                                                          
\section{Canonical Terms}
\label{sec:appendix/terms}

Canonical (non-transformable) terms are the output of our transformation. Each
level has its own output grammar consisting of $\Con$ term $\ljM$ that cannot be
further transformed at this level.

Canonical terms distinguish contract-free term (indicated by latter $\termS$)
from terms with a contract (indicated by latter $\termT$). 

% ___               _ _          
%| _ ) __ _ ___ ___| (_)_ _  ___ 
%| _ \/ _` (_-</ -_) | | ' \/ -_)
%|___/\__,_/__/\___|_|_|_||_\___|

\subsection{Canonical Term of the Baseline Transformation}
\label{sec:appendix/terms/baseline}

\begin{figure}[tp]
  \paragraph{Canonical Terms}
  \begin{displaymath}
    \begin{array}{lrl}

      \termSVal &\bbc& \ljConst \mid \lambda\ljVar.\termS\\
      \termSNVal &\bbc& \ljVar \mid \blame\cbBlame\\
      &\mid&
      \termTI\,\termT \mid \termTConst\,\termT \mid \termTAbs\,\termTI \mid
      \termTAbs\,\termTVal\\
      &\mid& \ljOp(\vec\termTNQ)\\
      &\mid& \ljIf\,{\termT}_0\,{\termT}_1\,{\termT}_2
      \quad\textbf{where not}\quad      
      {\termT}_1=\assert{{\termT}_1'}\cbVar\conC \textbf{~and~}\\
      &&{\termT}_2=\assert{{\termT}_2'}\cbVar\conC\\
      \\

      \termS &\bbc& \termSVal \mid \termSNVal \\
      \\

      \termTConst &\bbc& \ljConst \mid \assert\termTConst\cbVar\bot\\
      \termTAbs &\bbc& \lambda\ljVar.\termS \mid \assert\termTAbs\cbVar\bot\\
      \termTVal &\bbc& \termSVal \mid \assert\termTVal\cbVar\bot\\
      \\

      \termTI &\bbc& \termSNVal \mid \assert{\termTI}\cbVar\conI \mid
      \assert{\termTI}\cbVar\bot\\
      \termTQ &\bbc& \termTVal \mid \termTI \mid \assert{\termTQ}\cbVar\conQ\\
      \termTNQ &\bbc& \termTVal \mid \termTI\\
      \\
      
      \termT &\bbc& \termTQ\\ 

    \end{array}
  \end{displaymath}
  \vspace{\baselineskip}
  \paragraph{Non-Canonical Terms}
  \begin{displaymath}
    \begin{array}{lrl}

      \termO, \termP  &\bbc& 
      \lambda\ljVar.\termO \mid \termO\,\ljN \mid \ljM\,\termO \mid 
      \ljOp(\vec\ljM\,\termO\,\vec\ljN)\\
      &\mid& \ljIf\,\termO\,\ljM\,\ljN \mid \ljIf\,\ljL\,\termO\,\ljN \mid
      \ljIf\,\ljL\,\ljM\,\termO \mid \assert\termO\cbVar\conC\\
      
      &\mid& (\inV{\lambda\ljVar.\ljM})\,(\assert\ljN\cbVar\conQ) \mid
      (\assert\ljM\cbVar\conQ)\,\ljN\\
      &\mid& \assert{\inV{\ljVal}}\cbVar\conI \mid
      \lambda\ljVar.(\assert\ljM\cbVar\conC)\\
      
      &\mid& \ljOp(\vec\ljL\,(\assert\ljM\cbVar\conQ)\,\vec\ljN) \mid
      \ljIf\,\ljL\,(\assert\ljM\cbVar\conC)\,(\assert\ljN\cbVar\conC)\\
      
      &\mid& \assert\ljM\cbVar\top \mid
      \topassert\ljM\cbBlame\conC \mid
      \assert\ljM\cbVar\defCupC\conC\conD \mid
      \assert\ljM\cbVar\defCapC\conI\conC\\
      
      &\mid& \assert{(\assert\ljM{\cbVar_q}\conQ)}{\cbVar_i}\conI \mid
      \assert{(\assert\ljM{\cbVar_q}\conQ)}{\cbVar_i}\perp

    \end{array}
  \end{displaymath}
  \caption{%
    Canonical and non-canonical terms of the \emph{Baseline Transformation}.
  }\label{fig:appendix/terms/baseline}
\end{figure}

This sections considers the output terms of the \emph{Baseline Transformation}
(Section~\ref{sec:baseline/transformation}).
Figure~\ref{fig:appendix/terms/baseline} defines the syntax of canonical terms
$\termS$ and $\termT$ as a subset of $\Con$ term $\ljM$.

A terms without a contract $\termS$ is either a value term $\termSVal$,
including first-order constants and lambda abstractions, or a non-value term
$\termSNVal$, which is either a variable, a blame term, an application, a
primitive operation, or a condition.

Applications are divided into different combinations. This excludes all
reducible combinations and only the given terms remain. For example, a function
contract on the left side of an application is reducible, whereas a function
contract in the right side is only reducible if the left side is a lambda term.

A canonical term $\termT$ is a term $\termTQ$, which is either a term
$\termTVal$ or a term  $\termTI$ with an indefinite number of delayed contracts
on the outermost position. Terms $\termTVal$ are all values with remaining
$\bot$ contracts. Terms $\termTI$ are all terms with immediate contracts or
$\bot$ contracts on the outermost position. As all immediate contracts on
values will be reduced, immediate contracts only remain on non-value terms
$\termSNVal$.

In addition to the canonical terms, Figure~\ref{fig:appendix/terms/baseline}
shows the syntax of transformable (non-canonical) terms $\termO$, which is the
complement of the canonical terms $\termT$.

A non-canonical term $\termO$ is either a term that contains a non-canonical
term or one of the patterns addressed by our transformation rules in
Section~\ref{sec:baseline/transformation}.

\begin{lemma}[Terms]\label{thm:baseline/terms}
  For all terms $\ljM$ it holds that $\ljM$ is either a canonical term
  $\termT$ or a transformable (non-canonical) term $\termO$.
\end{lemma}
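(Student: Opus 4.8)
The plan is to prove the statement by structural induction on the term $\ljM$, showing that the grammar for canonical terms $\termT$ and the grammar for non-canonical terms $\termO$ in Figure~\ref{fig:appendix/terms/baseline} jointly generate every $\Con$ term. Since $\termO$ is intended as the complement of $\termT$, I would establish exhaustiveness (every $\ljM$ lies in $\termT\cup\termO$) as the main content, and verify disjointness in passing during the same induction by observing that the guards separating each canonical residual form from its corresponding redex pattern are complementary.

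The base cases are immediate: a constant $\ljConst$ and a closed abstraction are $\termSVal$, whereas a variable $\ljVar$ and a blame term $\blame\cbBlame$ are $\termSNVal$, hence canonical. For every compound constructor I would first apply the induction hypothesis to each immediate subterm to classify it as canonical or non-canonical. If any relevant subterm is non-canonical, the whole term matches a congruence production of $\termO$ ($\termO\,\ljN$, $\ljM\,\termO$, $\lambda\ljVar.\termO$, $\ljOp(\vec\ljM\,\termO\,\vec\ljN)$, $\ljIf\,\termO\,\ljM\,\ljN$, $\assert\termO\cbVar\conC$, and the analogous conditional cases), and we are done. The substantive work is confined to the situation in which all immediate subterms are already canonical.

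The two heavy cases are application and contract assertion. For an application $\ljM\,\ljN$ with both parts canonical, I would case-split on the shape of the function term: if it carries an outermost delayed contract ($\assert{\termTQ}\cbVar\conQ$) the term is an \RefTirName{\RuleUnfoldDFunction}/\RefTirName{\RuleUnfoldDIntersection} redex matching $(\assert\ljM\cbVar\conQ)\,\ljN$ in $\termO$; if it is a lambda and the argument carries a delayed contract the term is an \RefTirName{\RuleUnroll} redex matching $(\inV{\lambda\ljVar.\ljM})\,(\assert\ljN\cbVar\conQ)$; otherwise the pair falls into exactly one of the canonical productions $\termTI\,\termT$, $\termTConst\,\termT$, $\termTAbs\,\termTI$, or $\termTAbs\,\termTVal$, according to whether the head is an immediate-contracted non-value, a constant, or a lambda, and whether the argument is a value or an immediate-contracted non-value. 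The key invariant is to track the three-way distinction between value terms $\termTVal$, immediate-contracted non-values $\termTI$, and delayed-contracted terms $\assert{\termTQ}\cbVar\conQ$, since precisely this distinction decides whether an unroll- or unfold-redex is present. For a contract assertion I would case-split on $\conC$: the source form $\topassert\ljM\cbBlame\conC$ is always an \RefTirName{\RuleUnfoldAssert} redex; a union or intersection contract gives an \RefTirName{\RuleUnfoldUnion}/\RefTirName{\RuleUnfoldIntersection} redex; $\top$ gives a \RefTirName{\RuleConvertTrue} redex; an immediate contract on a value inside $\ctxV$ gives a \RefTirName{\RuleVerifyTrue}/\RefTirName{\RuleVerifyFalse} redex, whereas on a non-value $\termSNVal$ it is the canonical form $\assert{\termTI}\cbVar\conI$; and a delayed contract nested under an immediate or $\bot$ assertion exposes a \RefTirName{\RuleReverseI}/\RefTirName{\RuleReverseFalse} redex matching $\assert{(\assert\ljM{\cbVar_q}\conQ)}{\cbVar_i}\conI$. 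All remaining assertion shapes are the canonical residuals $\assert{\termTQ}\cbVar\conQ$ and $\assert{\cdot}\cbVar\bot$.

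The conditional case carries a minor but genuine subtlety: when all three subterms are canonical, the term is canonical only if the side condition in Figure~\ref{fig:appendix/terms/baseline} holds, namely that the two branches do not both carry a common outer contract; otherwise it is the \RefTirName{\RuleReverseIf} redex $\ljIf\,\ljL\,(\assert\ljM\cbVar\conC)\,(\assert\ljN\cbVar\conC)$ in $\termO$. I expect the application and assertion analyses to be the main obstacle, since they require verifying that the four canonical application productions together with the enumerated redex patterns exhaust all pairings of a canonical head and a canonical argument without overlap; organizing this enumeration so that each combination is covered exactly once is where the bookkeeping is most error-prone. Once these cases are discharged, the primitive-operation case follows the same pattern (a delayed contract on an operand is the \RefTirName{\RuleUnfoldOp} redex, otherwise each operand is a $\termTNQ$ and the term is $\ljOp(\vec\termTNQ)$), completing the induction.
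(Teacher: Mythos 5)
The paper never actually proves this lemma: it is stated immediately after Figure~\ref{fig:appendix/terms/baseline} and treated as holding by construction, since $\termO$ is defined to be the complement of $\termT$ (one congruence production per constructor, plus one production per redex pattern of the transformation rules); the companion progress lemma is likewise discharged as ``immediate from the definition.'' So your structural induction is doing work the paper leaves implicit, and its overall organization --- congruence productions absorb any non-canonical subterm, with the real content being the exhaustive-and-disjoint enumeration for application, assertion, primitive operation, and conditional --- is the right way to make that implicit claim precise. Your analyses of the application case (tracking the $\termTVal$/$\termTI$/delayed-contract trichotomy to detect \RefTirName{\RuleUnroll} and \RefTirName{\RuleUnfoldDFunction} redexes) and of the assertion case match the grammar of Figure~\ref{fig:appendix/terms/baseline} correctly.

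There is, however, one genuine hole: the abstraction case. You classify abstractions as base-case canonical values, but $\lambda\ljVar.\ljM$ is $\termSVal$ only when the body is a \emph{contract-free} canonical term $\termS$. The case needs the induction hypothesis on the body and splits three ways: body non-canonical, giving $\lambda\ljVar.\termO$; body canonical and contract-free, giving $\termSVal$; and body canonical but carrying a top-level assertion (a $\termTQ$ of the form $\assert{\termT'}\cbVar\conC$, including the $\bot$-wrapped values $\termTVal$). This third outcome is neither $\termSVal$ nor $\lambda\ljVar.\termO$; it is exactly the \RefTirName{\RuleLower} redex $\lambda\ljVar.(\assert\ljM\cbVar\conC)$ listed among the non-canonical terms. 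Your enumeration never reaches it, because your assertion analysis fires only when the assertion is the outermost constructor, so as written the induction is not exhaustive. The repair is a one-line addition using the same technique, but without it the claimed dichotomy fails precisely on the terms that \RefTirName{\RuleLower} is designed to transform.
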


% ___      _            _   
%/ __|_  _| |__ ___ ___| |_ 
%\__ \ || | '_ (_-</ -_)  _|
%|___/\_,_|_.__/__/\___|\__|

\subsection{Canonical Term of the Subset Transformation}
\label{sec:appendix/terms/subset}

\begin{figure}[tp]
  \paragraph{Canonical Terms}
  \begin{displaymath}
    \begin{array}{lrl}

      \termSVal &\bbc& \ljConst \mid \lambda\ljVar.\termS
      \quad\textbf{where}\quad \lambda\ljVar.\termS \neq
      \lambda\ljVar.\inB{\assert{x}\iota\conI}\\

      \termSNVal &\bbc& \ljVar \mid \blame\cbBlame\\
      &\mid&
      \termTI\,\termTQ \mid \ljConst\,\termTQ \mid \termSVal\,\termTI \mid
      \termSVal\,\termS\\
      &\mid& \ljOp(\vec\termTNQ)\\
      &\mid& \ljIf\,{\termTQ}\,\inA{\termTQ'}\,\inA{\termTQ''}
      \quad\textbf{where not}\quad      
      {\termTQ'}=\assert{{\termT}'}\cbVar\conC \textbf{~and~}\\
      &&{\termTQ''}=\assert{{\termT}''}\cbVar\conC\\
      \\

      \termTI &\bbc& \termSNVal \mid \assert{\termSNVal}\cbVar\conI\\
      &\mid& \assertWith{\inA{\assertWith{\termTI}{\cbBlame_0}{\cbVar_0}\conI}}{\cbBlame_1}{\cbVar_1}\conJ
      \quad\textbf{where}\quad \conI\not\sqsubseteq^*\conJ, \conJ\not\sqsubseteq^*\conI\\
      \\

      \termTQ &\bbc& \termSVal \mid \termTI \mid \assert{\termSVal}\cbVar\conQ \mid \assert{\termTI}\cbVar\conQ\\
      &\mid& \assertWith{\inA{\assertWith{\termTQ}{\cbBlame_0}{\cbVar_0}\conQ}}{\cbBlame_1}{\cbVar_1}\conR
      \quad\textbf{where}\quad \conQ\not\sqsubseteq^*\conR, \conQ\not\sqsubseteq^*\conR\\
      && \textbf{~and not~} \conQ\neq\defFunC\conC\top
      ,\conR\neq\defFunC\top\conD, \cbBlame_0=\cbBlame_1\\
      \\

      \termT &\bbc& \termTQ \mid \forked{\termT_r}{\termT_l} \mid \assert{\blame\cbBlame}\cbVar\bot\\ 

    \end{array}
  \end{displaymath}
  \paragraph{Non-Canonical Terms}
  \begin{displaymath}
    \begin{array}{lrl}

      \termO, \termP &\abc& 
      
      \assertWith{\inA{\assertWith{\ljM}{\cbBlame_0}{\cbVar_0}\conC}}{\cbBlame_1}{\cbVar_1}\conD
      \quad\textbf{where}\quad
      \conC\not\sqsubseteq^*\conD,\conD\not\sqsubseteq^*\conC\\

      &\mid& \assertWith{\inA{\assertWith{\ljM}{\cbBlame}{\cbVar_0}\conC}}{\cbBlame}{\cbVar_1}\conD
      \quad\textbf{where}\quad
      \conC\not\sqsubseteq\conD,\conD\not\sqsubseteq\conC\\

      &\mid& \lambda\ljVar.\inB{\assert{x}\iota\conI} \\

      &\mid& \ljIf\,\ljL\,(\assert\ljM\cbVar\conC)\,(\assert\ljN\cbVar\conC)\\

      &\mid& \assert\ljM\cbVar\bot 
       
    \end{array}
  \end{displaymath}

  \caption{%
    Canonical and non-canonical terms of the \emph{Subset Transformation}.
  }\label{fig:appendix/terms/subset}
\end{figure}

Figure~\ref{fig:appendix/terms/subset} shows the definition of canonical term
$\termT$ and source terms $\termS$ as a restriction of the already defined
canonical terms. For readability, we reuse the same symbols as in
Section~\ref{sec:appendix/baseline/terms}. The repeatedly defined terms replace
the already existing definition, whereas every other definition remains valid.

A lambda abstractions $\lambda\ljVar.\termS$ now did not contain immediate
contracts on its argument $\ljVar$ and terms only contain contracts that are not
subsumed by another contract on that term. Terms $\termT$ now contains a blame
terms with a remaining $\bot$ contract.

\begin{lemma}[Terms]\label{thm:subset/terms}
  For all terms $\ljM$ it holds that $\ljM$ is either a canonical term
  $\termT$ or a transformable (non-canonical) term $\termO$.
\end{lemma}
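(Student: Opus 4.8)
The plan is to prove the statement by structural induction on $\ljM$, showing that the two grammars cover all $\Con$ terms and are mutually exclusive: every term matches a production of the canonical grammar $\termT$ or a production of the non-canonical grammar $\termO$, never both. Since the subset canonical terms are a restriction of the baseline canonical terms and $\translateS$ extends $\translateB$ through \RefTirName{\RuleBaseline}, I would lean on Lemma~\ref{thm:baseline/terms} and re-examine only the forms whose classification changes at this level (assertions, applications, abstractions, conditionals, and forks).

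The induction skeleton is uniform across constructors. For a compound term I first apply the induction hypothesis to each immediate subterm. If some subterm is non-canonical, the whole term is non-canonical by the matching congruence production of $\termO$, inherited from the baseline grammar via the $\abc$ extension. Otherwise every subterm is canonical, and it remains only to decide whether the term itself is a redex for one of the rules of Figure~\ref{fig:subset/transformation}. The non-canonical grammar is engineered to enumerate exactly these redex shapes — the left-hand sides of \RefTirName{\RuleForkUnion}, \RefTirName{\RuleForkIntersection}, \RefTirName{\RuleLift}, \RefTirName{\RuleLower}, \RefTirName{\RuleMerge}, \RefTirName{\RuleSubsetInner}, and \RefTirName{\RuleSubsetOuter}, together with the lifted baseline redexes — while the canonical grammar carries side conditions that are precisely the negations of the firing conditions. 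Hence the two cases are exhaustive and disjoint, which is the claim.

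The delicate cases are assertions and applications, where classification hinges on the shape of the contract and on the subcontract preorder. For a stack $\assertWith{\inA{\assertWith{\termT}{\cbBlame_0}{\cbVar_0}\conC}}{\cbBlame_1}{\cbVar_1}\conD$ I would invoke Lemma~\ref{thm:preorder}. Because $\sqsubseteq^*$ is only a preorder and not total, the pair $(\conC,\conD)$ falls into exactly one of three situations: $\ConNSubseteq\conC\conD$ (a \RefTirName{\RuleSubsetInner} redex, which absorbs the equivalence case), $\ConNSubseteq\conD\conC$ but not $\ConNSubseteq\conC\conD$ (a proper \RefTirName{\RuleSubsetOuter} redex), or neither direction, which is the canonical alternative captured verbatim by the side condition $\conC\not\sqsubseteq^*\conD, \conD\not\sqsubseteq^*\conC$. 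The non-totality of the preorder is exactly what makes the third case genuine and is the crux of coverage here; the strict premise of \RefTirName{\RuleSubsetOuter} against the non-strict premise of \RefTirName{\RuleSubsetInner} keeps the two redex cases from overlapping, so $\termT$ and $\termO$ stay disjoint even where the contracts are equivalent. Analogously, the \RefTirName{\RuleMerge} redex — stacked delayed contracts of the forms $\defFunC\conC\top$ and $\defFunC\top\conD$ sharing a blame label — is excised from $\termTQ$ by its side condition, and the \RefTirName{\RuleLift} redex $\lambda\ljVar.\inB{\assert\ljVar\cbVar\conI}$ is excised from $\termSVal$. I expect the main obstacle to be the bookkeeping of discharging every such side condition against every applicable rule simultaneously, confirming that the negated conditions in the canonical grammar are jointly exhaustive so that no term slips between the two classifications; this is routine once the redex/side-condition correspondence is tabulated.
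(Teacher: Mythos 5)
The paper never proves this lemma: it is stated in Section~\ref{sec:appendix/terms/subset} without any accompanying argument, and the only place it is used, Lemma~\ref{thm:subset/progress}, is itself discharged with the one-liner ``Immediate from the definition of canonical and non-canonical terms.'' So your structural induction is not a different route so much as an explicit write-up of what the paper treats as holding by construction. Its skeleton is the right one: inherit the classification of unchanged productions from Lemma~\ref{thm:baseline/terms}, then check, for each form redefined at this level, that the side conditions in the canonical grammar complement the redex conditions of Figure~\ref{fig:subset/transformation}. Your identification of the delicate cases is also accurate --- the case split on $\ConNSubseteq\conC\conD$ versus $\ConNSubseteq\conD\conC$ versus neither (using reflexivity and non-totality of the preorder, Lemma~\ref{thm:preorder}, with the non-strict premise of \RefTirName{\RuleSubsetInner} absorbing the equivalence case), the \RefTirName{\RuleMerge}-shape exclusion in $\termTQ$, and the \RefTirName{\RuleLift}-shape exclusion in $\termSVal$ are exactly the points where coverage could break.

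The one part of your plan that does not survive contact with the paper is the disjointness claim (``never both''). As literally written, Figure~\ref{fig:appendix/terms/subset} does not make $\termT$ and $\termO$ disjoint: the first $\termO$ production for stacked assertions carries the side condition $\conC\not\sqsubseteq^*\conD$, $\conD\not\sqsubseteq^*\conC$ --- the very same incomparability condition as the canonical $\termTI$ production, rather than its negation (evidently a typo for the comparable case, which is what makes \RefTirName{\RuleSubsetInner}/\RefTirName{\RuleSubsetOuter} applicable) --- and, similarly, $\assert{\blame\cbBlame}\cbVar\bot$ is generated both by the canonical $\termT$ production and by the non-canonical production $\assert\ljM\cbVar\bot$. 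So on a literal reading only coverage goes through, i.e.\ the inclusive ``either\ldots or'' that the lemma actually asserts and that the progress argument needs; exclusivity could only be established relative to corrected side conditions. Your proposed tabulation of redexes against side conditions would surface this immediately, but as stated that half of your claim is false of the paper's definitions, so you should either drop it or flag it as a repair of the figure.
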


%    _     _        _____                   
% _ | |___(_)_ _   |_   _| _ __ _ __ ___ ___
%| || / _ \ | ' \    | || '_/ _` / _/ -_|_-<
% \__/\___/_|_||_|   |_||_| \__,_\__\___/__/

%\subsection{Final Join}
%\label{sec:appendix/terms/join}
%
%\begin{figure}[tp]
%  \begin{displaymath}
%    \begin{array}{lrl}
%      \termS &\bbc& \ljConst \mid \ljVar \mid \lambda\ljVar.\termT \mid
%      \termT\,\termT \mid \ljOp(\vec\termT) \mid
%      \ljIf\,\termT\,\termT\,\termT \mid \blame\cbBlame\\
%    \end{array}
%  \end{displaymath}
%  \caption{%
%    Final contract-free terms.
%  }\label{fig:appendix/terms/join}
%\end{figure}
%
%Figure~\ref{fig:appendix/terms/join} defines the syntax of our final terms as a
%subset of $\Con$ term $\ljM$. Final, contract free terms $\termS$ are all terms
%without a contract on the outermost term.

\clearpage

\section{Auxiliary Functions and Definitions}
\label{sec:appendix/auxiliary}

This section shows some auxiliary functions and definitions used in the paper.

\subsection{Function $\rootof\cbIdent\cbState$}
\label{sec:appendix/auxiliary/root}

\begin{figure}[tp]
  $$
  \ctxK \bbc \ljHole \mid \append\ctxK\cbCstr
  $$
  \caption{%
    Constraint context.
  }\label{sec:auxiliary/context}
\end{figure}

Each top-level contract assertion is rooted in one blame label $\cbBlame$, i.e.
for every blame identifier $\cbIdent$ there exists a path from $\cbIdent$ to a
blame label $\cbBlame$. 

To computer the root label $\cbBlame$ for an identifier $\cbIdent$ we need to
walk backwards through all constraints $\cbCstr$ on a path until we reach the
top level blame label $\cbBlame$.

To do so, Figure\ \ref{sec:auxiliary/context} introduces a constraint context
$\ctxK$, which is either a hole $\ljHole$ or constraint $\cbCstr$ extending a
constraint context $\ctxK$.

\begin{figure}[tp]
  $$
  \rootof\cbIdent\cbState
  =
  \begin{cases}
    \cbBlame, & \cbIdent=\cbBlame\\
    \rootof{\cbIdent'}\cbState, & 
    \cbIdent=\cbVar,\inK{\append{\cbState}{(\defCb{\cbIdent'}\cbVar)}}\\
    \rootof{\cbIdent'}\cbState, & 
    \cbIdent=\cbVar_0,\inK{\append{\cbState}{(\defCb{\cbIdent'}\defFunC{\cbVar_0}{\cbVar_1})}}\\
    \rootof{\cbIdent'}\cbState, & 
    \cbIdent=\cbVar_1,\inK{\append{\cbState}{(\defCb{\cbIdent'}\defFunC{\cbVar_0}{\cbVar_1})}}\\
    \rootof{\cbIdent'}\cbState, & 
    \cbIdent=\cbVar_0,\inK{\append{\cbState}{(\defCb{\cbIdent'}\defCapC{\cbVar_0}{\cbVar_1})}}\\
    \rootof{\cbIdent'}\cbState, & 
    \cbIdent=\cbVar_1,\inK{\append{\cbState}{(\defCb{\cbIdent'}\defCapC{\cbVar_0}{\cbVar_1}}}\\
    \rootof{\cbIdent'}\cbState, & 
    \cbIdent=\cbVar_0,\inK{\append{\cbState}{(\defCb{\cbIdent'}\defCupC{\cbVar_0}{\cbVar_1})}}\\
    \rootof{\cbIdent'}\cbState, & 
    \cbIdent=\cbVar_1,\inK{\append{\cbState}{(\defCb{\cbIdent'}\defCupC{\cbVar_0}{\cbVar_1})}}\\
    \rootof{\cbIdent'}\cbState, & 
    \cbIdent=\cbVar, \inK{\append{\cbState}{(\defCb{\cbIdent'}{\neg\cbVar})}}\\
    \cbIdent, & \textit{otherwise}
  \end{cases}
  $$
  \caption{%
    Function \textit{root-of}.
  }\label{sec:auxiliary/rootof}
\end{figure}

Figure\ \ref{sec:auxiliary/rootof} finally shows the computation of function
$\rootof\cbIdent\cbState$. 

\subsection{Sign of a Blame Identifier}
\label{sec:appendix/auxiliary/sign}

Following the definition in Section\ \ref{sec:contracts}, constraints $\cbCstr$
implement a tree rooted in a blame label $\cbBlame$. Within the tree every blame
identifier $\cbIdent$ can report positive or negative blame, depending on its
constraint when mapping one of its truth values ($\subject\cbVar$ or
$\context\cbVar$) to false.

Thus, we define the \emph{sign} of a blame identifier $\cbIdent$ as the
responsibility to satisfy the subject part in this identifier $\cbIdent$. 
We represent signs by blame term $\ljBlame$ that would result from a subject
blame in $\cbIdent$. 
\begin{definition}
  $\ljBlame$ is the sign of a blame identifier $\cbIdent$ in $\cbState$ if
  $\append\cbState{(\defCb{\cbIdent}{\ljFalse})}$ leads to $\blame\cbBlame$ on a
  single path for any $\cbBlame$.
\end{definition}
In this case, every alternative must be treated as neutral, i.e. they do not
stop or change the arising blame. Thus we consider only a single path from
$\cbIdent$  to its root $\cbBlame$.

\begin{figure}[tp]
  $$
  \inv\ljBlame
  =
  \begin{cases}
    \ljNBlame, & \ljBlame=\ljPBlame\\
    \ljPBlame, & \ljBlame=\ljNBlame\\
  \end{cases}
  $$
  \caption{%
    Blame inversion.
  }\label{sec:auxiliary/inversion}
\end{figure}

\begin{figure}[tp]
  $$
  \signof\cbIdent\cbState
  =
  \begin{cases}
    \ljPBlame, & \cbIdent=\cbBlame\\
    \signof{\cbIdent'}\cbState, & 
    \cbIdent=\cbVar,\inK{\append{\cbState}{(\defCb{\cbIdent'}\cbVar)}}\\
    \inv\signof{\cbIdent'}\cbState, & 
    \cbIdent=\cbVar_0,\inK{\append{\cbState}{(\defCb{\cbIdent'}\defFunC{\cbVar_0}{\cbVar_1})}}\\
    \signof{\cbIdent'}\cbState, & 
    \cbIdent=\cbVar_1,\inK{\append{\cbState}{(\defCb{\cbIdent'}\defFunC{\cbVar_0}{\cbVar_1})}}\\
    \signof{\cbIdent'}\cbState, & 
    \cbIdent=\cbVar_0,\inK{\append{\cbState}{(\defCb{\cbIdent'}\defCapC{\cbVar_0}{\cbVar_1})}}\\
    \signof{\cbIdent'}\cbState, & 
    \cbIdent=\cbVar_1,\inK{\append{\cbState}{(\defCb{\cbIdent'}\defCapC{\cbVar_0}{\cbVar_1}}}\\
    \signof{\cbIdent'}\cbState, & 
    \cbIdent=\cbVar_0,\inK{\append{\cbState}{(\defCb{\cbIdent'}\defCupC{\cbVar_0}{\cbVar_1})}}\\
    \signof{\cbIdent'}\cbState, & 
    \cbIdent=\cbVar_1,\inK{\append{\cbState}{(\defCb{\cbIdent'}\defCupC{\cbVar_0}{\cbVar_1})}}\\
    \inv\signof{\cbIdent'}\cbState, & 
    \cbIdent=\cbVar, \inK{\append{\cbState}{(\defCb{\cbIdent'}{\neg\cbVar})}}
  \end{cases}
  $$
  \caption{%
    Function \textit{sign-of}.
  }\label{sec:auxiliary/signof}
\end{figure}

To compute the sign of a blame identifier $\cbIdent$, Figure\
\ref{sec:auxiliary/inversion} first defines an inversion of blame terms
$\ljBlame$ that flips the responsibility. Figure\
\ref{sec:auxiliary/signof} than shows the computation of the sign of blame
identifier $\cbIdent$. It is defined by induction and tracks swapping of
responsibilities on function and inversion constraints.

\subsection{Computation of Blame Terms}
\label{sec:appendix/auxiliary/blame}

To computes a blame term $\blame\cbBlame$ resulting from a predicate violation
($\defCb{\cbVar}{\ljFalse}$) in $\cbVar$ we need to compute the sign of $\cbVar$
in $\cbState$ and to annotate the resulting blame term with the root blame
identifier $\cbBlame$ of $\cbVar$ in $\cbState$.
\begin{mathpar}
  \inferrule[]{%
    \ljBlame=\signof\cbIdent\cbState\\
    \cbBlame=\rootof\cbIdent\cbState
  }{%
    \blameOf\cbVar\cbState=\blame\cbBlame
  }
\end{mathpar}

\subsection{Compute joined Assertion Context}
\label{sec:appendix/auxiliary/context}

We define the join of two assertion contexts $\ctxA$ and $\ctxA'$ by
$$
\ctxJoin{\ctxA}{\ctxA'}=\inhole{\ctxA}{\ctxMinus{\ctxA'}{\ctxA}}
$$
Term $\ctxMinus{\ctxA'}{\ctxA}$ computes a new context with assertions from
$\ctxA'$ that are not already contained in $\ctxA$. Its computation is defined
as:
$$
\ctxMinus{\ctxA}{\ctxA'}=\begin{cases}
  \ctxMinus{\ctxA''}{\ctxA'}, & \ctxA=\assert{\ctxA''}\cbVar\conC \wedge
  \ctxIn{\assert\ljHole\cbVar\conC}{\ctxA'}\\
  \assert{(\ctxMinus{\ctxA''}{\ctxA'})}\cbVar\conC, & \ctxA=\assert{\ctxA''}\cbVar\conC \wedge
  \ctxNotIn{\assert\ljHole\cbVar\conC}{\ctxA'}\\
  \ctxA, & \textit{otherwise}
\end{cases}
$$
We write $\ctxIn{\assert\ljHole\cbVar\conC}{\ctxA}$ to indicate that the
assertion of $\conC$ under blame label $\cbBlame$ and blame variable $\cbVar$ is
contained in $\ctxA$, and $\ctxNotIn{\assert\ljHole\cbVar\conC}{\ctxA}$ for its
negation. Contract entailment is defined by:
\begin{mathpar}
  \inferrule[]{%
  }{%
    \ctxIn{\assert\ljHole\cbVar\conC}{\assert\ctxA\cbVar\conC}
  }\and
  \inferrule[]{%
    \ctxIn{\assert\ljHole\cbVar\conC}{\ctxA}
  }{%
    \ctxIn{\assert\ljHole\cbVar\conC}{\assert\ctxA\cbVar\conD}
  }
\end{mathpar}

\clearpage

\section{Picky Semantics}
\label{sec:appendix/picky}

The semantics in Section~\ref{sec:baseline} guarantee \emph{Strong
Blame-preservation} with respect to Lax evaluation semantics (cf.\
Section~\ref{sec:lcon/semantics}). Lax unpacks contracts from values that flow
to a predicate, i.e.\ a contract cannot violate another contract.

The \emph{Baseline Transformation} makes this intuitively when processing the
rule \RefTirName{\RuleReverseI}. It pushes a delayed contract out of an
assertion with an immediate contract, such that the predicate will not see the
delayed contract. This step is entirely required for following transformation
steps.

When considering Picky (or even Indy) evaluation semantics, a contract must
remain on a value when the value is used in a predicate. Consequently, we cannot
pushes a delayed contract out of an assertion with an immediate contract without
changing the blame behaviour of the underlying program.

\begin{figure}
  \begin{mathpar}

    \inferrule[\RuleReverseI]{%
    }{%
      \cbState,\,
      \fctx{\assert{(\assert\termT{\cbVar_1}\conQ)}{\cbVar_2}\conI}\,
      \translateB\,
      \cbState\,
      \fctx{\assert{(\assert\termT{\cbVar_2}(\extendI\conI{(\assert\ljHole{\cbVar_1}\conQ})))}{\cbVar_1}\conQ}
    }

  \end{mathpar}

  \caption{Changes for Picky Semantics.}
  \label{fig:appendix/picky/transformation}
\end{figure}

To enables Picky (or Indy) semantics, rule \RefTirName{\RuleReverseI}
(Figure~\ref{fig:baseline/transformation}) must be replayed by another rule that
unrolls a delayed contract in an immediate contract when pushing the delayed
contract. Figure~\ref{fig:appendix/picky/transformation} shows the new rule.

\begin{figure}
  \begin{mathpar}

    \inferrule[\RuleExtendPredicate]{%
      \ljM=\lambda\ljVar.\ljN
    }{%
      \cbState,\,\fctx{\extendI{\defFlatC{\ljM}}{(\assert\ljHole\cbVar\conQ})}
      \translateB
      \cbState,\,\fctx{\lambda\ljVar.\ljN[\ljVar\mapsto{(\assert\ljVar\cbVar\conQ)]}}
    }

    \inferrule[\RuleExtendNonPredicate]{%
      \ljM\neq\lambda\ljVar.\ljN
    }{%
      \cbState,\,\fctx{\extendI{\defFlatC{\ljM}}{(\assert\ljHole\cbVar\conQ})}
      \translateB
      \cbState,\,\fctx{\defFlatC{\ljM}}
    }

  \end{mathpar}

  \caption{Unroll delayed contract.}
  \label{fig:appendix/picky/extend}
\end{figure}

Here, term
$\assert\termT{\cbVar}(\extendI\conI{(\assert\ljHole{\cbVar'}\conQ}))$ is a new
intermediate term that unrolls a delayed contract in a predicate (cf.\ rule
\RefTirName{\RuleUnroll}). Figure~\ref{fig:appendix/picky/extend} shows its
semantics.

Rule\ \RefTirName{\RuleExtendPredicate} unrolls a delayed contract to all uses
of the value in a predicate and rule \RefTirName{\RuleExtendPredicate} proceeds
on the unchanged predicate in case that the predicate is not a lambda
abstraction.

Indeed, it would also be possible to replace predicate $\ljM$ by
$\assert\ljM\cbVar{\defFunC\conQ\top}$ as it enforces $\conQ$ when the predicate
is used in an application. However, unrolling the contract enables to apply further 
simplification steps in the function's body.

\clearpage

\section{Proofs of Theorems}
\label{sec:appendix/proofs}

\subsection{Proof of Lemma \ref{thm:baseline/progress}}

\begin{proof}[Proof of Lemma \ref{thm:baseline/progress}]
  Immediate from the definition of canonical and non-canonical terms
  (Section~\ref{sec:appendix/terms/baseline}).
\end{proof}

\subsection{Proof of Lemma \ref{thm:equivalence_relation}}

\begin{proof}[Proof of Lemma \ref{thm:equivalence_relation}]
  $\isEquivCtx{}{}$ is an equivalence relation if and only if it is reflexive,
  symmetric, and transitive. Proof by induction over $\isEquivCtx{}{}$.
\end{proof}

\subsection{Proof of Conjecture \ref{thm:strong-preservation}}

% TODO, define

\newcommand{\renaming}{\alpha}
\newcommand{\inAlpha}[2]{\alpha[#1\mapsto#2]}

\newcommand{\cbeq}[2]{#1\cong#2}
\newcommand{\cbeqa}[2]{#1\cong_{\renaming}#2}

\newcommand{\cdExtension}{\preccurlyeq}
\newcommand{\cbext}[2]{#1\cdExtension#2}

\subsection{Congruence of States and Terms}

\begin{definition}\label{def:cbeq}
  Two configurations $(\cbState_M,\ljM)$ and $(\cbState_N,\ljN)$ are structural
  equivalent under renaming $\renaming$, written
  $\cbeqa{\cbState_M,\ljM}{\cbState_N,\ljN}$ with
  $\renaming=\emptyset\mid\renaming\subst\cbVar{\cbVar'}$ iff
  ${\cbState_M,\ljM}\cong_{\renaming'}{(\cbState_N,\ljN)\subst\cbVar{\cbVar'}}$
  with $\renaming=\renaming'[\cbVar\mapsto\cbVar']$.
\end{definition}

\begin{lemma}\label{thm:cong}
  For all configurations $(\cbState_M,\ljM)$ and $(\cbState_N,\ljN)$ with
  $\cbeq{\cbState_M,\ljM}{\cbState_N,\ljN}$ it holds that 
  $\cbState_M,\ljM \reduce \cbState_M',\ljM'$ and $\cbState_N,\ljN \reduce
  \cbState_N',\ljN'$  with $\cbeq{\cbState_M',\ljM'}{\cbState_N',\ljN'}$.
\end{lemma}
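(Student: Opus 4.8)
The plan is to prove this as a \emph{strong bisimulation} property of the reduction relation $\reduce$ modulo renaming of blame variables. First I would unfold Definition~\ref{def:cbeq} to read $\cbeqa{\cbState_M,\ljM}{\cbState_N,\ljN}$ as the statement that $(\cbState_N,\ljN)$ is obtained from $(\cbState_M,\ljM)$ by applying the bijective blame-variable renaming $\renaming$: the two configurations become literally equal once $\renaming$ is applied, so they agree on their underlying $\J$-skeleton and on all contracts and differ only in the names of blame variables. Since the lemma implicitly quantifies over the witnessing renaming, I would first \emph{generalize} the claim to carry $\renaming$ explicitly and to allow it to grow: if $\cbeqa{\cbState_M,\ljM}{\cbState_N,\ljN}$ and $\cbState_M,\ljM \reduce \cbState_M',\ljM'$, then there is an extension $\renaming' \supseteq \renaming$ and a step $\cbState_N,\ljN \reduce \cbState_N',\ljN'$ with $\cbState_M',\ljM' \cong_{\renaming'} \cbState_N',\ljN'$. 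This extra generality is exactly what the fresh-name rules require, and it matches the inductive clause $\renaming=\renaming'[\cbVar\mapsto\cbVar']$ of Definition~\ref{def:cbeq}.

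The proof proceeds by case analysis on the rule used in the step $\cbState_M,\ljM \reduce \cbState_M',\ljM'$. The first thing to establish is that the \emph{same} rule fires on the $\ljN$ side at the same position: because $\reduce$ decomposes a term uniquely into an evaluation context $\ljCtx$ and a redex, and because $\renaming$ touches only blame-variable names and hence preserves the term skeleton, the value/non-value classification, and the shape of every contract, the configuration $(\cbState_N,\ljN)$ admits the same evaluation-context decomposition and exposes a redex of the same form. I would also note that the meta-operations appearing in the side conditions are insensitive to blame-variable names: $\unwrap{\cdot}$ and $\makeTruth{\cdot}$ strip blame annotations, and each $\delta_{\ljOp}$ erases contracts before computing, so the underlying $\J$-reductions in rule \RefTirName{\RuleBase} produce identical results on both sides.

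I would then split the rules into two families. For the \emph{non-allocating} rules---\RefTirName{\RuleBase} (covering \RefTirName{\RuleBeta}, \RefTirName{\RuleOp}, \RefTirName{\RuleIfTrue}, \RefTirName{\RuleIfFalse}), \RefTirName{\RuleFlat}, \RefTirName{\RuleUnit}, \RefTirName{\RuleDDependent}, \RefTirName{\RuleDOp}, \RefTirName{\RuleDIf}, \RefTirName{\RuleReduceTrue}, \RefTirName{\RuleReduceFalse}, and \RefTirName{\RuleDFalse}---no fresh blame variable is introduced, so the effect on term and state commutes with the renaming and I can take $\renaming'=\renaming$; a \RefTirName{\RuleUnit} step, for instance, adds the constraint $\defCb{\cbVar}{\ljW}$ on the $M$ side and $\defCb{\renaming(\cbVar)}{\renaming(\ljW)}$ on the $N$ side, which correspond under $\renaming$. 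For the \emph{allocating} rules---\RefTirName{\RuleAssert}, \RefTirName{\RuleUnion}, \RefTirName{\RuleIntersection}, \RefTirName{\RuleDFunction}, \RefTirName{\RuleDIntersection}, \RefTirName{\RuleDDomain}, \RefTirName{\RuleDRange}---the $M$ step chooses some $\cbVar\not\in\cbState_M$ (and possibly a second fresh variable). On the $N$ side I pick a matching fresh variable $\cbVar'\not\in\cbState_N$ and set $\renaming'=\renaming[\cbVar\mapsto\cbVar']$. The freshly added constraints (e.g.\ $\defCb{\cbVar}{\defFunC{\cbVar_1}{\cbVar_2}}$ for \RefTirName{\RuleDFunction}) and the new term fragments then correspond under $\renaming'$ by construction.

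The main obstacle---really the only nontrivial bookkeeping---is verifying that extending the renaming on the allocating rules keeps it a well-formed bijective renaming and preserves $\cong$. Here I would argue that the domain of $\renaming$ is exactly the set of blame variables occurring in $\cbState_M,\ljM$ and its range the set occurring in $\cbState_N,\ljN$; since $\cbVar$ is fresh for $\cbState_M$ (hence outside $\dom{\renaming}$) and $\cbVar'$ is fresh for $\cbState_N$ (hence outside the range), the map $\renaming[\cbVar\mapsto\cbVar']$ stays injective, and its domain and range grow to match the enlarged configurations. A subsidiary point to check is that the side condition $\cbVar\not\in\cbState$ is stable under $\renaming$, so that the matching $\cbVar'$ can always be found on the $N$ side; this is immediate because $\renaming$ is a finite bijection and the supply of blame variables is denumerable. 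Symmetry of the argument (swapping the roles of $M$ and $N$) completes the bisimulation, and since $\cong=\bigcup_{\renaming}\cong_{\renaming}$, the unindexed statement of the lemma follows.
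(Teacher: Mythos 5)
Your proposal is correct, and it is considerably more complete than what the paper offers: the paper's entire proof of Lemma~\ref{thm:cong} reads ``Proof by induction over $\cbState$ and $\ljM$,'' with no cases, no invariant, and no treatment of the renaming. Your argument differs in the induction scheme---you proceed by case analysis on the rule deriving the step $\cbState_M,\ljM \reduce \cbState_M',\ljM'$ (i.e., on the reduction derivation), organized as a strong bisimulation, rather than by structural induction on $\cbState$ and $\ljM$---and that choice is arguably the more appropriate one for a preservation statement about a small-step relation, since the relevant case split is exactly which redex rule fires. The genuine value you add is the explicit strengthening to an $\renaming$-indexed statement in which the renaming may grow, $\renaming' = \renaming[\cbVar\mapsto\cbVar']$, together with the split into non-allocating and allocating rules and the check that freshness on both sides keeps $\renaming'$ a bijection; this is precisely the bookkeeping that Definition~\ref{def:cbeq} makes necessary (its inductive clause is stated via extension of $\renaming$) and that the paper's one-line proof silently omits. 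You also verify the points the paper takes for granted: that renaming preserves evaluation-context decomposition, and that $\unwrap{\cdot}$, $\makeTruth{\cdot}$, and the $\delta_{\ljOp}$ are insensitive to blame-variable names. Where the paper's sketch buys brevity, yours buys an actually checkable argument; the two are compatible, and your proof can be read as the expansion the paper's sketch would need in order to go through.
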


\begin{proof}[Proof of Theorem \ref{thm:cong}]
  Proof by induction over $\cbState$ and $\ljM$.
\end{proof}

\subsection{Congruence of States and Terms}

\begin{definition}\label{def:cbext}
  State $\cbState'$ is an effect-free extension of state $\cbState$, written
  $\cbext{\cbState}{\cbState'}$, iff $\cbState,\ljM \reduce \cbState'',\ljM'$
  and $\cbState',\ljM \reduce \cbState''',\ljM'$ with
  $\cbext{\cbState''}{\cbState'''}$. 
%  $\forall\cbCstr\in\cbState.\,\cbCstr\in\cbState'$ and\todo{!}.
\end{definition}

%\begin{lemma}\label{thm:cb-extension}
%  For all terms $\ljM$ and states $\cbState$ and $\cbState'$ with
%  $\cbext{\cbState}{\cbState'}$ it holds that $\cbState,\ljM \reduce
%  \cbState'',\ljM'$ and $\cbState',\ljM \reduce \cbState''',\ljM'$ with
%  $\cbext{\cbState''}{\cbState'''}$.
%\end{lemma}

\begin{lemma}\label{thm:effect-free}
  For terms $\ljM$, states $\cbState$, and transformations $\cbState,\ljM
  \optimize \cbState',\ljN$ it hold that $\cbext{\cbState}{\cbState'}$.
\end{lemma}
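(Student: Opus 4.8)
The final lemma (Lemma \ref{thm:effect-free}) claims that whenever $\cbState,\ljM \optimize \cbState',\ljN$, the resulting state $\cbState'$ is an \emph{effect-free extension} of $\cbState$ in the sense of Definition \ref{def:cbext}: running any term $\ljM$ against $\cbState$ versus $\cbState'$ yields results whose states stand again in the $\cdExtension$-relation. Intuitively, the transformation only *adds* constraints (it records blame-variable information as it propagates contracts), and these additions never change the blame verdict — they are bookkeeping that preserves the interpretation of every pre-existing blame label. So I need to show that the constraints appended by a single optimization step are "inert" with respect to blame computation.

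Let me think about what the optimization step actually does to the state.

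Looking at the transformation rules (Figures \ref{fig:baseline/transformation}, \ref{fig:subset/transformation}, \ref{fig:subset/condense}), each one either leaves $\cbState$ unchanged or appends one of a small, fixed repertoire of constraints: indirections $\defCb\cbBlame\cbVar$ / $\defCb\cbVar{\cbVar'}$, the Boolean facts $\defCb\cbVar\ljTrue$ (from \RuleConvertTrue) or $\defCb\cbVar\ljFalse$, composite constraints for function/union/intersection, and inversions $\defCb\cbIdent{\neg\cbVar}$. Crucially the appended constraints are always keyed on *fresh* blame variables (the side conditions $\cbVar_1,\cbVar_2\not\in\cbState$) — they redirect a fresh variable's truth value to old ones, or assert $\ljTrue$/$\ljFalse$ on a fresh variable, but they never overwrite the record of an identifier already constrained in $\cbState$.

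**Plan of attack.** The argument is an induction on the derivation of $\cbState,\ljM \optimize \cbState',\ljN$, unfolding through the definition of $\optimize^*$ and its single-step pieces $\translateS$, $\join$, $\condense$. Since $\cdExtension$ is itself a relation defined via reduction (Definition \ref{def:cbext}), and since Lemma \ref{thm:cong} already gives us that congruent configurations reduce in lockstep to congruent configurations, the real content is to exhibit, for each transformation rule, that the freshly appended constraints can be \emph{solved independently} of the pre-existing ones. Formally I would prove the following key invariant: if $\cbState' = \append\cbState\cbCstr$ where $\cbCstr$ constrains a fresh variable (or forwards into $\cbState$ via an indirection/inversion), then every interpretation $\cbSol$ of $\cbState$ extends to an interpretation $\cbSol'$ of $\cbState'$ agreeing with $\cbSol$ on all of $\cbState$'s identifiers, and conversely every interpretation of $\cbState'$ restricts to one of $\cbState$. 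This is exactly what the constraint-satisfaction rules in Figure~\ref{fig:lcon/constraints} guarantee: \RuleCSState\ treats concatenation as intersection of solution sets, and each composite rule defines the fresh identifier's two truth values as a monotone function (via $\sqsupseteq$) of already-determined variables. Hence appending such a constraint neither removes solutions that matter for old labels nor creates a blame state on any old label that was not already blamed.

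**The main obstacle.** The delicate rules are the ones that \emph{remove} a contract rather than merely append — namely \RuleSubsetInner, \RuleSubsetOuter, and the join rules \RuleJoinTermLeft, \RuleJoinTermRight, which drop a weaker contract or a $\blame\cbBlame$ branch. For these, effect-freeness of the \emph{state} is not enough by itself: I must lean on the subcontract invariant $\ConNSubseteq\conC\conD$ (Definition \ref{def:naive}) and the fact that the surviving stronger contract forces a blame state precisely when the discarded one would have — so no \emph{old} label's verdict is silently altered. The rules \RuleBlame, \RuleBlameGlobal, and \RuleCondense need the auxiliary guarantees $\blame\cbBlame=\blameOf\cbVar\cbState$ from Appendix~\ref{sec:appendix/auxiliary/blame}, i.e.\ the sign/root computations correctly identify which label a $\bot$-failure routes to, so that replacing a body by $\blame\cbBlame$ matches the constraint it leaves behind. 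I expect the hard part to be discharging \RuleMerge\ and \RuleCondense: there the appended indirections $\defCb{\cbVar_2}{\cbVar_3},\defCb{\cbVar_1}{\cbVar_3}$ (resp.\ $\defCb{\cbVar_1}{\cbVar_0}$) fold two previously separate identifiers onto one, and I must verify via \RuleCTIndirection\ and \RuleCTFunction\ that the merged function constraint computes the same subject/context truth values as the two original function-portion constraints did — this is where a careful case analysis on $\makeTruth{\cdot}$ and the $\sqsupseteq$-monotonicity of the constraint functions is unavoidable, though still routine once the invariant above is in place.
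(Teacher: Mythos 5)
Your core argument is exactly the paper's own proof, which is far terser than yours: the paper simply observes that every constraint appended by $\optimize$ is keyed on a fresh blame variable, and concludes by induction over $\optimize$ (leaning on Lemma~\ref{thm:cong} for the lockstep-reduction part of Definition~\ref{def:cbext}). Your freshness invariant and the extension/restriction of interpretations is a fleshed-out version of the same idea, so on the central point you and the paper agree.

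Where you diverge is your ``main obstacle'' paragraph, and it is worth noting that this obstacle does not actually exist for \emph{this} lemma. The statement $\cbext{\cbState}{\cbState'}$ concerns only the constraint states, and Definition~\ref{def:cbext} is phrased in terms of $\reduce$, which never inspects the state except through freshness side conditions. The rules you flag as delicate --- \RefTirName{\RuleSubsetInner}, \RefTirName{\RuleSubsetOuter}, \RefTirName{\RuleBlame}, and the join rules --- remove or rewrite contracts \emph{in the term} but leave the state untouched (their conclusions carry $\cbState$ through unchanged), so there is nothing to discharge for them here: the subcontract invariant, $\blameOf\cbVar\cbState$, and the $\makeTruth{\cdot}$ case analysis are simply not needed. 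Those concerns are real, but they belong to weak blame preservation (Conjecture~\ref{thm:weak-preservation}), i.e.\ to the question of whether the \emph{term-level} rewriting changes which violation is reported --- not to effect-freeness of the state extension. Likewise \RefTirName{\RuleMerge} and \RefTirName{\RuleCondense} append only indirection constraints on fresh variables, so they are covered by the very same freshness argument as every other rule; no special treatment via \RefTirName{\RuleCTIndirection}/\RefTirName{\RuleCTFunction} is required. In short: your proof is correct, but roughly half of it is solving a harder problem than the lemma asks.
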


\begin{proof}[Proof of Theorem \ref{thm:cong}]
  Immediate by the definition of $\reduce$ and $\optimize$. Every new constrains
  is composed of fresh blame variable $\cbVar$.
  Proof by induction over $\optimize$.
\end{proof}

\begin{proof}[Proof Sketch for Conjecture \ref{thm:strong-preservation}]
  Let reduction output $O\bbc\ljVal\mid\ljBlame^\cbBlame$.
  For all terms $\ljM$ and states $\cbState_M$ with $\cbState_M,\ljM \translateB
  \cbState_N,\ljN$ it hold that $\cbState_M,\ljM \topreduce \cbState_M',O$ and
  $\cbState_N,\ljN \topreduce \cbState_N',O$ with
  $\cbeqa{\cbState_M'}{\cbState_N'}$.

  As our transformation only propagates contract assertions and did not modify
  the underlying program, we know that if $\cbState,\ljM\reduce^*\cbState',\ljV$ and
  $\cbState,\ljN\reduce^*\cbState'',\ljW$ then $\ljV=\ljW$.

  To prove preservation, we show that the transformed term results in the same
  reduction sequence than the original term and that the both, state and terms,
  are equivalent under variable renaming $I$.

  Proof by induction over $\translateB$.

  \begin{description}
    \item [Rule \RefTirName{\RuleUnfoldAssert}:]
      $\ljM=\topassert\termT\cbBlame\conC$, 
      $\ljN=\assert\termT\cbVar\conC$, and
      $\cbState_N=\append{\cbState_M}{(\defCb{\cbBlame}{\cbVar'})}$.
      \begin{enumerate}
        \item[(1)]
          From $\ljM$ and $\topreduce$ it follows that
          $\cbState_M,\topassert\termT\cbBlame\conC\reduce^*\cbState_M',\topassert\ljVal\cbBlame\conC$
          and by rule \RefTirName{\RuleAssert} it follows that
          $\cbState_M',\topassert\ljV\cbBlame\conC\reduce\append{\cbState_M'}{(\defCb{\cbBlame}{\cbVar})},\,\assert\ljV{\cbVar}\conC$.
        \item[(2)]
          From lemma~\ref{thm:cong} it follows the
          $\cbState_N,\assert\termT{\cbVar'}\conC\reduce^*\cbState_N',\assert\ljV{\cbVar'}\conC$
          with $\cbext{\cbState_M'}{\cbState_N'}$.
      \end{enumerate}
      Claim holds because
      $\cbeqa{\append{\cbState_M'}{(\defCb{\cbBlame}{\cbVar})},\assert\ljV{\cbVar}\conC}{\cbState_N',\assert\ljV{\cbVar'}\conC}$
      with $\renaming=\emptyset\subst{\cbVar}{\cbVar'}$.

    \item [Rule \RefTirName{\RuleUnfoldUnion}:]
      Equivalent to case  \RefTirName{\RuleUnfoldAssert}.

    \item [Rule \RefTirName{\RuleUnfoldIntersection}:]
      Equivalent to case  \RefTirName{\RuleUnfoldAssert}.

     \item [Rule \RefTirName{\RuleUnfoldDFunction}:]
       $\ljM=(\assert{\termT_1}\cbVar{(\defFunC\conC\conD)})\, \termT_2$, 
       $\ljN=(\assert{\termT_1\, (\assert{\termT_2}{\cbVar_1}\conC))}{\cbVar_2}\conD$, and
       $\cbState_N=\append{\cbState_M}{(\defCb\cbVar\defFunC{\cbVar_1'}{\cbVar_2'})}$.
       \begin{enumerate}
         \item[(1)]
           From $\ljM$ and $\topreduce$ it follows that
           $\cbState_M,(\assert{\termT_1}\cbVar{(\defFunC\conC\conD)}),\termT_2
           \reduce^*
           \cbState_M',(\assert{\ljV_1}\cbVar{(\defFunC\conC\conD)}),\termT_2$
           and
           $\cbState_M',(\assert{\ljV_1}\cbVar{(\defFunC\conC\conD)}),\termT_2
           \reduce^*
           \cbState_M'',(\assert{\ljV_1}\cbVar{(\defFunC\conC\conD)}),\ljV_2$.
           By rule \RefTirName{\RuleDFunction} it follows that
           $\cbState_M'',(\assert{\ljV_1}\cbVar{(\defFunC\conC\conD)}),\ljV_2
           \reduce
           \append{\cbState_M''}{(\defCb\cbVar\defFunC{\cbVar_1}{\cbVar_2})},
           (\assert{\ljV_1\,(\assert{\ljV_2}{\cbVar_1}\conC))}{\cbVar_2}\conD$.

         \item[(2)]
           From lemma~\ref{thm:cong}, $\ljN$ and $\topreduce$ it follows that
           $\cbState_N,(\assert{\termT_1\,(\assert{\termT_2}{\cbVar_1'}\conC))}{\cbVar_2'}\conD
           \reduce^*
           \cbState_N',(\assert{\ljV_1\,(\assert{\termT_2}{\cbVar_1'}\conC))}{\cbVar_2'}\conD$
           and 
           $\cbState_N',(\assert{\ljV_1\,(\assert{\termT_2}{\cbVar_1'}\conC))}{\cbVar_2'}\conD
           \reduce^*
           \cbState_N'',(\assert{\ljV_1\,(\assert{\ljV_2}{\cbVar_1'}\conC))}{\cbVar_2'}\conD$
           with $\cbeqa{\cbState_M''}{\cbState_N''}$.
       \end{enumerate}  
      Claim holds because
      $\cbeqa{\append{\cbState_M''}{(\defCb\cbVar\defFunC{\cbVar_1}{\cbVar_2})},
           (\assert{\ljV_1\,(\assert{\ljV_2}{\cbVar_1}\conC))}{\cbVar_2}\conD}
           {\cbState_N'',(\assert{\ljV_1\,(\assert{\ljV_2}{\cbVar_1'}\conC))}{\cbVar_2'}\conD}$
           with $\renaming=\emptyset\subst{\cbVar_1}{\cbVar_2'}\subst{\cbVar_2}{\cbVar_2'}$.

    \item [Rule \RefTirName{\RuleUnfoldDIntersection}:]
      Equivalent to case \RefTirName{\RuleUnfoldDFunction}.

    \item [Rule \RefTirName{\RuleUnroll}:]
      Immediate from rule \RefTirName{\RuleBeta}.

    \item [Rule \RefTirName{\RuleLower}:]
      Immediate from rule \RefTirName{\RuleDFunction}.

    \item [Rule \RefTirName{\RuleReverseI}:]
      Immediate from rule \RefTirName{\RuleFlat}.

    \item [Rule \RefTirName{\RuleReverseFalse}:]
      Immediate from rule \RefTirName{\RuleFlat}.

    \item [Rule \RefTirName{\RuleReverseIf}:]
      Immediate from rules \RefTirName{\RuleIfTrue} and \RefTirName{\RuleIfFalse}.

    \item [Rule \RefTirName{\RuleConvertTrue}:]
      Immediate from rule \RefTirName{\RuleUnit}.

  \end{description}

\end{proof}

\subsection{Proof of Lemma \ref{thm:preorder}}

\begin{proof}[Proof of Lemma \ref{thm:preorder}]
  $\ConNSubseteq{}{}$ is a preorder if and only if it is reflexive and transitive.
  Proof by induction over $\ConNSubseteq{}{}$.
\end{proof}

\subsection{Proof of Lemma \ref{thm:subset/progress}}

\begin{proof}[Proof of Lemma \ref{thm:subset/progress}]
  Immediate from the definition of canonical and non-canonical terms
  (Section~\ref{sec:appendix/terms/subset}).
\end{proof}

\subsection{Proof of Conjecture \ref{thm:weak-preservation}}

\begin{proof}[Proof Sketch for Conjecture \ref{thm:weak-preservation}]
  Equivalent to proof of Conjecture \ref{thm:strong-preservation}. However,
  certain rules change the order of contract checks. So wen need to consider two
  cases: one where the early contract is satisfied, and one where it is
  violated.
\end{proof}

\subsection{Proof of Theorem \ref{thm:optimization}}

\begin{proof}[Proof of Theorem \ref{thm:optimization}]
  Proof by induction over $\optimize$. It is immediate from the definition that
  all rules, expect rule \RefTirName{\RuleUnroll}, either decompose, eliminate,
  or graft a contract. It remains to shows that:
  \begin{enumerate}
    \item no rules duplicates contract checks, and
    \item no rule pushes contract checks from one function body (or branch in a
      condition) to the enclosing body.
  \end{enumerate}

  \begin{description}
    \item [Rule \RefTirName{\RuleUnroll}].
      This rule grafts the assertion of a delayed contract to all uses of the
      contracted value. From $\reduce$, we have that for all
      $\lambda\ljVar.\ljM\,(\assert\ljN\cbVar\conQ)$ first $\ljN\reduce^*\ljVal$
      and by rule \RefTirName{\RuleBeta} it follows that
      $\lambda\ljVar.\ljM\,(\assert\ljVal\cbVar\conQ)$ reduces to
      $\lambda\ljVar.\ljM\subst\ljVar{(\assert\ljVal\cbVar\conQ)}$.
      By rule \RefTirName{\RuleUnroll} we have that
      $\lambda\ljVar.\ljM\,(\assert\ljN\cbVar\conQ)$ translates to
      $\lambda\ljVar.\ljM\subst\ljVar{(\assert\ljVar\cbVar\conQ)}\,\ljN$.
      From $\reduce$, we now have that $\ljN\reduce^*\ljVal$ and by rule
      \RefTirName{\RuleBeta} if follows that
      $\lambda\ljVar.\ljM\subst\ljVar{(\assert\ljVar\cbVar\conQ)}\subst\ljVar\ljVal$,
      which is equivalent to the result of the original reduction.

    \item [Rule \RefTirName{\RuleLift}]. 
      This rule lifts an immediate contract nested in a function body to a new
      function contract. However, this contract never applies if the function is
      not executed. Thus, it preserves the number contract checks.

    \item [Rule \RefTirName{\RuleLower}]. 
      This rule lowers a contract to the function boundary. But, this contract
      never applies if the function is not executed. It also preserves the number
      contract checks.

    \item [Rule \RefTirName{\RuleReverseIf}]. 
      This rule pushes an immediate contract out of a condition, if the contract
      definitely applies to both branches. Thus, it preserves the number
      contract checks.

    \item [Otherwise] Every other rule either decomposes or eliminates a contract, or
      reverses the order of contract checks. Thus they reduces or preserves the
      total number contract checks at run time.

  \end{description}
\end{proof}

\begin{proof}[Proof of Theorem \ref{thm:termination}]
  The transformation terminates with a canonical term because
  \begin{inparaenum}[(1)]
  \item there are no cycles and 
  \item there is no transformation step that introduces new contract check.
  \end{inparaenum}

  \begin{enumerate}
    \item 
      All rules, except rule \RefTirName{\RuleUnroll}, decompose or eliminate a
      contract, or grafts a contract to an enclosing term. Only rule
      \RefTirName{\RuleUnroll} duplicates a contract to all uses of the
      contracted value. However, as the transformation did not lift delayed
      contracts, it is not a cycle.
    \item Immediate from Theorem \ref{thm:optimization}.     
  \end{enumerate}
\end{proof}

\clearpage

\section{Simplify Intersection and Union}
\label{sec:appendix/simplify}

Apart from the subset optimization, which reduces a contract assertion based on
more restrictive assertion, we can also use the subcontract relation to simplify
contracts before unfolding.

For example, the intersection contract
\lstinline{Positive? -> Natural? cap Natural? -> Positive?} can be simplified to
a single function contract \lstinline{Positive? -> Positive?}. This is because,
the context can choose to call the function with a natural or positive number.
However, in case the input is a natural number, but not a positive number, than
the range must be a positive number. And, in case the input is a positive
number, than it is also a natural number (as positive numbers are a proper
subset of natural numbers) and thus the output needs to satisfy both range
contract, i.e.\@ it must be a positive number.

So, we can simplify the intersection because
\lstinline{Natural? -> Positive?} is an ordinary subset of
\lstinline{Positive? -> Natural?} (cf. Section \ref{sec:subset/subset}). Its
domain is less or equal restrictive and its range is more or equal restrictive.

\begin{figure}
  \begin{mathpar}
    \inferrule[\RuleSimplifyUnion/1]{%
      \ConSubseteq\conC\conD
    }{%
      \cbState,\,
      \fctx{\topassert\ljVal\cbBlame{(\defCapC\conC\conD)}}\,
      \translateB\,
      \cbState,\,
      \fctx{\topassert\ljVal\cbBlame{\conD}}
    }

    \inferrule[\RuleSimplifyUnion/2]{%
      \ConSubseteq\conD\conC
    }{%
      \cbState,\,
      \fctx{\topassert\ljVal\cbBlame{(\defCapC\conC\conD)}}\,
      \translateB\,
      \cbState,\,
      \fctx{\topassert\ljVal\cbBlame{\conC}}
    }

    \inferrule[\RuleSimplifyIntersection/1]{%
      \ConSubseteq\conC\conD
    }{%
      \cbState,\,
      \fctx{\topassert\ljVal\cbBlame{(\defCapC\conC\conD)}}\,
      \translateB\,
      \cbState,\,
      \fctx{\topassert\ljVal\cbBlame{\conC}}
    }

    \inferrule[\RuleSimplifyIntersection/2]{%
      \ConSubseteq\conD\conC
    }{%
      \cbState,\,
      \fctx{\topassert\ljVal\cbBlame{(\defCapC\conC\conD)}}\,
      \translateB\,
      \cbState,\,
      \fctx{\topassert\ljVal\cbBlame{\conD}}
    }

  \end{mathpar}
  \caption{Contract simplification.}
  \label{fig:appendix/simplification}
\end{figure}

In general, we can simplify intersection and union contracts if one operand is
an ordinary subset of the other operand. Figure\
\ref{fig:appendix/simplification} shows some simplification rules that omits one
branch of an alternative if that alternative is subsumed by the other branch.

To keep the system deterministic, we must simplify contracts at assertion time
and thus we skip rule \RefTirName{\RuleUnfoldAssert} in this cases.

\clearpage

\section{Introducing new Success Contracts}
\label{sec:appendix/success}

Our core rules in Section~\ref{sec:baseline} and Section~\ref{sec:subset} follows
the overall guideline not to approximate contract violations. However,
developers are free to choose a higher degree of optimization, while
over-approximating contract failures at compile time and introducing new module
boundaries.

This level's rules create some kind of success contracts by over-approximating
contracts and by lifting contracts to the upper-most boundary. However, this is
not a real optimization as it might lift checks that never happen at run time.

\begin{figure}[tp]
  \begin{displaymath}
    \begin{array}{lrl}

      \termSVal &\bbc& \ljConst \mid \lambda\ljVar.\termS
      \quad\textbf{where}\quad \lambda\ljVar.\termS \neq
      \lambda\ljVar.\inB{\assert{x}\iota\conC}\\

      \termSNVal &\bbc& \ljVar \mid \blame\cbBlame\\
      &\mid&
      \termTI\,\termTQ \mid \ljConst\,\termTQ \mid \termSVal\,\termTI \mid
      \termSVal\,\termS\\
      &\mid& \ljOp(\vec\termTI) \mid \ljOp(\vec\termSVal)\\
      &\mid& \ljIf\,{\termTQ}\,\inA{\termTQ'}\,\inA{\termTQ''}
      \quad\textbf{where not}\quad      
      {\termTQ'}=\assert{{\termT}'}\cbVar\conC \textbf{~and~}\\
      &&{\termTQ''}=\assert{{\termT}''}\cbVar\conC\\
      \\

      \termTI &\bbc& \termSNVal \mid \assert{\termSNVal}\cbVar\conI\\
      &\mid& \assertWith{\inA{\assertWith{\termTI}{\cbBlame_0}{\cbVar_0}\conI}}{\cbBlame_1}{\cbVar_1}\conJ
      \quad\textbf{where}\quad \conI\not\sqsubseteq^*\conJ, \conJ\not\sqsubseteq^*\conI\\
      \\

      \termTQ &\bbc& \termSVal \mid \termTI \mid \assert{\termSVal}\cbVar\conQ \mid \assert{\termTI}\cbVar\conQ\\
      &\mid& \assertWith{\inA{\assertWith{\termTQ}{\cbBlame_0}{\cbVar_0}\conQ}}{\cbBlame_1}{\cbVar_1}\conR
      \quad\textbf{where}\quad \conQ\not\sqsubseteq^*\conR, \conQ\not\sqsubseteq^*\conR\\
      && \textbf{~and not~} \conQ\neq\defFunC\conC\top
      ,\conR\neq\defFunC\top\conD, \cbBlame_0=\cbBlame_1\\
      \\

      \termT &\bbc& \termTQ \mid \forked{\termT_r}{\termT_l}\\ 

    \end{array}
  \end{displaymath}
  \caption{%
    Canonical terms and contexts.
  }\label{fig:success/syntax}
\end{figure}

\begin{figure}

  \begin{mathpar}

    \inferrule[\RuleLift]{%
      \cbVar_1\not\in{\cbState}\\
      \ljVar\in\free{\gctx{\ljVar}}
    }{%
      \cbState,\,
      \fctx{\lambda\ljVar.\gctx{\assert\ljVar\cbVar\conC}}\,
      \translateAInT\,
      \append\cbState{(\defCb\cbVar{\neg\cbVar_1})},\,
      \fctx{\assert{(\lambda\ljVar.\gctx{\ljVar})}{\cbVar_1}\defFunC\conC\top}\,
    }

    \inferrule[\RuleBlame]{%
    }{%
      \cbState,\,
      \fctx{\assert\termT\cbVar\bot}\,
      \translateA\,
      \append\cbState{(\defCb{\cbVar}{\ljFalse})},\,
      \fctx{\termT}
    }

    \inferrule[\RuleSubset]{%
      \cbState,\,
      \ljM\,
      \translateSInT\,
      \cbState',\,
      \ljN
    }{%
      \cbState,\,
      \ljM\,
      \translateAInT\,
      \cbState',\,
      \ljN
    }

    \inferrule[\RuleTrace]{%
      \cbState,\,
      \ljM\,
      \translateAInT\,
      \cbState',\,
      \ljN
    }{%
      \cbState,\,
      \inT\ljM\,
      \translateA\,
      \cbState',\,
      \inT\ljN
    }

  \end{mathpar}

  \caption{Approximations rules.}
  \label{fig:success/transformation}
\end{figure}

Figure~\ref{fig:success/syntax} defines the syntax of our canonical terms as a
subset of $\Con$ term $\ljM$, and Figure~\ref{fig:success/transformation} shows
possible transformation.

Rule\ \RefTirName{\RuleLift} lifts every contract on an argument $\ljVar$. This
step over-approximates contract violations as it also lifts contracts nested in
conditions and function applications that might never apply.

Rule\ \RefTirName{\RuleBlame} immediately updates the constraint state with
information about a failing contract. This might result in a blame state, even
if the ill behaved term is never executed.

Finally, the rule \RefTirName{\RuleSubset} lifts to the \emph{Subset Reduction}
$\translateS$ and rule \RefTirName{\RuleTrace} unpacks parallel observation.
Obviously, rule \RefTirName{\RuleBlame} replaces the rule with the same name in
$\translateS$ and we are never allowed to unroll a once lifted contract.
Otherwise we would end up in a cycle.

\clearpage

\section{Propagating Blame}
\label{sec:appendix/blame}
\label{sec:overview/blame}

In the preceding example we have seen that all contracts of a failing
alternative can be removed and only one \lstinline{bot} must remain on the first
failing term. As we split alternatives in different observation where every
contract must be fulfilled, we can report a violation for that branch immediately
when observing a violation. Every other contract can be ignored as this branch 
definitely runs into an error.

However, this is not completely true. We are only allowed to remove contracts in
the same context as the violation occurs. We cannot treat a contract as violated
if the violation only occurs in a certain aspect of the program.

For example, consider the following source program.
\begin{lstlisting}[name=overview]
let f = 
((,\ addOne (,\ z (addOne z)))
 ((,\ plus (,\ z ((plus 0) z))) 
  [(,\ x (,\ y (+ x y)))
   @ (Positive? -> (Positive? -> Positive?))]))
\end{lstlisting} 
After applying some simplification steps, we obtain the following
intermediate term.
\begin{lstlisting}[name=overview]
let f = 
((,\ addOne (,\ z (addOne z)))
 ((,\ plus (,\ z ([(plus [0 @ bot]) @ (Positive? -> Positive?)] z))) 
  (,\ x (,\ y (+ x y)))))
\end{lstlisting} 
Even though we do not know that \lstinline{addOne} violates the contract on
\lstinline{plus}, we are not allowed treat the contract as violated because we
do not know if \lstinline{addOne} is ever executed. But, we can replace
\lstinline{addOne}'s body by a blame term that reports a violation whenever the
function is executed. In addition, a \lstinline{bot} remains on the functions
body.
\begin{lstlisting}[name=overview]
let f = 
((,\ addOne (,\ z (addOne z)))
 ((,\ plus (,\ z ([blame @ bot])))
  (,\ x (,\ y (+ x y)))))
\end{lstlisting}
Having a \lstinline{blame} term and \lstinline{bot} seems to be twofold, but
\lstinline{bot} can be lowerd to a new function contract on \lstinline{addOne},
which is than unrolled to all places where \lstinline{addOne} is used, as the
following example demonstrates.
\begin{lstlisting}[name=overview]
let f = 
((,\ addOne (,\ z ([addOne @ (top -> bot)] z)))
 ((,\ plus (,\ z blame)) (,\ x (,\ y (+ x y)))))
\end{lstlisting}
Now, the same transformation steps apply as for normal contract. The function
contract gets unrolled, whereby \lstinline{bot} gets asserted to another term.
If it so happens the next function body gets transformed to a blame term with a
remaining \lstinline{bot}.

Continuing in this way propagates the information of a failing contact through applications
and conditions until it reaches the outermost boundary. Within this boundary we
definitely run into a contract violation. The following example shows the
result.
\begin{lstlisting}[name=overview]
let f = 
[((,\ addOne (,\ z (addOne z)))
  ((,\ plus (,\ z blame)) (,\ x (,\ y (+ x y))))) @ (top -> bot)]
\end{lstlisting}

\clearpage

\section{An Evaluation of Contract Monitoring}
\label{sec:appendix/evaluation}

Dynamic contract checking impacts the execution time. Source of this impact is
\begin{inparaenum}[(1)]
\item that every contract extents the original source with additional checks
  that need to checked when the program executes and
\item that the contract monitor itself causes some runtime overhead.
\end{inparaenum}

To demonstrate, we consider runtime values obtained from the \TJS\ contract
framework for JavaScript \cite{KeilThiemann2015-treatjs}. To the best of our
knowledge, \TJS\ is the contract framework with the heaviest runtime
deterioration. Reason for this is its flexibility and its support for full
intersection and union contracts \cite{KeilThiemann2015-blame}. The presence of
intersection and union contracts require that a failing contract must not
signal a violation immediately. As a violation depends on combinations of
failures in different sub-contracts the contract monitor muss continue even if a
first error occurs and connect the outcome of each sub-contract with the
enclosing operations.

To evaluate the runtime deterioration, they applied there framework to benchmark
programs from the Google Octane 2.0 Benchmark
Suite\footnote{\url{https://developers.google.com/octane/}}. The benchmark
programs measure a JavaScript engine's performance by running a selection of
complex and demanding programs.

For testing, they inferred a function contract for every function expression
contained a program and assert the function contract when the program executes.

\begin{figure}[t]
  \centering
  \small
  \begin{tabular}{l || r | r}
    \toprule
    \textbf{Benchmark}&
    \textbf{Full}&
    \textbf{Baseline}\\
    \midrule
    Richards& 
    1 day, 18 hours, 21 min, 20 sec&
    8 sec\\
    DeltaBlue&
    2 days, 10 hours, 36 min, 49 sec&
    4 sec\\
    Crypto&
    9 sec&
    8 sec\\
    RayTrace&
    23 hours, 12 min, 37 sec&
    4 sec\\
    EarleyBoyer&
    1 min, 9 sec&
    53 sec\\
    RegExp&
    9 sec&
    8 sec\\
    Splay&
    19 min, 19 sec&
    3 sec\\
    SplayLatency&
    19 min, 19 sec&
    3 sec\\
    NavierStokes&
    11 sec&
    4 sec\\
    pdf.js&
    6 sec&
    6 sec\\
    Mandreel&
    5 sec&
    5 sec\\
    MandreelLatency&
    5 sec&
    5 sec\\
    Gameboy Emulator&
    4 hours, 28 min, 28 sec&
    5 sec\\
    Code loading &
    12 sec&
    9 sec\\
    Box2DWeb&
    5 hours, 19 min, 49 sec&
    6 sec\\
    zlib&
    11 sec&
    11 sec\\
    TypeScript&
    22 min, 46 sec&
    24 sec\\
  \end{tabular}
  \caption{%
    Timings from running the Google Octane 2.0 Benchmark Suite. Column
    \textbf{Full} shows the time required to complete with contract assertions.
    The last column \textbf{Baseline} gives the baseline time without contract
    assertion.}
  \label{fig:appendix/evaluation/timings}
\end{figure}

\begin{figure}[t]
  \centering
  \small
  \begin{tabular}{ l || r | r | r}
    \toprule
    \textbf{Benchmark}& 
    \textbf{Asssert}&
    \textbf{Internal}&
    \textbf{Predicate}\\ 
    \midrule
    Richards& 
    24&
    1599377224&
    935751200\\
    DeltaBlue&
    54&
    2319477672&
    1340451212\\
    Crypto&
    1&
    5&
    3\\
    RayTrace&
    42&
    687240082&
    509234422\\
    EarleyBoyer&
    3944&
    89022&
    68172\\
    RegExp&
    0&
    0&
    0\\
    Splay&
    10&
    11620663&
    7067593\\
    SplayLatency&
    10&
    11620663&
    7067593\\
    NavierStokes&
    51&
    48334&
    39109\\
    pdf.js&
    3&
    15&
    9\\
    Mandreel&
    7&
    57&
    28\\
    MandreelLatency&
    7&
    57&
    28\\
    Gameboy Emulator&
    3206&
    141669753&
    97487985\\
    Code loading &
    5600&
    34800&
    18400\\
    Box2DWeb&
    20075&
    172755100&
    11266494\\
    zlib&
    0&
    0&
    0\\
    TypeScript&
    4&
    12673644&
    8449090\\
    \bottomrule
  \end{tabular}
  \caption{%
    Statistic from running the Google Octane 2.0 Benchmark Suite. Column
    \textbf{Assert} shows the numbers of top-level contract assertions. Column
    \textbf{Internal} contains the numbers of internal contract assertions
    whereby column \textbf{Predicates} lists the number of predicate
    evaluations.}
  \label{fig:appendix/evaluation/statistics}
\end{figure}

Figure\ \ref{fig:appendix/evaluation/timings} contains the runtime values of all
benchmark programs in two configurations, which are explained in the figure's
caption. As expected, all run times increase when adding contracts.

The examples show that the run-time impact of contract monitor depends on the
program and on the particular value that is monitored. While some programs like
\emph{Richards}, \emph{DeltaBlue}, \emph{RayTrace}, and \emph{Splay} are heavily
affected, others are almost unaffected: \emph{Crypto}, \emph{NavierStokes}, and
\emph{Mandreel}, for instance.

It follows that the impact of a contract depends on the frequency of
its application. A contract on a heavily used function (e.g., in
\emph{Richards}, \emph{DeltaBlue}, or \emph{Splay}) causes a significantly
higher deterioration of the run-time. 

For better understanding, Figure\ \ref{fig:appendix/evaluation/statistics} lists
some numbers of internal counters. The numbers indicate that the heavily
affected benchmarks (\emph{Richards}, \emph{DeltaBlue}, \emph{RayTrace},
\emph{Splay}) contain a very large number of predicate checks. Other
benchmarks are either not affected (\emph{RegExp}, \emph{zlib}) or only slightly
affected (\emph{Crypto}, \emph{pdf.js}, \emph{Mandreel}) by contracts.

For example, the \emph{Richards} benchmark performs 24 top-level contract
assertions (this are all unique contracts in a source program), 1.6 billion
internal contract assertions (including top-level assertions, \emph{delayed}
contract checking, and predicate evaluation), and 936 million predicate
evaluation.

To sum up, we see that number of predicate checks is substantially responsible
for the runtime deterioration of a contract system. Thus, reducing the number of
predicate checks will entirely improve the runtime.

\clearpage

\section{Practical Evaluation, continued}
\label{sec:appendix/practical-evaluation} 

To demonstrate the run time improvement we applied our contract simplification
to a number simple examples, as introduces in Section\ \ref{sec:overview} and
Section\ \ref{sec:evaluation}. The testing procedure is a simple while loop that
uses different versions of an \lstinline{addOne} function to increase a counter
on every iteration.

This section shows the examples programs written in JavaScript and the full
table of results. To run the examples we use the \TJS\footnote{
  \url{https://github.com/keil/TreatJS}
} \cite{KeilThiemann2015-treatjs} contract system for JavaScript and the Mozilla
SpiderMonkey\footnote{
  \url{https://developer.mozilla.org/en-US/docs/Mozilla/Projects/SpiderMonkey}}
  engine. While there is actually no implementation of our transformation system
  for JavaScript, we applied all transformation manually and produces the
  simplified contracts by hand.

In \TJS{}, function \lstinline[language=JavaScript]{assert} asserts a contrat
(given as second argument) to a subject value (given as first argument).
Constructor \lstinline[language=JavaScript]{AFunction} creates a function
contract from a set of argument contracts (given as first argument) and a return
contract (given as second argument). Constructor
\lstinline[language=JavaScript]{Intersection} creates an intersection contract
from two other contracts.
\lstinline[language=JavaScript]{_Number},
\lstinline[language=JavaScript]{_Natural},
\lstinline[language=JavaScript]{_Positive},
\lstinline[language=JavaScript]{_Negative},
\lstinline[language=JavaScript]{_String} are flat contrasts checking for number
values, natural numbers, positive numbers, negative numbers, and string values,
respectively.

\subsection{The example programs}
\label{sec:appendix/practical-evaluation/examples}

\begin{example}[\syntax{addOne1}]\label{ex:addOne1}
  In a first example we add a simple function contract to \lstinline{plus}
  restricting its domain and range to number values. Every use of
  \lstinline{addOne1} triggers three predicates checks.

\begin{lstlisting}[language=JavaScript,name=evaluation]
var addOne1 = (function () {
  var plus = assert(function (x, y) {
    return x + y;
  }, AFunction([_Number, _Number], _Number))
  var addOne = function (x) {
    return plus (x, 1);
  }
  return addOne;
})();
\end{lstlisting}
\end{example}

\begin{example}[\syntax{addOne2}]\label{ex:addOne2}
  Our second example adds an intersection contract to \lstinline{plus}. As the
  native \lstinline{+} operator is overload and works for strings and numbers,
  our contract restricts the domain either to string or number values and
  ensures that the function has to return a string or a number value
  corresponding to the input. Every use of \lstinline{addOne2} triggers six
  predicate checks.

\begin{lstlisting}[language=JavaScript,name=evaluation]
var addOne2 = (function () {
  var plus = assert(function (x, y) {
    return x + y;
  }, Intersection(
      AFunction([_Number, _Number], _Number),
      AFunction([_String, _String], _String)
  ));
  var addOne = function (x) {
    return plus (x, 1);
  }
  return addOne;
})();
\end{lstlisting}
\end{example}

\begin{example}[\syntax{addOne3}]\label{ex:addOne3}
  The third example extends \lstinline{addOne1} by adding another function
  contract to \lstinline{addOne}. The contract restricts \lstinline{addOne}'s
  domain to natural numbers and requires a positive number as return. In
  combination, every use of \lstinline{addOne3} triggers 5 predicate checks.

\begin{lstlisting}[language=JavaScript,name=evaluation]
var addOne3 = (function () {
  var plus = assert(function (x, y) {
    return x + y;
  }, AFunction([_Number, _Number], _Number));
  var addOne =  assert(function (x) {
    return plus (x, 1);
  }, AFunction([_Natural], _Positive));
  return addOne;
})();
\end{lstlisting}
\end{example}

\begin{example}[\syntax{addOne4}]\label{ex:addOne4}
  The next example merges the intersection contract on \lstinline{plus} (from
  Example~\ref{ex:addOne2}) and the contract on \lstinline{addOne}
  (Example~\ref{ex:addOne3}). Every call of \lstinline{addOne4} leads to a total
  number 8 predicate checks.

\begin{lstlisting}[language=JavaScript,name=evaluation]
var addOne4 = (function () {
  var plus = assert(function (x, y) {
    return x + y;
  }, Intersection(
      AFunction([_Number, _Number], _Number),
      AFunction([_String, _String], _String)
  ));
  var addOne = assert(function (x) {
    return plus (x, 1);
  }, AFunction([_Natural], _Positive));
  return addOne;
})();
\end{lstlisting}
\end{example}

\begin{example}[\syntax{addOne5}]\label{ex:addOne5}
  In this example we also overload \lstinline{addOne} by making either string
  concatenation or addition, depending on \lstinline{addOne}'s input. While
  adding an intersection contract to \lstinline{addOne}, every use of
  \lstinline{addOne5} leads to 10 predicate checks.

\begin{lstlisting}[language=JavaScript,name=evaluation]
var addOne5 = (function () {
  var plus = assert(function (x, y) {
    return x + y;
  }, Intersection(
      AFunction([_Number, _Number], _Number),
      AFunction([_String, _String], _String)
  ));
  var addOne = assert(function (x) {
    return (typeof x == 'string') ? plus (x, '1') : plus (x, 1);
  }, Intersection(
       AFunction([_Natural], _Positive),
       AFunction([_String], _String)
  );
  return addOne;
})();
\end{lstlisting}
\end{example}

\begin{example}[\syntax{addOne6}]\label{ex:addOne6}
  Our last example simulates the case that we add fine-grained properties to
  \lstinline{plus} and \lstinline{addOne}. In this case we state different
  properties in different function contracts and use intersections to combine
  those properties. Before simplifying the contract, every call of
  \lstinline{addOne6} leads to a total number 17 predicate checks.

\begin{lstlisting}[language=JavaScript,name=evaluation]
var addOne6 = (function () {
  var plus = assert(function (x, y) {
    return x + y;
  }, Intersection(
      Intersection(
        AFunction([_Number, _Number], _Number),
        AFunction([_String, _String], _String)),
      Intersection(
        Intersection(
         AFunction([_Natural, _Positive], _Positive),
         AFunction([_Positive, _Natural], _Positive)),
        AFunction([_Negative, _Negative], _Negative))));
  var addOne = assert(function (x) {
    return plus (x, 1);
  }, AFunction([_Natural], _Positive));
  return addOne;
})();
\end{lstlisting}
\end{example}

\subsection{The Run-Time Improvements}
\label{sec:appendix/practical-evaluation/improvements}

\begin{figure*}[t]
  \centering
  \small
  \begin{tabular}{ l || r | r || r r | r r || r r | r r}
    \toprule
       
    \multicolumn{11}{c}{\textbf{\textit{Full}}}\\

    \midrule
 
    \textbf{Benchmark}& 
    \multicolumn{2}{c ||}{\textbf{Normal}}&
    \multicolumn{4}{c ||}{\textbf{Baseline}}&
    \multicolumn{4}{c}{\textbf{Subset}}\\

    &
    \textit{time (ms)}&
    \textit{predicates}&
    \multicolumn{2}{c |}{\textit{time (ms)}}&
    \multicolumn{2}{c ||}{\textit{predicates}}&
    \multicolumn{2}{c |}{\textit{time (ms)}}&
    \multicolumn{2}{c}{\textit{predicates}}\\

    \midrule

    Without Contracts&
    2& 0\\

    Proxy only&
    142.33& 0\\

    \midrule

    Example~\ref{ex:addOne1}&
    39404&	300000&
    27081&  (- 31.27\%)&
    200000& (- 33.33\%)&
    27043&  (- 31.37\%)&
    200000& (- 33.33\%)\\

    Example~\ref{ex:addOne3}&
    87143&  600000&	
    58385&  (- 33.00\%)&
    400000& (- 33.33\%)&
    46085&  (- 47.12\%)&
    300000& (- 50.00\%)\\

    Example~\ref{ex:addOne4}&
    66474&	500000&
    54396&  (- 18.17\%)&
    400000& (- 20.00\%)&
    26518&  (- 60.11\%)&
    200000& (- 60.00\%)\\ 

    Example~\ref{ex:addOne4}&
    114468& 800000&
    85126&  (- 25.63\%)&
    600000& (- 25.00\%)&
    44633&  (- 61.01\%)&
    300000& (- 62.50\%)\\

    Example~\ref{ex:addOne5}&
    148249& 1000000&
    107956& (- 27.18\%)&
    800000& (- 20.00\%)&
    59970&  (- 59.55\%)&
    500000& (- 50.00\%)\\

    Example~\ref{ex:addOne6}&
    295682& 1700000&
    200009&  (- 32.36\%)&
    1200000& (- 29.41\%)&
    118579&  (- 59.90\%)&
    700000&  (- 58.82\%)\\

    \bottomrule
  \end{tabular}

  \begin{tabular}{ l || r | r || r r | r r || r r | r r}
    \toprule

    \multicolumn{11}{c}{\textbf{\textit{No-Ion}}}\\

    \midrule

    \textbf{Benchmark}& 
    \multicolumn{2}{c ||}{\textbf{Normal}}&
    \multicolumn{4}{c ||}{\textbf{Baseline}}&
    \multicolumn{4}{c}{\textbf{Subset}}\\

    &
    \textit{time (ms)}&
    \textit{predicates}&
    \multicolumn{2}{c |}{\textit{time (ms)}}&
    \multicolumn{2}{c ||}{\textit{predicates}}&
    \multicolumn{2}{c |}{\textit{time (ms)}}&
    \multicolumn{2}{c}{\textit{predicates}}\\

    \midrule

    Without Contracts&
    9.67& 0\\

    Proxy only&
    132.67& 0\\

    \midrule
    Example~\ref{ex:addOne1}&
    58906&	300000&
    40383&  (- 31.44\%)&
    200000& (- 33.33\%)&
    40271&  (- 31.64\%)&
    200000& (- 33.33\%)\\

    Example~\ref{ex:addOne3}&
    131184& 600000&	
    87544&  (- 33.26\%)&
    400000& (- 33.33\%)&
    69492&  (- 47.03\%)&
    300000& (- 50.00\%)\\

    Example~\ref{ex:addOne4}&
    99776&	500000&
    81719&  (- 18.10\%)&
    400000& (- 20.00\%)&
    40218&  (- 59.69\%)&
    200000& (- 60.00\%)\\ 

    Example~\ref{ex:addOne4}&
    173037& 800000&
    128754& (- 25.59\%)&
    600000& (- 25.00\%)&
    66359&  (- 61.65\%)&
    300000& (- 62.50\%)\\

    Example~\ref{ex:addOne5}&
    221619& 1000000&
    160360& (- 27.64\%)&
    800000& (- 20.00\%)&
    88369&  (- 60.13\%)&
    500000& (- 50.00\%)\\

    Example~\ref{ex:addOne6}&
    441176& 1700000&
    278601&  (- 36.85\%)&
    1200000& (- 29.41\%)&
    163272&  (- 62.99\%)&
    700000&  (- 58.82\%)\\

    \bottomrule
  \end{tabular}

  \begin{tabular}{ l || r | r || r r | r r || r r | r r}
    \toprule
    
    \multicolumn{11}{c}{\textbf{\textit{No-Jit}}}\\

    \midrule

    \textbf{Benchmark}& 
    \multicolumn{2}{c ||}{\textbf{Normal}}&
    \multicolumn{4}{c ||}{\textbf{Baseline}}&
    \multicolumn{4}{c}{\textbf{Subset}}\\

    &
    \textit{time (ms)}&
    \textit{predicates}&
    \multicolumn{2}{c |}{\textit{time (ms)}}&
    \multicolumn{2}{c ||}{\textit{predicates}}&
    \multicolumn{2}{c |}{\textit{time (ms)}}&
    \multicolumn{2}{c}{\textit{predicates}}\\

    \midrule

    Without Contracts&
    40.33& 0\\

    Proxy only&
    200.67& 0\\

    \midrule

    Example~\ref{ex:addOne1}&
    81125&	300000&
    56439&  (- 30.43\%)&
    200000& (- 33.33\%)&
    54945&  (- 32.27\%)&
    200000& (- 33.33\%)\\

    Example~\ref{ex:addOne3}&
    186069& 600000&	
    124434& (- 33.12\%)&
    400000& (- 33.33\%)&
    96271&  (- 48.26\%)&
    300000& (- 50.00\%)\\

    Example~\ref{ex:addOne4}&
    136728&	500000&
    111596& (- 18.38\%)&
    400000& (- 20.00\%)&
    55186&  (- 59.64\%)&
    200000& (- 60.00\%)\\ 

    Example~\ref{ex:addOne4}&
    240724& 800000&
    179451& (- 25.45\%)&
    600000& (- 25.00\%)&
    91034&  (- 62.18\%)&
    300000& (- 62.50\%)\\

    Example~\ref{ex:addOne5}&
    315184& 1000000&
    225316& (- 28.51\%)&
    800000& (- 20.00\%)&
    123852& (- 60.71\%)&
    500000& (- 50.00\%)\\

    Example~\ref{ex:addOne6}&
    597406&  1700000&
    404276&  (- 32.33\%)&
    1200000& (- 29.41\%)&
    233124&  (- 60.98\%)&
    700000&  (- 58.82\%)\\

    \bottomrule
  \end{tabular}

  \caption{%
    Results from running the \TJS\ contract system.
    Table \textbf{\textit{Full}} shows the results of a run with both JIT
    compilers, whereas table \textbf{\textit{No-Ion}} shows the result of a run
    without IonMonkey (but the Baseline JIT remains enabled) and table
    \textbf{\textit{No-JIT}} shows the result of a run without any JIT
    compilation.
    Column \textbf{Normal} gives the baseline execution time and the total
    number of predicate checks of the unmodified program.
    Column \textbf{Baseline} and column \textbf{Subset} contain the execution
    time and the total number of predicate checks after applying the baseline
    simplification or the subset simplification, respectively.
    The value in parentheses indicates the improvement (in percent).
  }
  \label{fig:appendix/practical-evaluation/runtime}
\end{figure*}

Figure~\ref{fig:appendix/practical-evaluation/runtime} finally contains the
execution time and the number of predicate checks during execution of all
examples programs in different configuration, as explained in the figures
caption.

In addition to the examples from
Section~\ref{sec:appendix/practical-evaluation/examples}, it also contains the
run time of the \lstinline{addOne} program without contracts, and the run time
of the \lstinline{addOne} example where we only wrap the functions in a proxy,
but did not apply any contracts.

To sum up, the the \emph{Baseline Simplification} improves the run time by
approximately 28\%, whereas the \emph{Subset Simplification} makes an
improvement by approximately 58.68\%. Clearly, the improvement strictly depends
on the granularly of predicates and contracts.  

The numbers also indicate that the run time improvements remain identical,
whether we use an optimizing compiler or not. But, deactivating the optimizing
compiler increases the run time of the program without contracts by factor 4.83
(\textit{No-Ion}) and by factor 20.17 (\textit{No-Jit}), whereas the version with
contracts increases only by factor 1.5 (\textit{No-Ion}) and factor 2.09
(\textit{No-Jit}).

The numbers also shows that adding contracts, even without any functionality,
increases the run time by factor 71.2 (\textit{Full}), factor 13.7
(\textit{No-Ion}), and factor 5 (\textit{No-Jit}).

This illustrates that contracts prevent a program from being optimized
efficiently by an optimizing compiler.

\clearpage

\end{document}